\DeclareMathAlphabet\EuRoman{U}{eur}{m}{n}
\SetMathAlphabet\EuRoman{bold}{U}{eur}{b}{n}
\declaretheorem[style=plain,numberwithin=section,name=Theorem]{theorem}
\declaretheorem[style=plain,sibling=theorem,name=Lemma]{lemma}
\declaretheorem[style=plain,sibling=theorem,name=Corollary]{corollary}
\declaretheorem[style=definition,sibling=theorem,name=Definition]{definition}
\declaretheorem[style=remark,qed=$\triangleleft$,sibling=theorem,name=Remark]{remark}
\numberwithin{theorem}{section}
\def\[#1\]{\begin{align}#1\end{align}}
\def\*[#1\]{\begin{align*}#1\end{align*}}
\newcommand{\minf}[1]{I(#1)}
\newcommand{\entr}[1]{\mathrm{H}(#1)}
\newcommand{\centr}[2]{\mathrm{H}^{#1}(#2)}
\newcommand{\cPr}[2]{\Pr^{#1}[#2]}
\newcommand{\Dist}{\mathcal D}
\newcommand{\dataspace}{\mathcal Z}
\newcommand\optparen[1]{\ifthenelse{\equal{#1}{}}{}{(#1)}}
\newcommand{\RiskChar}{R}
\newcommand{\Risk}[2]{\RiskChar_{#1}\optparen{#2}}
\newcommand{\EmpRisk}[2]{\hat \RiskChar_{#1}\optparen{#2}}
\newcommand{\dist}{\ \sim\ }
\newcommand{\Naturals}{\mathbb{N}}
\newcommand{\Reals}{\mathbb{R}}
\newcommand{\Nats}{\mathbb{N}}
\newcommand{\as}{\textrm{a.s.}}
\newcommand{\downto}{\!\downarrow\!}
\newcommand{\dee}{\mathrm{d}}
\newcommand{\inspace}{\mathcal X}
\newcommand{\outspace}{\mathcal Y}
\DeclareMathOperator*{\newlim}{\mathrm{lim}\vphantom{\mathrm{infsup}}}
\DeclareMathOperator*{\newmin}{\mathrm{min}\vphantom{\mathrm{infsup}}}
\DeclareMathOperator*{\newmax}{\mathrm{max}\vphantom{\mathrm{infsup}}}
\DeclareMathOperator*{\newsup}{\mathrm{sup}\vphantom{\mathrm{infsup}}}
\renewcommand{\lim}{\newlim}
\renewcommand{\min}{\newmin}
\renewcommand{\max}{\newmax}
\renewcommand{\sup}{\newsup}
\newcommand{\ProbMeasures}[1]{\mathcal{M}_1(#1)}
\renewcommand{\Pr}{\mathbb{P}}
\def\EE{\mathbb{E}}
\newcommand{\defn}[1]{\textit{#1}}
\newcommand{\cF}{\mathcal F}
\newcommand{\cG}{\mathcal G}
\newcommand{\iid}{i.i.d.}
\newcommand{\KLname}{\mathrm{KL}}
\newcommand{\KL}[2]{\KLname(#1 \,\|\,#2)}
\newcommand{\XX}{\Reals^{\idim}}
\newcommand{\YY}{K}
\newcommand{\loss}{\ell}
\newcommand{\hypothesisclass}{\mathcal{H}}
\newcommand{\Alg}{\mathcal{A}}
\newcommand{\unif}[1]{\text{Unif}(#1)}
\newcommand{\bernoulli}{\text{Ber}}
\newcommand{\lcrx}[4][{-1}]{
	\IfEq{#1}{-1}{\left #2 {{{{#3}}}} \right #4}{
   	\IfEq{#1}{0}{#2 {{{{#3}}}} #4}{
	\IfEq{#1}{1}{\bigl #2 {{{{#3}}}} \bigr #4}{
	\IfEq{#1}{2}{\Bigl #2 {{{{#3}}}} \Bigr #4}{
	\IfEq{#1}{3}{\biggl #2 {{{{#3}}}} \biggr #4}{
	\IfEq{#1}{4}{\Biggl #2 {{{{#3}}}} \Biggr #4}{
    \GenericWarning{"4th argument to lcrx must be -1, 0, 1, 2, 3, or 4"}
    }}}}}}}
\newcommand{\sbra}[2][{-1}]{\lcrx[#1] [ {#2} ] }
\newcommand{\rnderiv}[2]{\frac{\text{d} #1}{\text{d} #2}}
\newcommand{\cEE}[1]{\EE^{#1}}
\newcommand{\dminf}[2]{I^{#1} (#2)}
\newcommand{\indic}[1]{ \mathds{1}[#1]}
\newcommand{\supersam}{Z}
\newcommand{\range}[1]{ [#1] }
\newcommand{\samplecomplex}[2]{ \mathcal{M}^{\HS}_\text{prop}(#1,#2) }
\newcommand{\vcd}{d} %
\newcommand{\starnum}{\mathfrak{s}}
\newcommand{\teachdim}{\mathrm{ETD}}
\newcommand{\targetfun}{h^\star}
\newcommand{\binaryentr}[1]{\mathrm{H}_{b}(#1)}
\newcommand{\Ur}[2]{U_{#1 \to #2}}
\newcommand{\Sr}[2]{S_{#1 \to #2}}
\title{Towards a Unified Information-Theoretic\\Framework for Generalization}
\renewcommand{\defn}[1]{\emph{#1}}
\newcommand{\EGE}{\ensuremath{\mathrm{EGE}_{\Dist}(\Alg)}}
\newcommand{\IWS}{\ensuremath{\mathrm{IOMI}_{\Dist}(\Alg)}\xspace}
\newcommand{\SZB}{\ensuremath{\mathrm{CMI}_{\Dist}(\Alg)}\xspace}
\newcommand{\eSZB}{\ensuremath{\mathrm{eCMI}_{\Dist}(\loss(\Alg))}\xspace}
\newcommand{\versionspace}[1]{\mathrm{V}_{\hypothesisclass}[#1]}
\newcommand{\supersamflat}{\tilde{Z}}
\newcommand{\oneinclusiongraph}{\mathcal{G}_{\HS}}
\renewcommand{\underline}[1]{%
  \uline{\phantom{#1}}%
  \llap{\contour{white}{#1}}%
}
\definecolor{mylinkcolor}{rgb}{0,0,.25}
\definecolor{mylinenocolor}{rgb}{0.65,0.65,0.65}
\definecolor{myeqncolor}{rgb}{0.425,0.425,0.425}
\let\reftagform@=\tagform@
\def\tagform@#1{\maketag@@@{\ignorespaces\textcolor{myeqncolor}{(#1)}\unskip\@@italiccorr}}
\renewcommand{\eqref}[1]{\textup{\reftagform@{\ref{#1}}}}
\crefname{lemma}{Lemma}{Lemmas}
\crefname{corollary}{Corollary}{Corollaries}
\crefname{theorem}{Theorem}{Theorems}
\crefname{problem}{Conjecture}{Conjectures}
\newtheorem{statement}{Statement}
\crefname{statement}{Statement}{Statement}
\newcommand{\HS}{\mathcal H}
\renewcommand{\XX}{\mathcal X}
\renewcommand{\YY}{\mathcal Y}
\renewcommand{\SS}{S_n}
\renewcommand{\Alg}{\mathcal A_n}
\newcommand{\TheAlg}{\mathcal A}
  \let\Cref\crtCref
  \let\cref\crtcref
\author{
	Mahdi Haghifam\thanks{Part of the work was done the author was an intern at Element AI, a ServiceNow company.}\\
	University of Toronto, \\
	Vector Institute \\
	\And
	Gintare Karolina Dziugaite\thanks{This work was carried out while the author was at ServiceNow. It was finalized at Google Brain.}\\
	Mila
	\And
	Shay Moran\\
	Technion,\\
	 Google Research \\
	\And
	Daniel M. Roy \\
	University of Toronto,\\
	Vector Institute\\
}
\begin{document}

\maketitle

\begin{abstract}
In this work, we investigate the expressiveness of the ``conditional mutual information'' (CMI) 
    framework of \citet{steinke2020reasoning} and the prospect of using it to provide 
    a unified framework for proving generalization bounds in the realizable setting.
    We first demonstrate that one can use this framework to express non-trivial (but sub-optimal) bounds 
    for any learning algorithm that outputs hypotheses from a class of bounded VC dimension.
    We then explore two directions of strengthening this bound:
    (i)~Can the CMI framework express \underline{optimal} bounds for VC classes?
    (ii)~Can the CMI framework be used to analyze algorithms whose output hypothesis space is \underline{unrestricted} 
    (i.e.\ has an unbounded VC dimension)?
    
With respect to Item (i) we prove that the CMI framework yields the optimal bound on the expected risk  
    of \emph{Support Vector Machines} (SVMs) for learning halfspaces. 
    This result is an application of our general result showing that \emph{stable} compression schemes \citep{bousquet2020proper} of size $k$ have uniformly bounded CMI of order $O(k)$. 
    
We further show that an inherent limitation of proper learning of VC classes contradicts the existence of a proper learner with constant CMI, and it implies a negative resolution to an open problem of \citet{steinke2020openproblem}.  We further study the CMI of empirical risk minimizers (ERMs) of class~$\HS$ and show that it is possible to output all  consistent classifiers (version space) with bounded CMI \emph{if and only if} $\HS$ has a bounded star number \citep{hanneke2015minimax}. 

With respect to Item (ii) we prove a general reduction showing that ``leave-one-out'' analysis is expressible via the CMI framework. 
    As a corollary we investigate the CMI of the one-inclusion-graph algorithm proposed by \citet{haussler1994predicting}.
    More generally, we  show that the CMI framework is universal in the sense that for \emph{every} consistent algorithm and data distribution, the expected risk vanishes as the number of  samples diverges \emph{if and only if} its evaluated CMI has sublinear growth with the number of samples.
\end{abstract}

\section{Introduction}

In this work, we study the expressiveness of generalization bounds in terms of
information-theoretic measures of dependence between  
the output of a learning algorithm
and input data. Information-theoretic techniques for proving generalization bounds are powerful; they can provide generalization bounds that are algorithm-dependent, data-dependent, and distribution-dependent.  This approach was initiated by \citet{RussoZou15,RussoZou16} and \citet{XuRaginsky2017} and has since been
extended by a number of authors
\citep{raginsky2016information,asadi2018chaining,PensiaJogLoh2018,bassily2018learners,nachum2018direct,nachum2019average,IbrahimEspositoGastpar19,negrea2019information,BuZouVeeravalli2019,roishay,neu2021information}. 
\NA{More recently, attention has shifted to whether these techniques can also characterize worst-case (minimax) rates for certain learning problems.}

Let $\Dist$ be an unknown distribution on a space $\dataspace$,
and let $\HS$ be a set of classifiers. Consider a (randomized) learning algorithm  $\TheAlg = (\Alg)_{n\ge 1}$ that selects an element $\hat{h}$ in $\HS$,
based on $n$ \iid\ samples
$
\smash{S_n \sim \Dist^{\otimes n}}
$, i.e., $\hat{h}=\hat{h}_n=\Alg(S_n)$.
The initial focus of this line of work was on the mutual information $\minf{\Alg(S_n);S_n}$ between the input and the output of a learning problem. This quantity is sometimes referred to as the \emph{input--output mutual information (IOMI)} of an algorithm and denoted by $\IWS$. 
A natural question is 
whether the IOMI framework can provide a sharp characterization of the learnability of Vapnik--Chervonenkis (VC) classes, for which we have strong generalization guarantees. A negative resolution was provided by \citet{bassily2018learners} for the concept class of thresholds in one dimension. Follow up work by \citet{nachum2018direct} extended the argument in \citep{bassily2018learners}, proving the following result:
\begin{theorem}[Thm.~1, \citealt{nachum2018direct}]
\label{thm:limitation-imoi}
For every $\vcd\in \Naturals$ and every $n\geq 2 \vcd^2 $, there exists a finite input space $\inspace$ and a concept class $\HS \subseteq \{0,1\}^\inspace$ of VC-dimension $\vcd$ such that, for all proper and consistent learning algorithm $\Alg$, there exists a realizable distribution $\Dist$ such that $\IWS=\Omega(\vcd \log\log(|\inspace|/\vcd))$. 
\end{theorem}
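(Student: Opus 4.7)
The plan is to adapt the two-step strategy behind the threshold lower bound of Bassily et al., extended to arbitrary VC dimension by Nachum and Yehudayoff. I first establish a single-coordinate $\Omega(\log \log N)$ lower bound on IOMI for the class of thresholds on $[N]$, and then amplify to $\Omega(\vcd \log \log(|\inspace|/\vcd))$ via a direct-sum construction at the level of concept classes.

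\textbf{Step 1 (single-threshold lower bound).} Take $\HS$ to be the thresholds on $[N]$, fix the target concept to the all-zeros function, and consider the family $\{\Dist_A\}_{A \subseteq [N]}$, where $\Dist_A$ is supported on $A$. Any proper and consistent learner outputs a threshold $\hat t$ satisfying $\hat t \geq \max(\SS)$. The plan is to show by a counting/averaging argument over this family that if $\IWS = o(\log \log N)$, then the output $\hat t$ effectively concentrates on a small set of candidate values; but then for some $A$ drawn adversarially (or probabilistically) from the family, no value in this concentrated set can satisfy the consistency constraint $\hat t \geq \max \SS$ with the required probability, yielding a contradiction.

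\textbf{Step 2 (VC-dimension amplification).} Partition $\inspace$ into $\vcd$ blocks, each of size $|\inspace|/\vcd$. Define $\HS$ to be the class of hypotheses that restrict to an arbitrary threshold on each block (equivalently a disjunction $h_1 \vee \cdots \vee h_{\vcd}$ over the blocks); the resulting class has VC-dimension exactly $\vcd$. Using a product distribution whose marginals put roughly $n/\vcd$ points in each block---this is where the hypothesis $n \geq 2\vcd^2$ enters, ensuring $\Omega(\vcd)$ samples per block after concentration---any proper and consistent learner on the combined class induces $\vcd$ independent proper and consistent learners on the individual blocks. By independence across blocks together with the chain rule for mutual information, the per-block IOMIs add up, yielding $\IWS = \Omega(\vcd \log \log(|\inspace|/\vcd))$.

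\textbf{Main obstacle.} The crux is the single-threshold bound in Step~1: converting the combinatorial ``must dominate the sample max'' constraint into a quantitative mutual information lower bound. An entropy-based argument alone would at best give $\log N$ on the \emph{output}, but we need a lower bound on the mutual information between output and sample that holds against every proper consistent learner, including highly randomized ones. Obtaining the $\log \log N$ rate requires a careful choice of hard distribution (likely via a probabilistic argument selecting $A$) together with an analysis showing that the output must commit to a ``scale'' of $\hat t$ that nontrivially depends on the sample. Step~2 is then largely bookkeeping, but one must verify that the direct-sum class does not admit proper consistent learners that share information across blocks in ways circumventing the per-block bound; this follows from the disjoint-support block structure together with properness.
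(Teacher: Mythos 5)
First, a point of order: the paper does not prove this statement---it is imported verbatim as Theorem~1 of \citet{nachum2018direct}---so there is no in-paper proof to compare against. Judged on its own terms, your proposal correctly identifies the architecture of the known proof: a single-class $\Omega(\log\log N)$ lower bound on $\minf{\Alg(\SS);\SS}$ for thresholds on $[N]$ (the main theorem of \citealt{bassily2018learners}), followed by a direct-sum amplification over $\vcd$ disjoint blocks. But both steps are left as plans exactly where the real work lies, so as a proof this has genuine gaps. In Step~1, the $\log\log N$ rate does not fall out of a generic counting/averaging argument of the kind you sketch; the known argument builds the hard distribution through a recursive, tower-type construction in which the number of recursion levels is what produces the doubly-logarithmic dependence on $N$, and it must be carried out against arbitrarily randomized learners. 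You correctly flag this as the main obstacle, but flagging it is not the same as closing it.

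In Step~2, the claim that a proper consistent learner for the block-structured class ``induces $\vcd$ independent proper and consistent learners on the individual blocks'' is not correct as stated. Writing $S_i$ for the samples landing in block $i$ and $W_i$ for the restriction of the output to block $i$, the induced map $S_i \mapsto W_i$ is a \emph{randomized} learner whose internal randomness includes the other blocks' samples, so the $\vcd$ induced learners are neither independent nor fixed in advance of the choice of hard distributions on the other blocks. What is true is that, for a product distribution, the chain rule and data processing give $\minf{W;\SS} \ge \sum_i \minf{W;S_i} \ge \sum_i \minf{W_i;S_i}$; but one must then exhibit a \emph{single} joint distribution that is simultaneously hard for all $\vcd$ induced learners, which requires an averaging/minimax step (this superadditivity of an information-complexity quantity is precisely the content of the direct-sum theorem in \citealt{nachum2018direct}, not mere bookkeeping). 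Two further loose ends: the per-block sample count is binomial rather than exactly $n/\vcd$, so the Step~1 bound must be applied uniformly over (or in expectation of) the realized sample size, and the properness of the induced block learners relies on the combined class being exactly a product of per-block threshold classes---your ``disjunction'' phrasing should be replaced by the product (tuple) class to keep the VC dimension equal to $\vcd$ and the block restrictions proper.
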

 \citet{roishay} extended this result even further, showing that, for the class of one-dimensional thresholds over $\{1,\dots,m\}$, $m \in \Naturals$,\footnote{
This concept class can be defined as follows. Let $\mathcal{X}=\{1,...,m\}$. 	Let $k \in \Naturals$ and $h_k: \mathcal{X} \to \{0,1\}$ define as $h_k(x)=\mathds{1}[x> k]$. Then, the class of one-dimensional thresholds over $\{1,...,m\}$ is $\mathcal{H}_m=\{h_k\vert k\in \mathbb{N}\}$.
 } for \emph{every} learning algorithm $\TheAlg$ there exists a realizable distribution such that either the risk (population loss) is large or the $\IWS$ scales with the cardinality of the space, $m$. These results highlight an important limitation of the IOMI framework: given an unbounded input space, for any ``good'' learning algorithm there are always scenarios in which $\IWS$ is unbounded. 
 Therefore, the distribution-free learnability of VC classes cannot be expressed using the IOMI framework.

In this paper, we focus on the ``\emph{conditional mutual information}'' (CMI) framework, 
proposed by \citet{steinke2020reasoning}. 
In order to reason about the generalization error of a learning algorithm, they introduce a super sample that contains the training sample as a random subset and compute the mutual information between the input and output conditional on the super sample (formal definitions are provided in \cref{subsec:cmi-defs}). Improvements of this framework and its application in studying the generalization of specific learning algorithms have been studied in   \citep{CCMI20,haghifam2020sharpened,hellstrom2020generalization,rodriguez2020random,zhou2020individually, hellstrom2020nonvacuous}.

The current paper revolves around the following fundamental question: For which learning problems and learning algorithms is the CMI framework expressive enough to accurately estimate the generalization error? 
We will focus in particular on whether we can recover optimal worst case (minimax) rates for VC classes satisfying certain properties.  The answer to these question provide evidence that the CMI framework provides a unifying framework for studying generalization.

For VC classes, \citet{steinke2020reasoning} 
revealed a stark separation between the CMI framework and IOMI framework. 
They showed the existence of an empirical risk minimization (ERM) algorithm whose CMI is no larger than $\vcd \log n + 2$ for learning every VC class of dimension~$\vcd$ given $n$ i.i.d.\ training samples. 
In contrast to \cref{thm:limitation-imoi}, CMI does not scale with the cardinality of the space. 
However, the bound on the CMI combined with \citeauthor{steinke2020reasoning}'s CMI-based generalization bound, leads to a bound on the expected excess risk that is suboptimal in some cases by a $\log n$ factor.
(For an overview of the known bounds for learning VC classes, please refer to \cref{apx:known-bounds}.)
The suboptimality of their bound prompted \citet{steinke2020openproblem} to conjecture that the CMI bound for proper learners of VC classes can be improved to $O(\vcd)$. Moreover, \citeauthor{steinke2020reasoning} connected CMI framework to the sample compression framework of \citep{littlestone1986relating} by showing that a sample compression $\TheAlg_n$ of size $k$ has $\SZB \leq k \log2n$. Their bound for sample compression schemes is also suboptimal in some cases by a $\log n$ factor.

\subsection{Contributions}
In this paper we extend the reach of the CMI framework by demonstrating its unifying nature for obtaining optimal or near-optimal bounds for the expected excess risk of the various algorithms in the realizable setting.
\begin{enumerate}[leftmargin=1.5em]
\item We demonstrate that one can use the CMI framework to express non-trivial (but sub-optimal) bounds for every improper learning algorithm that outputs a hypothesis from a class with a bounded VC dimension. This is achieved by an empirical variant of CMI  defined by~\citep{steinke2020reasoning}.
\item We study the CMI of SVMs for learning half spaces and show that the CMI framework yields optimal bounds on the expected excess risk. Our bound on the CMI of SVM is an application of our general result giving optimal  CMI bounds for stable sample compression schemes  \citep{bousquet2020proper,hanneke2021stable},
which improve on CMI bounds for general sample compression schemes \citep{steinke2020reasoning} by a $\log n$ factor. 
\item  In the context of proper learning of VC classes, we exhibit VC classes
for which the CMI of any proper learner cannot be bounded by any real-valued function of the VC dimension. Then, we consider VC classes with finite star number \citep{hanneke2015minimax}, and prove the existence of a learner with bounded CMI. Finally, we show that the release of the set of all consistent classifiers in $\HS$ has bounded CMI \emph{if and only if} $\HS$ has finite star number.

\item  We show that CMI framwork is \emph{universal} in the realizable            setting. More precisely, for every data distribution      and consistent learner, the bound on  excess risk obtained by the CMI framework vanishes  if and only if the excess  risk  also vanishes  as the     number training samples  diverges.  We then show
     that any learning algorithm with a ``leave-one-out'' bound of order $O(1/n)$
     yields an evaluated-CMI bound of order $O(\log n)$.
    As an application, we study the classical one-inclusion graph algorithm of \citet{haussler1994predicting} 
    for improper learning of VC classes, and provide a nearly optimal bound on its expected excess risk using the CMI framework. We also prove there exists a randomized one-inclusion graph which learns point functions (singleton) with bounded CMI.

\end{enumerate}

Our results indicate that CMI is a very expressive generalization framework, and one that can tie together existing frameworks. Although most of our results are stated for binary classification in the distribution-free setting, it is interesting to note that the CMI framework is known to provide numerically non-vacuous generalization error guarantees for some modern deep learning models and datasets in the distribution-dependent setting \citep{haghifam2020sharpened,hellstrom2020nonvacuous}. These developments in a range of different problem settings highlight the importance of understanding the expressiveness of the CMI framework. 

\section{Preliminaries}
\label{sec:preliminaries}
\newcommand{\rvar}{L}
\newcommand{\relm}{M}
\newcommand{\relmb}{N}

We consider the problem of binary classification,
with inputs in some space $\inspace$ assigned labels
in $\outspace = \{0,1\}$. 
A concept (or hypothesis) class $\HS\subseteq \outspace^ \inspace$ is a set of functions $h: \XX \to \YY$. We say $\HS$ \emph{shatters} $(x_1,\dots,x_m)\in \inspace^m$ if for all $(y_1,\dots,y_m)\in \{0,1\}^m$, there exists $h \in \HS$, such that, for all $i\in \range{m}$, we have $h(x_i)=y_i$. 
The \emph{VC dimension} of $\HS$, denoted by $\vcd$, is the largest $m \in \Naturals$ for which there exists $(x_1,\dots,x_m)\in \inspace^m$ shattered by $\HS$.
If no such finite $m$ exists, then $\vcd = \infty$. 

Let $\Dist$ be a distribution on 
$\dataspace =   \inspace \times \outspace  $. 
The \emph{empirical (classification) risk} of a classifier $h: \XX \to \YY$ on a sample $s=((x_1,y_1),\dots,(x_n,y_n)) \in \dataspace^n$
 is $\EmpRisk{s}{h} = n^{-1} \sum_{i \in [n]} \loss(h,(x_i,y_i))$, where
$\loss(h,(x,y)) = \indic{h(x) \neq y}$.
 Let $\SS \sim \Dist^n$, i.e., let $\SS$ be a sequence of i.i.d.\ random elements in $\dataspace$ with common distribution $\Dist$. (We can view $\SS$ itself as a random element in $\dataspace^n$.)
 The \emph{risk} of $h$ is $\Risk{\Dist}{h} = \EE\EmpRisk{\SS}{h}$, 
 where $\EE$ denotes the expectation operator. (The risk has, of course, no dependence on $n$ due to the data being i.i.d.) 

A distribution $\Dist$ is \defn{realizable by a class $\HS \subseteq \YY^\XX$} if 
there exists $h \in \HS$ such that $\Risk{\Dist}{h} = 0$. 
A sequence $((x_1,y_1),\dots,(x_n,y_n))$ is said to be \defn{realizable by
  $\HS$}, if for some $h \in \HS$, $h(x_i)=y_i$ for all $i \in \range{n} = \{1,\dots,n\}$. 
Note that if a distribution is realizable by  $\HS$, it implies that with probability one over $\SS \sim \Dist^n$, the training sample $\SS$ is realizable by $\HS$. 

Let $\TheAlg = (\Alg)_{n\ge 1}$ denote a (potentially randomized) learning algorithm, which, for any positive integer $n$, maps $\SS$ to an element of $\XX \to \YY$.
We say that $\TheAlg$ is a \emph{proper learner for a class $\HS \subseteq \XX \to \YY$} if the codomain of $\Alg$ is a subset of $\HS$ for every $n$. We say $\Alg$ is a consistent algorithm (learner) if $\EmpRisk{\SS}{\Alg(\SS)}=0$ a.s. Our primary interest in this paper is the \defn{expected generalization error} of $\Alg$ with respect to $\Dist$, defined as
$
\EGE = \EE \sbra[0]{\Risk{\Dist}{\Alg(\SS)} - \EmpRisk{\SS}{\Alg(\SS)} },$ where we average over both the choice of training sample and the randomness within the algorithm $\Alg$.

\subsection{Conditional mutual information (CMI) of an algorithm}
\label{subsec:cmi-defs}

In order to study generalization, and avoid some of the pitfalls of earlier
approaches based on mutual information, \citet{steinke2020reasoning} propose to
study the information contained in a ``supersample'' $\supersam$, a training sample
$\SS$ taken from the supersample, and the hypothesis $\Alg(\SS)$ output by a
possibly randomized learning algorithm, given $\SS$ as input.
Formally, let $\supersam=(Z_{i,j})_{i \in \{0,1\},\,  j \in \range{n}}$ to be an array of i.i.d.\ random elements in the space $\dataspace$ of labeled examples, with a common distribution $\Dist$.
In order to choose a training sample $\SS$ of size $n$ from $\supersam$,
let $U=(U_1,U_2,\dots,U_n)$ be a sequence of i.i.d.\ Bernoulli random variables in $\{0,1\}$, independent from $\supersam$, with $\Pr(U_i=0)=1/2$.
Define $\SS=\supersam_U=(Z_{U_j,j})_{j=1}^{n}$.
The \defn{conditional mutual information (CMI) of $\Alg$}, denoted $\SZB$,
is defined to be the mutual information between $\Alg(\SS)$ and $U$ given
$\supersam$,
denoted $I(\Alg(\SS);U|Z)$.  This quantity is equivalent to $I(\Alg(\SS);\SS|Z)$ when $\Dist$ is atomless, since $(U_1,\dots,U_n)$ is a.s.\ measurable with respect to $\SS$ and~$Z$. Because $Z$ and $U$ are independent, $\SZB\leq \entr{U \vert Z} = \entr{U}= n\log2$.
We now pause to introduce this and other
information-theoretic quantities formally.

\newcommand{\SSS}{\inspace}
\newcommand{\TTT}{\mathcal {T}}
\subsection{Measures of divergence
  and information}

Let $P,Q$ be probability measures on a measurable space. (We ignore
measure-theoretic pathologies for clarity.) For a $P$-integrable or nonnegative
function $f$, let $\smash{P[f] = \int f \dee P}$.
When $Q$ is absolutely continuous with respect to $P$, denoted $Q \ll P$,
write $\rnderiv{Q}{P}$ for (an arbitrary version of) the 
Radon--Nikodym derivative (or density) of $Q$ with respect to $P$. 
The \defn{KL divergence} (or \defn{relative entropy}) \defn{of $Q$ with respect to $P$},
denoted $\KL{Q}{P}$, is defined as $Q[ \log \rnderiv{Q}{P} ]$ when $Q \ll P$ and
infinity otherwise.

For a random element $X$ in some measurable space $\SSS$,
let $\Pr[X]$ denote its distribution, which lives in the space
$\ProbMeasures{\SSS}$ of all probability measures on $\SSS$.
Given another random element, say $Y$ in $\TTT$,
let
$\cPr{Y}{X}$ denote the conditional
distribution of $X$ given $Y$.
If $X$ and $Y$ are independent, $\cPr{Y}{X} = \Pr[X]$ a.s. 
For an event, say $X \in A$, 
$\cPr{Y}{X \in A}$ denotes the event's conditional probability
given $Y$, 
which is defined to be the conditional expectation of the indicator
random variable $\indic{X\in A}$ given $Y$,
denoted $\cEE{Y}{\indic{X \in A}}$.\footnote{
  By definition, $\cPr{Y}{X}$ 
  is a $\sigma(Y)$-measurable random element in $\ProbMeasures{\SSS}$, i.e., 
  $\cPr{\supersam}{U} = \kappa(\supersam)$ a.s.\ for some measurable map
  $\kappa : \TTT \to \ProbMeasures{\SSS}$.
  More generally, if, say $\cF=\sigma(Y,Z)$ is the $\sigma$-algebra generated by $Y$ and $Z$, then
  a conditional distribution/probability/expectation given $\cF$ is
  a measurable function of $Y$ and $Z$.
}
By the
chain (aka tower) rule, $\EE \cEE{\cF} = \EE$ for any $\sigma$-algebra $\cF$.

The \defn{mutual information between $X$ and $Y$} 
is $
\minf{X;Y} = \KL{\Pr[(X,Y)]} { \Pr[X] \otimes \Pr[Y]},$
where $\otimes$ forms the product measure. 
Writing $\cPr{Z}{(X,Y)}$ for the conditional distribution of the pair $(X,Y)$
given a random element $Z$,
the \defn{disintegrated mutual information between $X$ and $Y$ given $Z$,} is
\*[
 \dminf{Z}{X;Y} = \KL{\cPr{Z}{(X,Y)}}{\cPr{Z}{X} \otimes \cPr{Z}{Y} } .
\]
Then the \defn{conditional mutual information} of $X$ and $Y$ given $Z$ is $\minf{X,Y\vert Z} = \EE \dminf{Z}{X,Y}$.

Let $\mu = \Pr[X]$ and let $\kappa(Y) = \cPr{Y}{X}$ a.s. 
If $X$ concentrates on a countable set $V$ with counting measure $\nu$, 
the \defn{(Shannon) entropy of $X$} is 
$
\entr{X}=- \mu [ \log \rnderiv{\mu}{\nu}] = - \sum_{x\in V} \Pr(X=x)\,\log \Pr (X=x )
$. The \defn{disintegrated entropy of $X$ given $Y$} is defined by 
$
\centr{Y}{X} = -\kappa(Y)[ \log \rnderiv{\kappa(Y)}{\nu} ],
$
while the \defn{conditional entropy of $X$ given $Y$} is $\entr{X\vert Y} =  \EE[\centr{Y}{X}]$.
Note that $\entr{X\vert Y}\leq \entr{X}$. 
We will make use of the following lemma whose proof can be found in  \citep{madiman2007sandwich}.
\begin{lemma}
\label{lemma:chainrule-lowerbound}
Let $(X_1,X_2,\hdots,X_n)$ be  a discrete random vector, and $Y$ be an arbitrary random variable. Then, $
\entr{X_1,\hdots, X_n \vert Y} \geq \sum_{i=1}^{n} \entr{X_i \vert X_{-i},Y}$,
where $X_{-i}= (X_j: j \in \range{n}, j \neq i)$.
\end{lemma}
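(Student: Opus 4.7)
The plan is to derive the inequality directly from two classical facts: the chain rule for entropy and the monotonicity of entropy under conditioning. I would first apply the chain rule (conditioned on $Y$) to decompose the joint entropy as
\[
\entr{X_1,\ldots,X_n \mid Y} = \sum_{i=1}^{n} \entr{X_i \mid X_1,\ldots,X_{i-1}, Y}.
\]
Then I would handle each term on the right-hand side separately, passing from the prefix $(X_1,\ldots,X_{i-1})$ to the full tuple $X_{-i}=(X_1,\ldots,X_{i-1},X_{i+1},\ldots,X_n)$.

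The key step is the observation that conditioning on additional random variables cannot increase entropy. Since $\sigma(X_1,\ldots,X_{i-1},Y) \subseteq \sigma(X_{-i},Y)$, we have
\[
\entr{X_i \mid X_1,\ldots,X_{i-1}, Y} \ \geq\ \entr{X_i \mid X_{-i}, Y}
\]
for every $i \in [n]$. Summing these inequalities and combining with the chain rule decomposition above immediately yields the desired bound.

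I do not anticipate any real obstacle: both ingredients are standard, and the decomposition and bound fit together without any subtlety (the discreteness assumption ensures all entropies are well-defined and finite whenever the $X_i$ take values in a countable set, matching the definition of entropy given earlier). The only mild care needed is to note that the chain rule used here is the conditional version, which follows termwise from the unconditional chain rule applied to the regular conditional distributions $\cPr{Y}{(X_1,\ldots,X_n)}$ and then taking expectation over $Y$.
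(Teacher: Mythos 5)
Your proof is correct: the chain rule conditioned on $Y$ plus the fact that further conditioning cannot increase entropy (applied termwise, since $X_{-i}$ contains the prefix $X_1,\dots,X_{i-1}$) yields the claimed lower bound directly. The paper itself gives no proof and simply cites \citet{madiman2007sandwich}; your argument is exactly the standard one for this special case of their lower bound, so there is nothing further to reconcile.
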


\citeauthor{steinke2020reasoning} establish a range of generalization bounds in terms of CMI. Our primary interest is in bounds for algorithms that have vanishing empirical risk. 
For $[0,1]$-bounded loss, \citeauthor{steinke2020reasoning}
show that 
\[
\label{eq:ege-interpolate}
 \EE \Risk{\Dist}{\Alg(\SS)} \leq 2 \EE \EmpRisk{S}{\Alg(\SS)} +\frac{3\SZB}{n }.
\]
For consistent learners (i.e., those that achieve zero empirical error a.s.), they also establish
\[
\label{eq:ege-fastrate}
\EE \Risk{\Dist}{\Alg(\SS)} &\leq \frac{\SZB}{n \log 2 }.
\]

\citeauthor{steinke2020reasoning} also introduce a variant of CMI based on the information revealed by the learner's losses on $\supersam$, rather than by the output hypothesis, $\Alg(\SS)$, directly.
\begin{definition}[Evaluated CMI, {\citep[][\S6.2.2]{steinke2020reasoning}}]
\label{def:ecmi}
Let $L\in \{0,1\}^{2\times n}$ be the array with entries $L_{i,j}=\loss(\Alg(\SS),Z_{i,j})$ for $i\in \{0,1\}$, $j\in \range{n}$.  The \defn{evaluated conditional mutual information of $\Alg$ with respect to $\Dist$},
denoted by $\eSZB$,
is the conditional mutual information $\minf{L ;U \vert \supersam}$. 
\end{definition}
By the data processing inequality, $\eSZB \leq \SZB$.  Therefore, $\eSZB$ is also bounded above by $n\log2$.  
For consistent learners, \citeauthor{steinke2020reasoning} show
 \[
 \EE \Risk{\Dist}{\Alg(\SS)} &\leq 1.5 \frac{ \eSZB}{n} \label{eq:ege-fastrate-ecmi}.
\]
For consistent learners $\Alg$ with bounded CMI or eCMI, 
these results imply their expected excess risk is of order $O(1/n)$.
The following result gives a nearly optimal bound
for the generalization error for VC classes in term of the evaluated CMI. The proof (\cref{pf:everyerm-log}) uses standard arguments, controlling the cardinality of the support of $L$ using the Sauer--Shelah lemma.
\begin{theorem}
\label{thm:everyerm-log}
For every $n$, let $\Alg : \dataspace^n \to \hypothesisclass_n$, where $\hypothesisclass_n$ is a concept class with VC dimension $\vcd_n$. 
Then, for every $n$ and distribution on $\supersam$, $\dminf{\supersam}{L;U} \le \vcd_n \log 6n$ a.s. 
In particular, $\sup_\Dist \eSZB \in O({\vcd_n \log n})$.
\end{theorem}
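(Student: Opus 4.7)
The plan is to bound the disintegrated entropy $\centr{\supersam}{L}$ by counting the support of $L$ conditional on $\supersam$, and then invoke the trivial inequality $\dminf{\supersam}{L;U} \le \centr{\supersam}{L}$, which follows from $\dminf{\supersam}{L;U} = \centr{\supersam}{L} - \centr{\supersam,U}{L}$ and the nonnegativity of (disintegrated) entropy.

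First I would fix a realization of $\supersam = (Z_{i,j})$. Writing $Z_{i,j} = (X_{i,j}, Y_{i,j})$, the labels $Y_{i,j}$ are then fixed, so the loss array $L = (\loss(\Alg(\SS), Z_{i,j}))_{i,j}$ is a deterministic function of the pattern $(\Alg(\SS)(X_{i,j}))_{i,j} \in \{0,1\}^{2 \times n}$. Since $\Alg(\SS) \in \hypothesisclass_n$ by hypothesis, this pattern lies in the set of labelings of the $2n$ fixed points $\{X_{i,j}\}$ realizable by $\hypothesisclass_n$. By the Sauer--Shelah lemma, the number of such labelings is at most $\sum_{i=0}^{\vcd_n} \binom{2n}{i} \le (2en/\vcd_n)^{\vcd_n} \le (6n)^{\vcd_n}$ when $2n \ge \vcd_n$ (using $2e < 6$), and at most $2^{2n} \le (6n)^{\vcd_n}$ when $2n < \vcd_n$ (since $\log 6 > 1$). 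So conditional on $\supersam$, the random variable $L$ is supported on a set of cardinality at most $(6n)^{\vcd_n}$.

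The disintegrated entropy is bounded by the log of the support size, hence $\centr{\supersam}{L} \le \vcd_n \log(6n)$ a.s. Combining with the inequality $\dminf{\supersam}{L;U} \le \centr{\supersam}{L}$ yields the a.s.\ bound. Taking expectation gives $\eSZB = \minf{L;U \mid \supersam} = \EE \dminf{\supersam}{L;U} \le \vcd_n \log(6n)$, uniformly over the choice of data distribution $\Dist$, which is $O(\vcd_n \log n)$.

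There is no real obstacle here beyond the bookkeeping of the two regimes ($2n \ge \vcd_n$ versus $2n < \vcd_n$) needed to absorb the Sauer--Shelah bound into the clean form $(6n)^{\vcd_n}$; everything else is a direct consequence of the fact that, conditional on $\supersam$, the loss array is determined by the restriction of $\Alg(\SS)$ to the $2n$ supersample inputs, together with the standard entropy-support inequality.
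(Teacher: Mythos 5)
Your proposal is correct and follows essentially the same route as the paper's proof: bound $\dminf{\supersam}{L;U}$ by $\centr{\supersam}{L}$, observe that conditional on $\supersam$ the loss array is determined by the restriction of $\Alg(\SS)$ to the $2n$ supersample inputs, and count the realizable labelings via Sauer--Shelah. Your explicit handling of the $2n<\vcd_n$ regime is a small extra piece of bookkeeping the paper glosses over, but otherwise the arguments coincide.
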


\begin{remark}
Markov's inequality and \cref{eq:ege-fastrate} imply $\Pr(\Risk{\Dist}{\Alg(\SS)} \geq \epsilon)\leq \SZB /( \log(2) n\epsilon) $ for consistent learners. By \citep[][Thm.~2.1]{haghifam2020sharpened}, 
$\minf{\Alg(\SS);U|Z} \le \minf{\Alg(\SS);\SS}$.
This observation, combined with \citep[][Prop.~11]{bassily2018learners}, implies
there is an input space, data distribution, and consistent learning algorithm for which this tail bound's dependence on $n$ is \emph{tight}. 
If one were to obtain sample complexity bounds via such tail bounds, one would only prove that $O(1/(\epsilon\delta))$ samples suffice to find a hypothesis with $\epsilon$ estimation error with probability at least $1-\delta$. 
The linear dependence on $1/\delta$ is, however, suboptimal.
As such, it seems that the CMI framework cannot be used to obtain optimal sample complexity bounds in the PAC framework. Recent proposals for disintegrated notions of CMI in \citep{hellstrom2020generalization} might
provide a framework for studying the sample complexity of PAC learning using an information-theoretic framework. %
\end{remark}

\section{Optimal CMI Bound for SVM and Stable Compression Schemes}
\label{sec:stable-compression}

In this section, we show that the CMI framework can be used to derive an optimal
excess risk bound for the SVM algorithm learning half spaces in $\Reals^d$. 
To show this, we establish optimal CMI bounds for the subclass of \emph{stable}
sample compression schemes,
which imply this section's main result:
\begin{theorem}
\label{thm:svm}
Let $\Alg$ be the SVM algorithm for learning the class of half spaces in $\Reals^d$. Then, for every $n > {d}/{2}$ and realizable distribution $\Dist$ in $\Reals^d$, we have $\SZB \leq 2(d+1)\log 2$. 
\end{theorem}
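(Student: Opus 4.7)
The plan is to deduce the statement as a direct corollary of the section's general bound on the CMI of stable sample compression schemes, so the proof reduces to two ingredients.

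First, I would verify that hard-margin SVM on realizable halfspaces in $\Reals^d$ is a \emph{stable} sample compression scheme (in the sense of \citet{bousquet2020proper}) of size at most $d+1$. The compression map sends a realizable sample $S$ to its set of support vectors and the reconstruction map runs SVM on that subsample. Two classical facts underpin this: (i) by the Karush--Kuhn--Tucker conditions, only support vectors carry nonzero dual weight, so re-running SVM on just the support vectors reproduces the same maximum-margin hyperplane; (ii) a basic optimal solution of the SVM dual QP has at most $d+1$ nonzero Lagrange multipliers, as the primal is a convex quadratic program in the $d+1$ variables $(w,b)$, and equivalently the max-margin hyperplane is pinned down by at most $d+1$ points lying on the margin. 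Stability — that the compression map is idempotent on its image — follows from the uniqueness of the max-margin separator on any separable subsample: compressing the support-vector set returns that same set.

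Second, I would invoke the general CMI bound for stable compression schemes which is the main technical result of \cref{sec:stable-compression}, namely that any stable compression scheme of size $k$ satisfies $\SZB \leq 2k\log 2$ uniformly over realizable $\Dist$. Plugging $k = d+1$ yields $\SZB \leq 2(d+1)\log 2$, as claimed.

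The heavy lifting is thus deferred to the general stable-compression lemma rather than to this corollary; the real challenge there is to exploit idempotence to avoid the $\log n$ factor that appears in the generic compression-scheme CMI bound $k\log(2n)$ of \citet{steinke2020reasoning}. The mild hypothesis $n > d/2$ appears to be a regularity condition ensuring that the compression-size bound $d+1$ is informative relative to $n$; for $n \leq d/2$ one trivially has $\SZB \leq n\log 2 \leq 2(d+1)\log 2$ from the fact that the supersample has only $n$ bits of conditional entropy.
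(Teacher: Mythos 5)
Your proposal is correct and follows essentially the same route as the paper: the paper derives \cref{thm:svm} in one line from \cref{thm:stable-compression-cmi} together with the classical fact (cited from Mohri et al., Sec.~5.3.2) that hard-margin SVM is a stable compression scheme of size $d+1$, which is exactly your two ingredients. One small caution: stability in the sense of \citet{bousquet2020proper} is slightly stronger than idempotence of the compression map --- it requires $\rho(\kappa(s'))=\rho(\kappa(s))$ for \emph{every} intermediate $s'$ with $\kappa(s)\subseteq s'\subseteq s$ --- but this too follows from the standard support-vector/margin-uniqueness argument you sketch, and your handling of the small-$n$ regime via the trivial bound $\SZB\leq n\log 2$ covers the range $d/2<n<d+1$ where the stable-compression theorem's hypothesis $n\geq k$ does not directly apply.
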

Combining this result with \cref{eq:ege-fastrate} gives $ \EE  \Risk{\Dist}{\Alg(\SS)}\leq 2(d+1)/n$. The lower bound for expected excess risk of linear classifiers in  \citep{long2020complexity} shows this bound is optimal up to a constant factor.

\subsection{CMI of Stable Compression Schemes}

\citet{littlestone1986relating} introduced compression schemes, which
capture the idea that a consistent hypothesis can be defined in terms of a fixed number of samples.
Formally, for a concept class 
$\hypothesisclass \subseteq \outspace^\inspace$,
a \defn{sample compression scheme} of size $k \in \Nats$ is a pair $(\kappa,\rho)$ of maps
such that, for all 
samples $s = ((x_i,y_i))_{i=1}^{n}$ of size $n \ge k$,
the map $\kappa$ compresses the sample into a length-$k$ subsequence $\kappa(s) \subseteq s$
which the map $\rho$ uses to reconstruct an empirical risk minimizer 
 $\hat{h}=\rho(\kappa(s)) $. %
\citeauthor{steinke2020reasoning}  prove the following upper bound on the CMI of a sample compression scheme.

\begin{theorem}[{\citealp[Thm.~4.1]{steinke2020reasoning}}]
\label{thm:cmi-general-compression}
Let $\hypothesisclass$ be a hypothesis class
that has a sample compression scheme $(\kappa,\rho)$  of size $k$. Then, $\SZB \leq k \log(2n)$ 
where $\Alg(\cdot) = \rho(\kappa(\cdot))$.
\end{theorem}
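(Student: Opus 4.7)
The plan is to bound the CMI by the conditional entropy of the output and then exploit the fact that, once the supersample $\supersam$ is fixed, a size-$k$ compression scheme can only produce very few distinct hypotheses. Concretely, I would start from the standard inequality $\minf{\Alg(\SS);U\mid \supersam} \le \entr{\Alg(\SS)\mid \supersam}$, which follows from the definition of (conditional) mutual information together with the nonnegativity of conditional entropy. Hence it suffices to show that $\entr{\Alg(\SS)\mid \supersam} \le k\log(2n)$.

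The key step is to control the range of $\Alg(\SS)=\rho(\kappa(\SS))$ given $\supersam$. Since $\rho$ is deterministic, $\entr{\rho(\kappa(\SS))\mid \supersam}\le \entr{\kappa(\SS)\mid \supersam}$, so I may instead bound the conditional entropy of the compression $\kappa(\SS)$. By definition $\kappa(\SS)$ is a length-$k$ subsequence of $\SS=(Z_{U_j,j})_{j=1}^n$; therefore each of its $k$ entries is one of the $2n$ columns $\{Z_{i,j}: i\in\{0,1\},\, j\in\range{n}\}$ of $\supersam$. Consequently, once $\supersam$ is fixed, $\kappa(\SS)$ takes at most $(2n)^k$ distinct values (one can equivalently encode $\kappa(\SS)$ by the ordered tuple of $k$ indices $(j_1,\dots,j_k)\in\range{n}^k$ together with the associated selection bits $(U_{j_1},\dots,U_{j_k})\in\{0,1\}^k$, giving at most $n^k\cdot 2^k = (2n)^k$ possibilities).

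From the elementary fact that the entropy of a discrete random variable supported on a set of size $M$ is at most $\log M$, applied to the regular conditional law $\cPr{\supersam}{\kappa(\SS)}$ and averaged over $\supersam$, I conclude
\[
\entr{\kappa(\SS)\mid \supersam} \;\le\; \log\bigl((2n)^k\bigr) \;=\; k\log(2n),
\]
which combined with the two displayed inequalities above yields $\SZB \le k\log(2n)$, as desired.

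\textbf{Main obstacle.} The argument is essentially a counting argument, so there is no deep obstacle; the only subtlety to be careful about is that $\kappa$ may return an \emph{ordered} subsequence whose indices depend on $\SS$ (and hence on $U$), so the cardinality bound must allow for arbitrary $k$-tuples of columns of $\supersam$ rather than only subsets of the training half of $\supersam$ — this is exactly what produces the factor $2n$ rather than $n$ inside the logarithm. If one wished to permit randomized compression schemes, it would suffice to condition additionally on the scheme's internal randomness (which is independent of $(U,\supersam)$) and apply the same counting bound; this gives the same conclusion without any modification to the rate.
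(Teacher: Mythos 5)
Your proof is correct; note that the paper does not reprove \cref{thm:cmi-general-compression} but imports it from \citet{steinke2020reasoning}, and your argument — bounding $\SZB$ by $\entr{\kappa(\SS)\mid\supersam}$ and counting the at most $(2n)^k$ possible compressed subsequences of the supersample — is essentially the same counting proof given there. No gaps.
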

Note that the bound in \cref{thm:cmi-general-compression} \emph{cannot} be improved from $O(k \log n)$ to $O(k)$ for \emph{every} sample compression scheme, and so the bound in \cref{thm:cmi-general-compression} is tight, and cannot be improved without further information about the compression scheme. 
The proof of the optimally stems from the fact that there exists compression schemes of size $k$ and data distributions $\Dist$ such that there is a lower bound $\mathbb{E}[R_{\Dist}(\mathcal{A}_n)]=\Omega({k \log (n)}/{n})$ where  $\mathcal{A}_n (\cdot)=\rho(\kappa(\cdot))$ \citep{hanneke2019sharp,floyd1995sample}. Combining this lower bound with \cref{eq:ege-fastrate} proves the optimally of \cref{thm:cmi-general-compression}.

Nevertheless, we can  circumvent this lower bound by considering an important subclass of the sample compression schemes. Many natural compression schemes are also \emph{stable}
in the sense that removing any training example that was not in the compressed sequence
does not alter the resulting classifier. To give a formal definition, we write $s \subseteq s'$ for two sequences $s,s'$ if, under some permutation, $s$ is a subsequence of $s'$.

\begin{definition}[Stable sample compression scheme; \citealt{bousquet2020proper}]
A sample compression scheme $(\kappa,\rho)$ of size $k$ is said to be \defn{stable} 
if $\kappa$ is symmetric (i.e., invariant to permutation of its input) and,
for every realizable sample $s$ of size $n \ge k$,
and every sequence $s'$ such that $\kappa(s) \subseteq s' \subseteq s$, we have $\rho(\kappa(s)) = \rho(\kappa(s'))$.
Due to the symmetry of $\kappa$, we refer to its output as the compression \defn{set}, although
the equivalence class of sequence under permutations is the structure of a multiset, not a set.
\end{definition}

The concept of a stable compression scheme has its roots in the analysis of the SVM for learning half-spaces in $\Reals^d$ \citep{vapnik1974theory},
which is the quintessential example of a stable sample compression scheme.
For SVMs, the compression (multi)set contains at most $d+1$ distinct ``support vectors'' for any given training set. 
The reconstruction map outputs the max-margin classifier over the set of support vectors. 
By stability, removing any training example that is not a support vector does not change the resulting classifier \citep[][Sec.~5.3.2]{mohri2018foundations}. 
In the next theorem, we present a uniform CMI bound over realizable distributions for every stable sample compression scheme. Our bound removes the 
$\log n $ factor from \cref{thm:cmi-general-compression} and is \emph{optimal} up to a constant factor in the distribution-free setting.
\begin{theorem}
\label{thm:stable-compression-cmi}
Let $\hypothesisclass$ be a concept class with a stable compression scheme $(\kappa,\rho)$ of size $k$. Then, for  every realizable data distribution $\Dist$ and $n \geq k$, $\SZB \leq  2k \log 2$,
where $\Alg = \rho(\kappa(\cdot))$.
\end{theorem}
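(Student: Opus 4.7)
The plan is to decompose the CMI via the ``reverse chain rule'' of \cref{lemma:chainrule-lowerbound} into coordinate-wise terms, reducing the bound to counting the expected number of \emph{sensitive} coordinates of $U$, and then to exploit the stability axiom together with a symmetry argument to bound this expected count by $2k$.

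First, since $\Alg(\SS) = \rho(\kappa(\supersam_U))$ is a deterministic function of $(\supersam, U)$, applying \cref{lemma:chainrule-lowerbound} to the discrete vector $U$ given $(\supersam, \Alg(\SS))$ and combining with $H(U \mid \supersam) = n \log 2$ yields
\[
    \SZB \;\le\; \sum_{j=1}^n I\bigl(U_j;\, \Alg(\SS) \,\big|\, U_{-j},\, \supersam\bigr).
\]
Conditionally on any realization of $(U_{-j}, \supersam)$, the output $\Alg(\SS)$ is a function of the single remaining bit $U_j$ and hence takes at most two values; the $j$-th summand is $\log 2$ when these two values differ and $0$ otherwise. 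Writing $U^{(j)}$ for $U$ with the $j$-th coordinate flipped, this gives
\[
    \SZB \;\le\; (\log 2) \cdot \EE\bigl|\{j \in [n] : \Alg(\supersam_U) \ne \Alg(\supersam_{U^{(j)}})\}\bigr|.
\]

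Next, for each $(\supersam, u)$ I would let $J(\supersam, u) \subseteq [n]$ denote the positions of the compression multiset $\kappa(\supersam_u)$ within $\supersam_u$, using a fixed deterministic tie-breaking rule (well-defined since $\kappa$ is symmetric, and in any case superfluous when $\Dist$ is atomless). The key consequence of stability is that whenever $j \notin J(\supersam, U)$ and $j \notin J(\supersam, U^{(j)})$, applying the stability axiom to $\supersam_U$ (using $\kappa(\supersam_U) \subseteq (\supersam_U)^{-j} \subseteq \supersam_U$) and to $\supersam_{U^{(j)}}$ gives
\[
    \Alg(\supersam_U) \;=\; \rho\bigl(\kappa((\supersam_U)^{-j})\bigr) \;=\; \rho\bigl(\kappa((\supersam_{U^{(j)}})^{-j})\bigr) \;=\; \Alg(\supersam_{U^{(j)}}),
\]
because the two leave-one-out samples coincide. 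Consequently, every sensitive coordinate $j$ lies in $J(\supersam, U) \cup \{j : j \in J(\supersam, U^{(j)})\}$.

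Finally, the expected size of the first set is exactly $\EE\abs{J(\supersam, U)} = k$, and for the second I would invoke the symmetry that $U^{(j)}$ and $U$ have the same distribution (each $U_j$ is uniform Bernoulli independent of the rest):
\[
    \sum_{j=1}^n \Pr\bigl[j \in J(\supersam, U^{(j)})\bigr] \;=\; \sum_{j=1}^n \Pr\bigl[j \in J(\supersam, U)\bigr] \;=\; \EE\abs{J(\supersam, U)} \;=\; k.
\]
Summing the two contributions produces the claimed $\SZB \le 2k \log 2$. The main obstacle --- and the only place stability enters --- is the three-equality display above; its subtlety is the need to apply the stability axiom on \emph{both} $\supersam_U$ and $\supersam_{U^{(j)}}$, which requires that both samples be realizable. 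This uses that $\Dist$ is realizable so every column of $\supersam$ is labeled by a common hypothesis in $\HS$ almost surely, making every ``sub-sample'' realizable, so the hypothesis of the stability axiom is indeed satisfied.
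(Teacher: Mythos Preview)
Your argument is correct. The information-theoretic decomposition via \cref{lemma:chainrule-lowerbound} and the reduction to counting ``sensitive'' coordinates coincide with the paper's proof; where you diverge is in how stability is cashed out. The paper passes to the $(n{+}1)$-point union $\Sr{i}{0}\cup\Sr{i}{1}$, applies stability \emph{once} there to argue that if neither $Z_{0,i}$ nor $Z_{1,i}$ lies in $\kappa(\Sr{i}{0}\cup\Sr{i}{1})$ then the two outputs agree, and then invokes exchangeability of the $n{+}1$ i.i.d.\ points to bound this probability by $\binom{n-1}{k}/\binom{n+1}{k}\ge 1-2k/n$. You instead pass \emph{down} to the common $(n{-}1)$-point leave-one-out sample, applying stability \emph{twice} (once to $\supersam_U$, once to $\supersam_{U^{(j)}}$), and then use the bit-flip symmetry $(\supersam,U^{(j)})\eqdist(\supersam,U)$ to get $\sum_j\Pr[j\in J(\supersam,U^{(j)})]=\EE|J(\supersam,U)|\le k$ directly. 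Your route avoids the $(n{+}1)$-sample detour and the binomial-coefficient computation, landing on $2k$ by pure symmetry; the paper's route, on the other hand, needs stability only for a single sample rather than two. Both yield the identical constant $2k\log 2$.
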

\begin{remark}
 \citet[][Sec. 4.4]{steinke2020reasoning} propose an algorithm for learning threshold functions (positive rays) in the realizable setting  over $\Reals$ that achieves $\SZB\leq 2\log2$.  
 It is interesting to note that their algorithm can be viewed as a stable compression scheme. Specifically, for a realizable training set $s$, let $x^\star=\min\{x \in \Reals: (x,1)\in s\}$ if $s$ has any sample with label $1$, otherwise let $x^\star = \infty$. 
 Then the algorithm proposed by \citeauthor{steinke2020reasoning} is $\Alg(\SS)=\hat{h}$, where $\hat{h}(x) = \indic{x\geq x^\star}$.  \citeauthor{steinke2020reasoning} present a bespoke analysis of this special algorithm.
 It is straightforward to see that the algorithm is a stable compression scheme of size one and the compression map here is symmetric. 
 Therefore, the result of \cref{thm:stable-compression-cmi} gives $\SZB\leq 2\log 2$. 
\end{remark}
\begin{proof}[Proof of \cref{thm:stable-compression-cmi}]
Let $W=\Alg(Z_U)=\rho(\kappa(Z_U))$ and note that $\Alg$ is deterministic.   We have
  $\entr{U | \supersam} = n \log 2$ due to independence of $U$ and $\supersam$
  and the independence of components of $U$.
  Then, by the definition of mutual information in terms of entropy, and \cref{lemma:chainrule-lowerbound}, 
\begin{align}
\label{eq:lemma-lowerbound-use}
\SZB &=  \entr{U | \supersam} - \entr{U | W, \supersam}
\leq n \log 2 - \sum_{i=1}^{n} \entr{U_i| W,\supersam, U_{-i}}.
\end{align}
Fix $i \in \range{n}$, and define $\Ur{i}{b}\triangleq
(U_1,\dots,U_{i-1},b,U_{i+1},\dots,U_n)$ for $b \in \{0,1\}$. Using this
notation, we can define two training sets  $\Sr{i}{b}= \supersam_{\Ur{i}{b}}$ for
$b \in \{0,1\}$. Let $\cF_i$ be the $\sigma$-algebra $\sigma(W,\supersam,U_{-i})$ and let $E$ be the event
$\rho(\kappa(\Sr{i}{0}))= \rho(\kappa(\Sr{i}{1}))$. Then, 
by the non-negativity of entropy,  
\[
\label{eq:decompose-entropy}
\entr{U_i| W,\supersam, U_{-i}}
= \EE \big[ \centr{\cF_i }{U_i} \big ] \ge \EE \big[ \centr{\cF_i }{U_i} \indic{E}
\big].
\]
Note that, conditional on the sub-$\sigma$-algebra $\cG_i = \sigma(\supersam,U_{-i})$,
$W$ takes on at most two values.
However, on the event $E$ (or equivalently, conditioning further on the event $E$, since $E$ is $\cG_i$-measurable), 
$W$ is now nonrandom because it takes on a single value.
It follows that, conditional on $\cG_i$ and the event $E$,
$W$ is trivially independent of every random variable, including $U_i$.
Ergo, on $E$, 
$\cEE{\cG_i}[U_i] = \cEE{\cF_i}[U_i] = \cPr{\cF_i}{U_i=1}$.
But $U_i$ is independent of $\cG_i$,
and so $\cEE{\cG_i}[U_i] = \EE[U_i] =  \frac 1 2$. 
Thus, on $E$,
$\cPr{\cF_i}{U_i=1} =  \frac 1 2$ and so $\centr{\cF_i}{U_i} = \binaryentr{\frac 1 2} = \log 2$.
Therefore,
\[
  \entr{U_i| W,\supersam, U_{-i}}
  \ge
  \log 2 \cdot \Pr\big( \rho(\kappa(\Sr{i}{0}))=\rho(\kappa(\Sr{i}{1}))  \big) . \label{eq:lower-bound-prob}
\]
We can bound the probability of $E$ from below
using the stability property of the compression scheme. For any $(x_1,\tilde{x}_1,x_2,\dots,x_{n})\in \inspace^{n+1}$ and $h \in \hypothesisclass$, consider two multisets $S=\{ (x_1,h(x_1)), (x_2,h(x_2)),\hdots, (x_{n},h(x_{n}))\} $ and  $\tilde{S}=\{ (\tilde{x}_1,h( \tilde{ x}_1)),$ $(x_2,h(x_2)),\hdots, (x_n,h(x_n))\}$, where  $S$ and $\tilde{S}$ differ only in the first element. Define the multiset $S\cup \tilde{S}=\big\{ (x_1,h( x_1)), (\tilde{x}_1,h( \tilde{ x}_1)),$ $(x_2,h(x_2)),\hdots, (x_n,h(x_n))\big\}$. 
We claim that if $ (x_1,h( x_1))$ and $(\tilde{x}_1,h( \tilde{ x}_1))$ are not the members of the compression set $S\cup \tilde{S}$, then $ (x_1,h(x_1))$ and $ (\tilde{x}_1,h( \tilde{ x}_1))$ are not in the compression set of $S$ and $\tilde{S}$, respectively. 
To prove this claim, since  $(x_1,h(x_1))$ is not in the compression set  $S\cup \tilde{S}$ by the stability of $\kappa$, we have $\rho(\kappa(S\cup \tilde{S}))=\rho(\kappa(S\cup \tilde{S} \setminus \{(x_1,h(x_1))\}))$. 
By the definition of $S$ and $\tilde{S}$, $S\cup \tilde{S} \setminus \{(x_1,h(x_1))\} = \tilde{S}$. 
Thus, combining facts that  $(x_1,h(x_1))$ is not in the compression set $S\cup \tilde{S}$  and $\kappa(S \cup \tilde{S})=\kappa(S \cup \tilde{S} \setminus \{(x_1,h(x_1))\})=\kappa(\tilde{S})$, we obtain $(\tilde{x}_1,h( \tilde{ x}_1))$ is not in the compression set $\tilde{S}$. Similarly, we can prove  $(x_1,h( x_1))$ is not a member of the compression set $S$ by switching $x_1$ with $\tilde{x}_1$ in the argument. 
By this argument,
\[
&\Pr\big( \rho(\kappa(\Sr{i}{0}))=\rho(\kappa(\Sr{i}{1}))  \big) \geq \Pr\big( Z_{0,i}\not\in \kappa(\Sr{i}{0}\cup \Sr{i}{1})\wedge Z_{1,i} \not\in \kappa(\Sr{i}{0}\cup \Sr{i}{1})  \big)\label{eq:fact-stability}.
\]
Recall that the elements of $\supersam$ are \iid, hence exchangeable. Since the size of the sample compression is $k$ and $\kappa$ is symmetric, we have 
\begin{align}
\label{eq:prob-union-compressionset}
 \Pr\big( Z_{0,i}\not\in \kappa(\Sr{i}{0}\cup \Sr{i}{1})\wedge Z_{1,i} \not\in \kappa(\Sr{i}{0}\cup \Sr{i}{1})  \big)
  \geq  \textstyle \binom{n-1}{k} / \binom{n+1}{k} .
\end{align}
Combining \cref{eq:lower-bound-prob,eq:fact-stability,eq:prob-union-compressionset} yields
$\entr{U_i| W,\supersam, U_{-i}} \geq \log 2 \cdot \binom{n-1}{k} / \binom{n+1}{k} \geq(1-{2k}/{n}) \log2  $.
Finally, the result follows by substitution of this bound into \cref{eq:lemma-lowerbound-use}.
\end{proof}
The result for SVMs (\cref{thm:svm}) follows immediately from \cref{thm:stable-compression-cmi} and the fact 
that the SVM may be expressed as a stable compression scheme of size $d + 1$.

\section{CMI of Proper Learning of VC classes}
\label{sec:proper}

Following their paper introducing CMI, \citeauthor{steinke2020openproblem} posed several open problems 
asking whether VC classes under realizibility admit learners with bounded CMI.
We will restate their conjectures, and then showing that there exist some VC classes for which it is not possible to find a proper learner with bounded CMI under realiziblity. 
We then consider a subset of VC classes, namely VC classes with finite star number, and show that for such concept classes, there exists an ERM with bounded CMI.

We first state the main result of \citep{steinke2020reasoning} on the CMI of proper learners.
\begin{theorem}[Thm.~4.12, \citealt{steinke2020reasoning} ]
\label{thm:sz-thm-proper}
Let $\hypothesisclass$ be a concept class with VC dimension $\vcd$. Then for all $n \in \Nats$, there exists a proper ERM algorithm $\Alg$ for learning $\hypothesisclass$ such that for every realizable distribution $\Dist$, $\SZB=O\big(\vcd \log n\big)$.
\end{theorem}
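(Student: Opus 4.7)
The approach is to exhibit an explicit proper ERM whose CMI is bounded by a direct appeal to the Sauer--Shelah lemma. Using the axiom of choice, fix a well-ordering $\prec$ of $\HS$, and define
\*[
\Alg(s) \defas \min\nolimits_\prec V(s), \qquad V(s) \defas \{h \in \HS : \EmpRisk{s}{h} = 0\}.
\]
Since $\Dist$ is realizable, some $h^\star \in \HS$ lies in $V(\SS)$ almost surely, so $V(\SS)$ is nonempty and $\Alg$ is a (deterministic) proper ERM.

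The plan to bound $\SZB$ is to use $\SZB \leq \entr{\Alg(\SS) \mid \supersam}$ and to show that, conditional on $\supersam$, the output $\Alg(\SS)$ takes at most as many values as there are distinct dichotomies realized by $\HS$ on the $2n$ input points appearing in $\supersam$. To this end, declare $h \sim_{\supersam} h'$ if $h$ and $h'$ agree on every input point of $\supersam$; the number of equivalence classes equals exactly this dichotomy count. Because $\SS = \supersam_U$ is a sub-sequence of $\supersam$, the predicate ``consistent with $\SS$'' is $\sim_{\supersam}$-invariant, so $V(\SS)$ is a disjoint union of $\sim_{\supersam}$-classes. Consequently,
\*[
\Alg(\SS) = \min\nolimits_\prec V(\SS) = \min\nolimits_\prec \{\min\nolimits_\prec [h] : [h] \subseteq V(\SS)\},
\]
which is always the $\prec$-minimum of one of the class-minima, so $\Alg(\SS)$ takes at most as many values as there are equivalence classes.

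Applying the Sauer--Shelah lemma to the $2n$ input points of $\supersam$ bounds the number of dichotomies by $(2en/\vcd)^{\vcd}$ once $n \geq \vcd$, so almost surely $\centr{\supersam}{\Alg(\SS)} \leq \vcd \log(2en/\vcd) = O(\vcd \log n)$, and taking expectations yields $\SZB = O(\vcd \log n)$. The main subtlety is the equivalence-class reduction: a priori the output is an element of a potentially huge class $\HS$, and the entire bound hinges on observing that a fixed deterministic tie-breaker forces $\Alg(\SS)$ to be a function of the $\sim_{\supersam}$-class of any consistent hypothesis. Measurability issues created by invoking a well-ordering on $\HS$ can be handled either by assuming a standard Borel structure on $\HS$ and replacing $\prec$ by a measurable selector from the quotient $\HS/\!\sim_{\supersam}$, or by running the argument at the level of the output's restriction to $\supersam$, which is a discrete random variable on the set of dichotomies to which the same counting bound applies.
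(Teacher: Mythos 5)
Your proposal is correct, and it is essentially the standard argument for this result (which the paper imports from \citet{steinke2020reasoning} without reproving): a canonical deterministic ERM is a function of $U$ and $\supersam$, so $\SZB=\entr{\Alg(\SS)\mid\supersam}$, and the well-ordered tie-breaker forces the output to be determined by the $\sim_{\supersam}$-equivalence class it selects, of which there are at most $(2en/\vcd)^{\vcd}$ by Sauer--Shelah. The closest in-paper analogues confirm both halves of your argument: the proof of \cref{thm:everyerm-log} runs the identical Sauer--Shelah count but at the level of the loss array $L$ on $\supersam$, which bounds $\eSZB$ for \emph{every} algorithm with range in a VC class and sidesteps the well-ordering/measurability issue you flag (your fallback of ``running the argument at the level of the output's restriction to $\supersam$'' is exactly this); and the proof of \cref{thm:starnumber-log} uses your exact construction $\Alg(\SS)=\mathrm{LE}(\versionspace{\SS})$ via a well-ordering, with the same measurability caveat, though there it is analyzed by a stability argument rather than by counting. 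One cosmetic point: your bound needs $2n\ge\vcd$ for the Sauer--Shelah step, and for smaller $n$ (or $n=1$, where $\vcd\log n=0$) one falls back on the trivial bound $\SZB\le n\log 2$; this is the usual reading of the $O(\cdot)$ in the statement and does not affect correctness.
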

\begin{remark}[Comparison of \cref{thm:sz-thm-proper} and \cref{thm:everyerm-log}]
First, note that \cref{thm:sz-thm-proper}
does not hold for every ERM algorithm. 
As discussed in \citep{steinke2020reasoning}, we can construct pathological ERMs with nearly maximal CMI by simply encoding the information $U$ into the ``lower-order'' bits of $W$. 

It is also worth noting that our result in \cref{thm:everyerm-log} is more general. 
There we show that a  bound $O\big(\vcd \log n\big)$ holds for \emph{evaluated} CMI of \emph{any algorithm} that outputs a hypothesis from VC class, whereas \cref{thm:sz-thm-proper} holds for a \emph{specific} proper algorithm.  
\end{remark}

\subsection{A Limitation of Proper Learning}
\citet{steinke2020openproblem} propose two conjectures regarding CMI for proper learning of VC classes under the realizability assumption, both of which can be seen as special cases of the following statement:

\begin{statement}
\label{conjecture:realizable-generalf}
There exists a real-valued function $f$ and constant $c \ge 0$ such that, 
for every nonnegative integer $\vcd$ and VC class $\hypothesisclass \subseteq \XX \to \YY$ of dimension $\vcd$,
there exists a proper learning algorithm $\TheAlg$ for $\HS$ 
such that,
for every $n \ge \vcd$, 
$\SZB \leq f(\vcd)$ for all $\Dist$
and, for every realizable $s \in \dataspace^{n}$, 
$\EE \EmpRisk{s}{\Alg(s)}  \leq c\,\vcd / n$,
where the expectation is taken only over the randomness in $\Alg$.
\end{statement}

\citet{steinke2020openproblem} conjecture that \cref{conjecture:realizable-generalf} holds for $f$ linear.
In this section, we show that
\cref{conjecture:realizable-generalf} is false in general: it is not possible to find a proper learning algorithm for \emph{every} VC class that removes the $\log(n)$ factor from \cref{thm:sz-thm-proper}.
For a class $\HS \subseteq \XX \to \YY$, let $\samplecomplex{\epsilon}{\delta}$ denote the \defn{proper optimal sample complexity} of $(\epsilon,\delta)$-PAC learning $\HS$, 
i.e., $\samplecomplex{\epsilon}{\delta}$ is the least integer $n$, for which there exists a proper learning algorithm $\TheAlg$ such that, for every realizable distribution $\Dist$, $
\Pr(\Risk{\Dist}{\Alg(\SS)} \geq \epsilon) \leq \delta.$
The following result provides a lower-bound on the sample complexity of proper learning:
\begin{theorem}[Thm.~11, \citealt{bousquet2020proper}] 
\label{thm:sample-complexity-lowerbound}
Let $\epsilon \in (0,1 / 8)$ and $\delta \in (0,1/100)$. There exists a concept class with VC dimension $\vcd$.  for which we have $\samplecomplex{\epsilon}{\delta} \geq \frac{\tilde{c}}{\epsilon}(\vcd\, \mathrm{Log} \frac{1}{\epsilon} + \mathrm{Log}\frac{1}{\delta})$ for a fixed numerical constant $\tilde{c} > 0$, where $\mathrm{Log}(x) = \max \{1, \log(x) \}$ for $x\geq 0$.
\end{theorem}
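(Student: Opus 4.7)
The plan is to establish a $\Omega\!\big(\frac{\vcd}{\epsilon}\log\frac{1}{\epsilon}\big)$ lower bound on $\samplecomplex{\epsilon}{\delta}$, strictly separating the proper rate from the improper PAC rate of $\Theta(\vcd/\epsilon)$. The key is to exhibit a concept class in which the restriction to proper hypotheses is itself the bottleneck, and then to apply a probabilistic no-free-lunch style argument against any proper learner.

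First I would construct a suitable concept class $\HS$ of VC dimension $\vcd$ on a ground set of size $N\asymp \vcd/\epsilon$. A natural candidate is a ``tree-structured'' gadget: partition the ground set into $\vcd$ blocks, within each block lay out a full binary tree of depth $\Theta(\log(1/\epsilon))$, and let the hypotheses in $\HS$ be parameterized by independently choosing one root-to-leaf path per block. A careful accounting keeps $\VC(\HS)=\Theta(\vcd)$ while every $h\in \HS$ encodes $\Theta(\vcd \log(1/\epsilon))$ independent binary commitments. The subtle design point is that an \emph{improper} learner can ``sit on the fence'' by predicting according to the majority of paths still consistent with the sample, while a \emph{proper} learner is forced to commit fully to one $h\in\HS$, and hence to each leaf of each tree, whether or not that leaf has been witnessed.

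Next I would pair $\HS$ with a hard distribution $\Dist$ placing mass $\Theta(\epsilon/\vcd)$ on each leaf-level point and smaller, carefully chosen mass on internal witnesses so that a typical sample reveals the ancestor path only down to some partial depth. For any target $h^\star\in \HS$, the labels induced by $h^\star$ make $\Dist$ realizable. Then I would invoke a standard minimax reduction: put a uniform prior on target hypotheses $h^\star\in \HS$; for any proper learner $\TheAlg$ using $n$ samples, a coupon-collector calculation shows that if $n = o\!\big(\frac{\vcd}{\epsilon}\log\frac{1}{\epsilon}\big)$, then with constant probability a constant fraction of the $\Theta(\vcd/\epsilon)$ leaves remain unobserved. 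On each such unobserved leaf the proper output of $\TheAlg$ agrees with the random $h^\star$ with conditional probability $\tfrac12$, so the expected risk is $\Omega(\epsilon)$; an averaging/Markov step then converts this into a failure event of probability at least $\delta$, yielding the claim. The $\mathrm{Log}(1/\delta)$ term arises from a separate standard argument (concentration of samples on a single informative point).

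The main obstacle is the combinatorial construction of $\HS$: it must be rich enough to force $\Theta(\log(1/\epsilon))$-deep commitments along each of $\vcd$ independent directions while retaining $\VC(\HS)=\Theta(\vcd)$, and the hard distribution must be coupled tightly to this structure so that unseen leaves translate directly into $\epsilon$-error for properly constrained outputs. Once the gadget is in place, the probabilistic step is a routine averaging argument; the whole separation from the improper rate hinges on the fact that a non-proper learner can hedge by averaging over consistent members of $\HS$, a move that is unavailable when the output is constrained to lie in $\HS$.
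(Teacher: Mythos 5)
First, a point of reference: the paper does not prove this statement --- it is imported verbatim from \citet{bousquet2020proper} (their Thm.~11) --- so there is no in-paper proof to compare against. That said, your plan is recognizably the same mechanism that drives the cited proof. The essence is a class in which a proper learner is forced to commit to one of $\Theta(1/\epsilon)$ mutually exclusive ``petals'' per block, so that pinning down the right petal is a coupon-collector problem costing an extra $\log(1/\epsilon)$ factor, whereas an improper learner could hedge with the deliberately excluded all-zero pattern. In the language of \citet{bousquet2020proper}, each of your blocks is a \emph{hollow star}: the all-zero labelling of the petals is unrealizable while every one-flip neighbour is realizable. Your binary trees are decorative --- restricting the path-indicators to the leaves gives exactly ``singletons without the zero function,'' which already has VC dimension $1$ per block, and the $\vcd$-fold product over disjoint domains has VC dimension $\vcd$. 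The one structural property doing all the work, namely that \emph{no} hypothesis in $\HS$ labels all petals of a block $0$, is only implicit in your parameterization; a complete writeup must isolate and use it, since adding that single hypothesis back destroys the lower bound (the proper learner ``predict zero until you see a one'' then achieves $O(\frac{1}{\epsilon}(\vcd+\mathrm{Log}\frac{1}{\delta}))$).

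Two concrete gaps. (i) The step ``on each unobserved leaf the proper output agrees with the random $h^\star$ with conditional probability $\tfrac12$'' is false: conditioned on the sample, an unobserved petal of a block with $m$ surviving candidates is the true one with probability $1/m$, not $1/2$, and the naive $1/2$-agreement no-free-lunch computation only recovers the $\Omega(\vcd/\epsilon)$ term without the logarithm. The correct accounting is that a proper learner committing to one of $m\ge 2$ surviving petals disagrees with $h^\star$ on a symmetric difference of two petals (mass $\Theta(\epsilon/\vcd)$) with probability $1-1/m\ge \tfrac12$, so the expected risk is $\Omega(\epsilon)$ as long as most blocks retain at least two unobserved petals; reducing every block to at most one survivor is what costs $\Omega(\frac{1}{\epsilon}\log\frac{1}{\epsilon})$ samples per unit of block mass. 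The quantitative tool you need is precisely \cref{lemma:coupon-collector}, which the paper already imports from \citet{bousquet2020proper} for a different purpose. (ii) Your class depends on $\epsilon$: the trees have depth $\Theta(\log\frac{1}{\epsilon})$. The quantifier order in the printed statement tolerates this, but the paper's application in the proof of \cref{mainthm} fixes $\HS$ once and then sends $\epsilon_i \downto 0$, so a single class must witness the bound for \emph{all} $\epsilon$ simultaneously. Replace each finite block by an infinite hollow star (singletons over $\Nats$ with the zero function removed, i.e., a class of infinite star number in the sense of \cref{sec:star-number}) and let only the hard distribution depend on $\epsilon$; then both the statement and its downstream use go through.
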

We now present the main result:
for VC classes, we show that the existence of a learning algorithm with bounded CMI contradicts the lower bound on the sample complexity in \cref{thm:sample-complexity-lowerbound}. 
The proof can be found in \cref{pf:sample-complexity-contradiction}.

\begin{theorem}\label{mainthm}
\cref{conjecture:realizable-generalf} is false.
\end{theorem}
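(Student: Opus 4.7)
The plan is to argue by contradiction: were \cref{conjecture:realizable-generalf} true, the CMI-to-generalization machinery of \citeauthor{steinke2020reasoning} would upgrade the hypothesized learner into a proper PAC learner whose sample complexity violates the lower bound of \cref{thm:sample-complexity-lowerbound}. I would suppose $f$ and $c$ are as in \cref{conjecture:realizable-generalf}, fix $\vcd\in\Naturals$, let $\HS$ be the VC class of dimension $\vcd$ supplied by \cref{thm:sample-complexity-lowerbound}, and let $\TheAlg$ be the proper learner that \cref{conjecture:realizable-generalf} associates with $\HS$.

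For any realizable $\Dist$ and $n\geq\vcd$, averaging the pointwise bound $\EE\EmpRisk{s}{\Alg(s)}\leq c\vcd/n$ over $\SS\sim\Dist^n$ (using the tower rule and the fact that $\SS$ is realizable almost surely) gives $\EE\EmpRisk{\SS}{\Alg(\SS)}\leq c\vcd/n$. Combining this with $\SZB\leq f(\vcd)$ via the interpolating bound \eqref{eq:ege-interpolate} yields $\EE\Risk{\Dist}{\Alg(\SS)}\leq(2c\vcd+3f(\vcd))/n$, which in the realizable case is also the expected excess risk. Markov's inequality then gives $\Pr(\Risk{\Dist}{\Alg(\SS)}\geq\epsilon)\leq(2c\vcd+3f(\vcd))/(n\epsilon)$, so fixing $\delta=1/200$ and taking $n=\lceil 200(2c\vcd+3f(\vcd))/\epsilon\rceil$ turns $\Alg$ into a proper $(\epsilon,\delta)$-PAC learner for $\HS$, i.e.\ $\samplecomplex{\epsilon}{1/200}=O((\vcd+f(\vcd))/\epsilon)$.

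This would contradict the lower bound $\samplecomplex{\epsilon}{1/200}\geq\tilde{c}\vcd\log(1/\epsilon)/\epsilon$ from \cref{thm:sample-complexity-lowerbound}: since $\vcd+f(\vcd)$ is constant in $\epsilon$ while $\vcd\log(1/\epsilon)$ diverges as $\epsilon\to 0$, taking $\epsilon$ small enough (and $\vcd$ large enough for the lower bound to engage) yields the contradiction. The main subtlety, and the place I expect pushback, is the well-known lossiness of Markov's inequality in the $\delta$ dependence---flagged in the remark after \cref{thm:everyerm-log}---which would ordinarily preclude extracting a tight PAC sample complexity from a CMI expectation bound. The argument survives because \cref{thm:sample-complexity-lowerbound} is already strong in its $\epsilon$ dependence (the $\log(1/\epsilon)$ factor), so pinning $\delta$ to any fixed constant below $1/100$ suffices: no $\epsilon$-independent CMI bound $f(\vcd)$ can manufacture the $\log(1/\epsilon)$ factor that proper PAC learning of these classes provably requires.
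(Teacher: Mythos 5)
Your proposal is correct and follows essentially the same route as the paper's proof: assume the statement holds for the class from \cref{thm:sample-complexity-lowerbound}, convert the CMI and empirical-risk hypotheses into an expected-risk bound via \eqref{eq:ege-interpolate}, apply Markov's inequality at a fixed $\delta$, and let $\epsilon \to 0$ so that the missing $\log(1/\epsilon)$ factor forces a contradiction. Your closing observation—that the lossy $1/\delta$ dependence from Markov is harmless here because the contradiction is driven entirely by the $\epsilon$ dependence—is exactly the point on which the paper's argument also rests.
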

\begin{remark}
Consider a modified \cref{conjecture:realizable-generalf}, seeking a proper learner with bounded eCMI instead. We can show that this modified statement is also false. 
\end{remark}

\subsection{VC Classes with Finite Star Number}
\label{sec:star-number}

\cref{mainthm} states that it is not possible to find a proper learning algorithm with bounded CMI for \emph{every} VC class. 
Note that this limitation does not imply a failure of the CMI framework for characterizing the expected excess risk of learning VC classes. Instead, the impossibility can be attributed to an inherent limitation of proper learning algorithms, since there exist VC classes such that no proper learning algorithm $\Alg$ satisfies $\EE[\Risk{\Dist}{\Alg}]=O({1}/{n})$ \citep{bousquet2020proper}. 
In this section, we consider a family of VC classes for which we \emph{can} show the existence of a learner with bounded CMI. We begin with some definitions.

Two sequences $((x_1,y_1),\dots,(x_n,y_n))$ and $((x'_1,y'_1),\dots,(x'_n,y'_n))$ are \defn{neighbours} if $x_i=x'_i$ for all $i \in \range{n}$, and $y_i = y'_i$ for all but exactly one $i \in \range{n}$. 
Fix any concept class $\hypothesisclass \subseteq \outspace^{\inspace}$. \defn{Star number of $\hypothesisclass$} \citep[][Def.~2]{hanneke2015minimax},  denoted by $\starnum$,
is the largest integer $n$ such that there exists a realizable $s \in (\inspace \times \outspace)^n$,  and every neighbour of $s$ is realizable by $\hypothesisclass$. If no such largest integer $n$ exists, then $\starnum=\infty$.  
\citet[Sec. 4.1]{hanneke2015minimax} calculate the star
number of some common concept classes. 
It is straightforward to see that $\vcd \leq \starnum$. For any $n \in \Naturals$, and  $s=((x_1,y_1),\dots,(x_n,y_n))\in (\inspace \times \outspace)^n$, define a \emph{version space} of $s$ with respect to $\hypothesisclass$  as  $\versionspace{s}=\{ h\in \hypothesisclass: \EmpRisk{s}{h}=0 \}$, a set of classifiers that are consistent with $s$.

\subsubsection{Star Number, Version Space, and CMI}

Fix any concept class $\hypothesisclass $, %
and assume that, after observing a training sample $\SS$, 
we want to output the version space $\versionspace{\SS}$, i.e., the set of all classifiers consistent with $\SS$. 
We are interested in the following question: for which concept classes does the version space carry little information about the training samples conditioned on the supersample? More precisely, for which classes is $\minf{\versionspace{\SS};U\vert Z}=O(1)$? 
Note that bounding the ``CMI'' of the version space provides a bound on the CMI of a broad class of algorithms that choose a particular ERM based solely on the version space, potentially under further constraints, such as privacy, fairness, etc.

In this section, we give a complete characterization of when $\minf{\versionspace{\SS};U\vert Z}=O(1)$, and show that it is possible if and only if $\HS$ has finite star number. 
In particular, given a class with infinite star number, 
we demonstrate that  $\minf{\versionspace{\SS};U\vert Z}=\Omega(n)$. 
We begin with an upper bound, whose proof can be found in \cref{pf:publish-versi}. 
\begin{theorem}
\label{thm:publish-versionspace}
Let $n \in \Naturals$, $\hypothesisclass$ be a concept class with  star number $\starnum$, and  $\Dist$ be a realizable distribution. Let $\supersam$, $U$, and  $\SS$ be as defined in the beginning of this section.  Then for every $n\geq \starnum$, we have $\minf{\versionspace{\SS};U\vert Z} \leq 2 \starnum \log 2 $. 
\end{theorem}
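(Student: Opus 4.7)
The plan is to mirror the information-theoretic skeleton of the proof of \cref{thm:stable-compression-cmi} and replace its stability-of-compression input with a star-number input. Let $W = \versionspace{\SS}$ and, exactly as in the compression proof, combine $\minf{W;U|\supersam} = n\log 2 - \entr{U|W,\supersam}$ with \cref{lemma:chainrule-lowerbound} to reduce to lower bounding each $\entr{U_i | W, \supersam, U_{-i}}$. On the event $E_i = \{\versionspace{\Sr{i}{0}} = \versionspace{\Sr{i}{1}}\}$, the version space $W$ is measurable with respect to $(\supersam, U_{-i})$, so $U_i$ is conditionally independent of $W$ on $E_i$ and $\entr{U_i | W, \supersam, U_{-i}} \geq \log 2 \cdot \Pr(E_i)$. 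Summing over $i$ reduces the theorem to showing $\sum_{i=1}^n \Pr(\versionspace{\Sr{i}{0}} \neq \versionspace{\Sr{i}{1}}) \leq 2\starnum$.

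Next I will pass to a disagreement region. Fix a realizing hypothesis $h^\star \in \hypothesisclass$, and for any realizable $s$ set $\Delta(s) \defas \{x \in \inspace \st \exists\, h \in \versionspace{s},\ h(x) \neq h^\star(x)\}$. If neither $X_{0,i}$ nor $X_{1,i}$ lies in $\Delta(S_{-i}(U))$, where $S_{-i}(U) \defas (Z_{U_j,j})_{j \neq i}$, then every $h \in \versionspace{S_{-i}(U)}$ agrees with $h^\star$ on both of $X_{0,i}, X_{1,i}$, and therefore $\versionspace{\Sr{i}{0}} = \versionspace{S_{-i}(U)} = \versionspace{\Sr{i}{1}}$. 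Since $S_{-i}(U)$ is an i.i.d.\ $(n-1)$-sample from $\Dist$ that is independent of the $i$-th column of $\supersam$, a union bound and exchangeability across $i$ yield
\*[
\sum_{i=1}^n \Pr(\versionspace{\Sr{i}{0}} \neq \versionspace{\Sr{i}{1}}) \leq 2 n \,\Pr(X \in \Delta(S_{n-1})),
\]
where $S_{n-1} \sim \Dist^{n-1}$ and $X$ is an independent fresh draw from the marginal $\Dist_X$. It will therefore suffice to establish the pointwise bound $\Pr(X \in \Delta(S_{n-1})) \leq \starnum/n$.

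The final step is a leave-one-out / exchangeability argument driven by the star number. Setting $S_n \defas S_{n-1} \cup \{(X, h^\star(X))\}$ and calling $t \in S_n$ \emph{essential} if $\versionspace{S_n \setminus \{t\}} \supsetneq \versionspace{S_n}$, one checks from the definitions that $X \in \Delta(S_{n-1})$ holds exactly when $(X, h^\star(X))$ is essential in $S_n$; by exchangeability of $S_n$ the probability in question therefore equals $n^{-1}\,\EE[|E(S_n)|]$, where $E(S_n) \subseteq S_n$ denotes the essential subset. The crux, and where the star number enters, is the purely combinatorial bound $|E(T)| \leq \starnum$ valid for every realizable multiset $T$: for each $t = (x_t,y_t) \in E(T)$ the witness $h_t \in \hypothesisclass$ that is consistent with $T \setminus \{t\}$ and satisfies $h_t(x_t) \neq y_t$ certifies, after restricting to $E(T)$, that the single-label-flip neighbour of $E(T)$ at $t$ is realizable; since $E(T)$ is itself realizable by $h^\star$, it is a star set, so $|E(T)| \leq \starnum$ by the definition of the star number. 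Chaining the three displayed bounds then gives $\minf{W;U|\supersam} \leq 2\starnum \log 2$ as claimed. The main obstacle is the middle step: the event $\{\versionspace{\Sr{i}{0}} \neq \versionspace{\Sr{i}{1}}\}$ depends on $U$ in a non-trivial way, and the right (lossy) reduction, via the disagreement region, must be identified to expose the clean leave-one-out structure on which the star-number argument operates; this reduction is what produces the multiplicative constant $2$ in the final bound.
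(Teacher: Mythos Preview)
Your proposal is correct and follows essentially the same route as the paper: the same entropy decomposition via \cref{lemma:chainrule-lowerbound}, the same reduction through the disagreement region (your $\Delta(s)$ coincides a.s.\ with the paper's $\disag{\versionspace{s}}$ since $h^\star\in\versionspace{s}$), and the same leave-one-out exchangeability step. The only cosmetic difference is that you prove $|E(T)|\le\starnum$ directly via a star-set argument, whereas the paper obtains the equivalent bound by invoking the empirical teaching dimension and the known fact $\teachdim_n\le\starnum$ from \citet{hanneke2016refined}.
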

We can use the data processing inequality and \cref{thm:publish-versionspace} to obtain the following: 
\begin{corollary}
\label{thm:etd}
Let $\hypothesisclass$ be a concept class with the star  number $\starnum$.  Consider \emph{any} ERM algorithm $\Alg$ for which the Markov chain $\SS-\versionspace{\SS}-\Alg(\SS)$ holds; in other words, the output of the algorithm and the training set are conditionally independent given the version space. Then, for any such an algorithm, for every $n\geq \starnum$, and every realizable distribution $\Dist$, we have 
$ \SZB  \leq 2 \starnum \log 2 .$
\end{corollary}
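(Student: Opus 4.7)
The plan is to derive this as an immediate consequence of \cref{thm:publish-versionspace} via the conditional data processing inequality. The key observation is that the version space $\versionspace{\SS}$ is a deterministic function of $\SS$ and the fixed class $\hypothesisclass$, so the only genuinely stochastic link in the assumed Markov chain is the step $\versionspace{\SS}\to \Alg(\SS)$, which encodes any additional (private) randomness the algorithm uses to pick an ERM from the version space.

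First, I would fix the super sample $Z$ and the selection bits $U$ as in \cref{subsec:cmi-defs}, so that $\SS = Z_U$ is a deterministic function of $(Z,U)$. Since $\versionspace{\SS}$ is a function of $\SS$ (hence of $(Z,U)$), and $\Alg(\SS)$ depends on $\SS$ only through $\versionspace{\SS}$ by hypothesis, I claim that conditionally on $Z$, the random variables $U$, $\versionspace{\SS}$, and $\Alg(\SS)$ form a Markov chain $U - \versionspace{\SS} - \Alg(\SS)$. This is because, conditionally on $Z$, knowing $\versionspace{\SS}$ renders $\Alg(\SS)$ independent of $\SS$, and hence of $U$.

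Next, I would apply the data processing inequality to this conditional Markov chain, which yields
\[
\dminf{Z}{\Alg(\SS);U}\ \leq\ \dminf{Z}{\versionspace{\SS};U}\qquad \text{a.s.}
\]
Taking expectations over $Z$ gives $\minf{\Alg(\SS);U\mid Z}\leq \minf{\versionspace{\SS};U\mid Z}$. Finally, invoking \cref{thm:publish-versionspace} bounds the right-hand side by $2\starnum\log 2$ whenever $n\geq \starnum$, which is exactly the claim.

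There is essentially no obstacle here once the Markov structure is phrased correctly; the only subtlety worth stating carefully is the conditional form of the data processing inequality (applied pointwise in $Z$ and then integrated), together with the fact that the version space really is a deterministic function of the training sample, so no extra randomness is introduced between $\SS$ and $\versionspace{\SS}$. All the heavy lifting (namely, the $\starnum$-dependent bound) is already supplied by \cref{thm:publish-versionspace}.
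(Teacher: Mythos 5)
Your proposal is correct and is exactly the paper's argument: the corollary is obtained by applying the (conditional) data processing inequality to the Markov chain $U-\versionspace{\SS}-\Alg(\SS)$ given $Z$, and then invoking \cref{thm:publish-versionspace} to bound $\minf{\versionspace{\SS};U\vert Z}$ by $2\starnum\log 2$. Your added care about the conditional form of the DPI and the determinism of $\SS\mapsto\versionspace{\SS}$ is a reasonable elaboration of the same route.
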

In \cref{thm:etd},  by assuming the Markov structure $\SS-\versionspace{\SS}-\Alg(\SS)$  we restrict the information of the ERM algorithm $\Alg(\SS)$. One might try to extend our result in \cref{thm:etd} such that it holds for \emph{any} ERM without any constraints. However, for the class of one-dimensional threshold over $\Reals$, whose star number is two, one can construct an ERM  with maximal CMI \citep[][Sec.~4.3]{steinke2020reasoning}. Therefore, the Markov chain assumption cannot be removed.  
The next theorem shows $\starnum<\infty$ is a necessary condition, for otherwise, there exist learning scenarios under which we cannot output the version space, even with merely sublinear CMI.
\begin{theorem}
\label{thm:conv-publish-versionspace}
For every $n\in \Naturals$, $n\geq 2 $ and for every concept class $\HS$ with star number $\starnum$ with $\starnum\geq 2$  over input space $\inspace$, there exists a realizable data distribution $\Dist$ on $\inspace \times \outspace$ such that 
$\minf{\versionspace{\SS};U\vert Z} = \Omega(\min\{\starnum,n\})$.
\end{theorem}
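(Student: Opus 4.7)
The plan is to construct, for each $n\ge 2$ and each class $\HS$ with star number $\starnum\ge 2$, a realizable distribution $\Dist$ supported on a star set of $\HS$, and then to lower-bound $\minf{\versionspace{\SS};U\vert \supersam}$ by extracting $\Omega(\min\{\starnum,n\})$ bits of information about $U$ from the version space via a carefully chosen family of ``test hypotheses''.

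Set $m=\min\{\starnum,n\}$. By the definition of the star number, there exists a sequence $((x_1,y_1),\dots,(x_m,y_m))$ realizable by some center hypothesis $h_0\in\HS$, together with, for each $i\in \range{m}$, a flipped hypothesis $h_i\in\HS$ satisfying $h_i(x_i)=1-y_i$ and $h_i(x_j)=y_j$ for all $j\ne i$. I would take $\Dist$ to put mass $1/(2n)$ on each $(x_i,y_i)$ for $i=2,\dots,m$ and the remaining mass on $(x_1,y_1)$; this is realizable by $h_0$. The rate $1/(2n)$ is calibrated so that each star point $x_i$ (for $i\ge 2$) appears a Poisson-like number of times in the supersample $\supersam$ of size $2n$, with constant probability of appearing \emph{exactly} once.

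For the lower bound I would define the test indicators $B_i=\indic{h_i\notin \versionspace{\SS}}$ for $i\in\range{m}$. Since $h_i$ agrees with the center labeling except at $x_i$, one has $B_i=1$ iff $x_i$ appears as an input in $\SS$. Hence $B=(B_1,\dots,B_m)$ is a deterministic function of $\versionspace{\SS}$ and also of $(U,\supersam)$, so by the data processing inequality,
\*[
\minf{\versionspace{\SS};U\vert \supersam}\;\ge\; \minf{B;U\vert \supersam} \;=\; \entr{B\vert \supersam}.
\]

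The hard part is bounding $\entr{B\vert \supersam}$ from below. For a realization $\supersam=z$, let $I_1\subseteq\{2,\dots,m\}$ be the indices $i$ such that $x_i$ appears exactly once in $z$, at position $(b_i,j_i)$, and set $J_1=\{j_i:i\in I_1\}$. For $i\in I_1$, $B_i=\indic{U_{j_i}=b_i}$ depends only on $U_{j_i}$; since at most two star points can sit in the two cells of any column of $z$, we have $|J_1|\ge |I_1|/2$, and the induced map $(U_j)_{j\in J_1}\mapsto (B_i)_{i\in I_1}$ is injective given $z$, so $\entr{B\vert \supersam=z}\ge |J_1|\log 2 \ge |I_1(z)|\log 2/2$. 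To finish, note that for each $i\ge 2$ the count $T_i$ of occurrences of $x_i$ in $\supersam$ is $\binomial(2n,1/(2n))$, so $\Pr(T_i=1)=(1-1/(2n))^{2n-1}\ge e^{-1}$; summing gives $\EE|I_1|\ge (m-1)/e$, yielding $\minf{\versionspace{\SS};U\vert \supersam}\ge (m-1)\log 2/(2e)=\Omega(\min\{\starnum,n\})$. The main obstacle is identifying the right family of test hypotheses; once the $h_i$'s are in hand, the entropy accounting is routine bookkeeping using the exchangeability of $\supersam$.
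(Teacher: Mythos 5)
Your proof is correct, and it shares the paper's overall architecture: the same star set $x_1,\dots,x_m$ with center $h_0$ and flipped hypotheses $h_i$, a distribution spreading $\Theta(1/n)$ mass over the star points with $x_1$ absorbing the rest, and the same key observation that $h_i\in\versionspace{\SS}$ if and only if $x_i$ does not appear among the training inputs. Where you genuinely diverge is in the counting. The paper bounds $\minf{\versionspace{\SS};U\vert \supersam}=n\log 2-\entr{U\vert \supersam,\versionspace{\SS}}$ from below by the expected number of columns whose \emph{ghost} point is absent from the entire training sample; establishing that this count is $\Omega(m)$ requires a coupon-collector lemma (a constant fraction of the star points are missed by the $n$ training draws) followed by a conditional binomial argument on where the ghost points land, plus a Markov-inequality step to control the number of draws hitting the star set. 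You instead lower-bound via $\minf{\versionspace{\SS};U\vert\supersam}\ge \entr{B\vert\supersam}$ for the membership vector $B_i=\indic{h_i\notin\versionspace{\SS}}$, and harvest a clean uniform bit $U_{j_i}$ from every star point that appears \emph{exactly once} in the supersample; with mass $1/(2n)$ per point this happens with probability $(1-\tfrac1{2n})^{2n-1}\ge e^{-1}$, a one-line binomial computation. Your injectivity bookkeeping (each column hosts at most two singleton star points, and $(B_i)_{i\in I_1}$ determines $(U_j)_{j\in J_1}$) is sound, as is the reduction $\minf{B;U\vert\supersam}=\entr{B\vert\supersam}$ since $B$ is $\sigma(U,\supersam)$-measurable. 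The net effect is a more elementary and shorter probabilistic argument with an explicit constant $\frac{\log 2}{2e}$; what the paper's route buys in exchange is a reusable stability-style decomposition of $\entr{U\vert\supersam,\versionspace{\SS}}$ that parallels its upper-bound proof for finite star number, making the two directions of the characterization visibly dual. Both arguments yield $\Omega(m-1)=\Omega(\min\{\starnum,n\})$ since $m\ge 2$.
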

\begin{proof}[Proof sketch:]
Let $\inspace=[n]$
and consider the concept class $\HS = \{ h_0,h_1,\dots,h_n : \inspace \to \outspace \}$,
where $h_0(x)=0$ is the zero function and $h_t(x)=\indic{x=t}$, for $t \in [n]$, are point functions.
It is easy to see that this concept class has star number $n$ on $\inspace$. 
Let $\Dist$ correspond to the uniform distribution on $\inspace$ and target function $h_0$.
Consider the bijection between $\HS$ and $\{0,1,\dots,n\} \supseteq \inspace$.
For every training sequence, the version space contains 0 and every point in $\inspace$ not observed in $\SS=Z_U$. The key observation is that, in each column of $Z$, one point was \emph{not} selected for training, and so each column contains zero or one points in the version space. Whenever there is one point, the value of $U_i$ is revealed for that column. We show that the number of columns with this property is a lower bound on $\minf{\versionspace{\SS};U\vert Z}$. A coupon collector’s argument yields a lower bound the number of such columns. The formal proof can be found in
\cref{pf:conv-publish-versionspace}.
\end{proof}

\subsubsection{An ERM whose CMI is logarithmic in  star number}
In the next theorem, we show that there exists an ERM for learning VC classes with a finite star number for which the CMI is upper bounded by a constant and its dependence on star number is logarithmic. The proof is provided in \cref{pf:starnumber-log}.
\begin{theorem}
\label{thm:starnumber-log}
Let $\hypothesisclass$ be a concept class with VC dimension $\vcd$ and star number $\starnum$. Then, there exists an ERM $\Alg$ for learning $\hypothesisclass$ such that for every $n \geq \starnum$ and for every realizable distribution $\Dist$, we have $\SZB=O\big(\vcd \log ( \starnum /\vcd  )\big)$.
\end{theorem}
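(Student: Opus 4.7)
The plan is to construct an ERM $\Alg$ whose output, conditioned on the supersample $\supersam$, takes at most $(e\starnum/\vcd)^\vcd$ distinct values as $U$ ranges over $\{0,1\}^n$. Once this is established, the CMI bound follows from the standard inequality
\[
  \SZB \;=\; \minf{\Alg(\SS);U\mid \supersam} \;\leq\; \entr{\Alg(\SS)\mid \supersam} \;\leq\; \log\!\rbra[0]{(e\starnum/\vcd)^\vcd} \;=\; O\!\rbra[0]{\vcd\log(\starnum/\vcd)},
\]
which in particular does not grow with $n$. Compared with the proof of \cref{thm:sz-thm-proper}, which bounds $\entr{\Alg(\SS)\mid\supersam}$ by the logarithm of $|\HS_{\mid \supersam}| \le (2en/\vcd)^\vcd$ and hence picks up a spurious $\log n$ factor, the goal is to replace $\supersam$ (of size $2n$) by a canonical ``core'' subsample of size at most $\starnum$ before invoking the Sauer--Shelah lemma.

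The algorithm I would design proceeds as follows. First, define, as a deterministic function of $\supersam$, a core $Y(\supersam) \subseteq \supersam$ of cardinality at most $\starnum$ such that restriction to $Y(\supersam)$ is injective on a canonical set of representatives of all labelings in $\HS_{\mid \supersam}$ that could conceivably be produced by an ERM from some realizable subset of $\supersam$. The algorithm $\Alg$ then takes $\SS = \supersam_U$, considers $\versionspace{\SS}$, and outputs the hypothesis whose $Y(\supersam)$-pattern is minimal under a fixed ordering on $\HS_{\mid Y(\supersam)}$, canonically lifted back to an element of $\HS$ through a pre-fixed enumeration. By construction the output is determined by its $Y(\supersam)$-pattern, so the number of distinct values of $\Alg(\SS)$ for varying $U$ is at most $|\HS_{\mid Y(\supersam)}| \le \sum_{i\le \vcd}\binom{\starnum}{i} \le (e\starnum/\vcd)^\vcd$ by Sauer--Shelah. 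The algorithm is a genuine ERM because under realizability $\versionspace{\SS}$ is nonempty and the selected hypothesis lies in it.

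The hard part is exhibiting such a core $Y(\supersam)$ of size at most $\starnum$. My approach would exploit the defining property of the star number: every realizable sequence of length exceeding $\starnum$ must contain a \emph{rigid} position whose label is forced because no single-bit neighbor is realizable. Iteratively pruning rigid positions of $\supersam$ (in a canonical order, taken relative to all realizable labelings still consistent after each pruning) while preserving the set of realizable restrictions should terminate at a subset of size at most $\starnum$. Verifying that pruning preserves enough separation among the patterns that an ERM can output---essentially an enhanced Sauer--Shelah argument that trades $2n$ for $\starnum$---is the combinatorial heart of the proof. If this direct core construction turns out to be unwieldy, a secondary route is to recast the recursive proof of \cref{thm:sz-thm-proper} by applying the chain-rule lower bound from \cref{lemma:chainrule-lowerbound} only at star-configurations within $\supersam$ (whose count is bounded by $\starnum$), mirroring the counting behind \cref{thm:publish-versionspace}. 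Either route reduces to the same final step of converting the bounded effective-output cardinality into the claimed CMI bound via the entropy inequality displayed above.
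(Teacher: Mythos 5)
There is a genuine gap at what you yourself identify as the combinatorial heart of the argument: the core $Y(\supersam)$ you need does not exist in general. The rigid-position pruning you sketch is sound for a \emph{single} realizable sequence---it is exactly the empirical-teaching-set argument, and it yields, for each fixed $U$, a subset of $\supersam_U$ of size at most $\starnum$ determining $\versionspace{\supersam_U}$. But your entropy bound requires a single, $U$-independent subset of $\supersam$ of size at most $\starnum$ whose induced pattern separates the outputs for all $2^n$ choices of $U$ simultaneously; the per-$U$ teaching sets move around with $U$, and their union is not controlled by $\starnum$. Concretely, take thresholds on $\Reals$ ($\vcd=1$, $\starnum=2$), a supersample whose $2n$ points are all labelled $0$ by the target, with column $j$ containing the points $j$ and $n+j$, and the canonical ERM that outputs the leftmost consistent threshold. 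Its output is $\max_j X_{U_j,j}$, which ranges over the $n+1$ values $\{n,n+1,\dots,2n\}$ as $U$ varies, and these correspond to $n+1$ distinct labelings of $\supersam$; no two-point subset $Y$ can separate them since $\lvert \HS_{|Y}\rvert\le 4$. So the conditional support of $\Alg(\SS)$ given $\supersam$ can have $\Omega(n)$ elements where your bound would cap it at a constant. (The conditional \emph{entropy} is still $O(1)$ in this example, but only because the conditional law of the maximum concentrates---a probabilistic fact that no support-cardinality/Sauer--Shelah argument can see.) Your fallback route, applying \cref{lemma:chainrule-lowerbound} ``only at star-configurations,'' is too underspecified to evaluate.

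The paper's proof goes through precisely the concentration phenomenon your approach cannot capture. It takes $\Alg(\SS)$ to be the least element of $\versionspace{\SS}$ under a fixed well-ordering of $\HS$, uses \cref{lemma:chainrule-lowerbound} to reduce $\SZB$ to $n\log 2$ times the probability that swapping the two entries of a single column changes the output, and shows via the least-element property that this swap leaves the output unchanged whenever each of the two classifiers correctly labels the other's swapped-in point; this yields $\SZB \le 2n\log(2)\,\EE[\Risk{\Dist}{\Alg(\SS)}]$. The star number then enters only through the external result (Hanneke, 2016, Cor.~12) that \emph{every} consistent proper learner of a class with VC dimension $\vcd$ and star number $\starnum$ has expected risk $O\bigl(\tfrac{\vcd}{n}\log(\min\{\starnum,n\}/\vcd)\bigr)$. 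To rescue your plan you would need either to exhibit an ERM whose output genuinely is a function of an $O(\starnum)$-sized, $U$-independent core (your sketch does not construct one, and the thresholds example shows the natural canonical choices fail), or to switch to an averaging argument of the paper's type, at which point you would also need the uniform risk bound for finite star number as an input.
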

Note that \cref{thm:starnumber-log} shows the existence of a specific ERM with constant CMI, whereas in \cref{thm:etd} we show a broad class of ERMs has bounded CMI. 
\section{Universality of eCMI and Improper Learning of VC Classes}

The eCMI, introduced in \cref{def:ecmi}, is an appropriate information-theoretic notion for analyzing learning algorithms when there is no natural parameterization of the set of possible predictors, such as for improper or transductive algorithms. In this section, we show that eCMI is \emph{universal} in the realizable setting. 
Then, we show that the CMI framework can be used to obtain a near-optimal bound on the expected excess risk of any algorithm with a leave-one-out error guarantee. As an application, we study CMI of the classical \emph{one-inclusion graph prediction} algorithm, which was first  proposed by \citet{haussler1994predicting} as an optimal improper learner for VC classes. The next theorem is the main result of this section, whose proof can be found in \cref{pf:ecmi-every-consistent}.
\begin{theorem}
\label{thm:ecmi-consistent}
Let $n \geq 2 \in \Naturals$, let $\Alg$ be a learning algorithm, and let $\Dist$ be a distribution on $\dataspace$. Assume with probability one $\EmpRisk{\SS}{\Alg(\SS)}=0$.  Then,  
\[
\label{eq:ecmi-any-consistent}
2/3 \Risk{\Dist}{\Alg} \stackrel{(\text{a})}{\leq} \eSZB/n  \stackrel{(b)}{\leq}  \binaryentr{\Risk{\Dist}{\Alg}} + \Risk{\Dist}{\Alg} \log(2),
\]    
where $\binaryentr{\cdot}$ is the binary entropy function, and $\Risk{\Dist}{\Alg}=\EE[\Risk{\Dist}{\Alg(\SS)}]$. 
\end{theorem}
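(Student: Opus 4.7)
The plan is to handle the two inequalities separately. The lower bound (a) is essentially immediate from the generalization bound for consistent learners already established in the paper, so the real work lies in the upper bound (b), which quantifies how little information $L$ can reveal about $U$ when the algorithm makes at most $n\epsilon$ mistakes in expectation on the supersample.

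For inequality (a), I would simply invoke \cref{eq:ege-fastrate-ecmi}, which states that for any consistent learner, $\EE\Risk{\Dist}{\Alg(\SS)} \le 1.5\,\eSZB/n$. Rearranging gives $\tfrac{2}{3}\Risk{\Dist}{\Alg} \le \eSZB/n$, which is exactly (a). Nothing more is needed here.

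For the upper bound (b), the idea is to control $H(L\mid\supersam)$ and then use $I(L;U\mid\supersam)\le H(L\mid\supersam)$. The key structural observation is that, by consistency, $L_{U_j,j} = \loss(\Alg(\SS),Z_{U_j,j}) = 0$ almost surely for every $j\in[n]$, so in each column of $L$ at most one entry equals $1$. I would introduce the column mistake indicator $J_j \defas L_{0,j} + L_{1,j} \in\{0,1\}$ (which is $J_j = L_{1-U_j,j}$ by consistency). Because $Z_{1-U_j,j}$ is independent of $\SS$ (it is iid to the columns used in training and does not participate in $\SS$), one gets $\EE J_j = \Risk{\Dist}{\Alg} = \epsilon$, so each $J_j$ is marginally $\operatorname{Bernoulli}(\epsilon)$.

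The entropy decomposition then proceeds as follows. Since $J$ is a deterministic function of $L$,
\[
I(L;U\mid \supersam) \le H(L\mid \supersam) \le H(L,J\mid \supersam) = H(J\mid \supersam) + H(L\mid J,\supersam).
\]
For the first term, dropping the conditioning on $\supersam$ and subadditivity of entropy give $H(J\mid\supersam) \le \sum_{j=1}^n H(J_j) = n\,\binaryentr{\epsilon}$. For the second term, conditioned on $J$ each column of $L$ is fully determined if $J_j = 0$ (it is $(0,0)$) and takes one of two values $(0,1)$ or $(1,0)$ if $J_j = 1$; thus column-wise conditioning on $J_j$ gives $H(L_{\cdot,j}\mid J_j) \le \epsilon\log 2$, and summing yields $H(L\mid J,\supersam) \le n\epsilon\log 2$. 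Putting the pieces together produces $\eSZB/n \le \binaryentr{\epsilon} + \epsilon\log 2$, which is (b).

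The only mildly subtle step is verifying $\EE J_j = \Risk{\Dist}{\Alg}$: one conditions on $U_j$, uses consistency to kill the $U_j$-indexed entry, and uses independence of $Z_{1-U_j,j}$ from $\SS$ for the other entry. Beyond this, the proof is essentially a careful bookkeeping of entropies, and I do not expect any significant technical obstacle.
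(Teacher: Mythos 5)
Your proposal is correct and follows essentially the same route as the paper: part (a) is read off from \cref{eq:ege-fastrate-ecmi}, and part (b) bounds $\eSZB$ by the sum of per-column entropies of $L$, using consistency to restrict each column to $\{(0,0),(0,1),(1,0)\}$ and independence of the held-out point $Z_{1-U_j,j}$ from $\SS$ to identify the mistake probability as $\Risk{\Dist}{\Alg}$. Your decomposition through the column mistake indicator $J_j$ is just the grouping form of the same entropy computation the paper carries out directly on the three-point distribution $(1-\epsilon,\epsilon/2,\epsilon/2)$, so the two arguments coincide.
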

The inequality $(a)$ in \cref{eq:ecmi-any-consistent} implies that, if $\eSZB/n$  vanishes as $n$ diver2ges, then  $\Risk{\Dist}{\Alg}$ vanishes as well. The inequality $(b)$ is more interesting: it implies that, if $\Risk{\Dist}{\Alg}$  vanishes as $n$ diverges, then $\eSZB/n $  also vanishes.
\color{black}

Assume that a consistent algorithm $\TheAlg$ satisfies $\Risk{\Dist}{\Alg} = \theta / n$ for $\theta \in \Reals \geq 1 $. Then, it is straightforward to see from Direction $(b)$ in \cref{eq:ecmi-any-consistent} that  $\eSZB/ n = O(\theta \log(n))$. Also, for an algorithm with  $\Risk{\Dist}{\Alg} = \theta \log(n)/ n$ the upper bound in \cref{eq:ecmi-any-consistent} is given by $O(\theta (\log(n))^2)$. This observation suggests that our upper bound for  $\eSZB/ n$ in \cref{eq:ecmi-any-consistent} provides a bound on the expected excess risk which is sub-optimal by a $\log(n)$ factor in some interesting cases. 
\begin{remark}
Note that the result in \cref{thm:ecmi-consistent} \emph{does not} imply our results in former sections. In particular our results in \cref{thm:everyerm-log}, \cref{thm:stable-compression-cmi}, \cref{thm:etd}, and (later in) \cref{thm:singelton} show that CMI framework provides \emph{optimal} characterization of the expected excess risk in the considered scenarios.
\end{remark}
The following corollary summarizes our result for the consistent algorithms with a leave-one-out error guarantee. 
\begin{corollary}
\label{cor:loo-ecmi}
Let $n \in \Naturals$ and $\theta \in \Reals_{+}$, such that $n\geq 2\theta$. 
Let $\Alg$ be a consistent learning algorithm.
Let $\Dist$ be a distribution on $\dataspace$ and assume that, with probability one 
over a sequence $S = (Z_1,\dots,Z_{n+1})\dist\Dist^{n+1}$, 
we have $\frac{1}{n+1}\sum_{i=1}^{n+1}  \cEE{S}[\loss(\Alg(S_{-i}),Z_i)] \leq \frac{\theta}{n+1}$, where the expectation is taken only over the randomness in $\Alg$. Then, 
\*[
\eSZB \leq \theta \log( {(n+1)}/{\theta}) + 2\theta\log 2   .
\]    
\end{corollary}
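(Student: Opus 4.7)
The plan is to reduce the corollary to \cref{thm:ecmi-consistent}(b) by (i)~translating the leave-one-out hypothesis into a bound on the expected risk $\Risk{\Dist}{\Alg}$ via exchangeability of the i.i.d.\ sample, and then (ii)~evaluating the resulting binary-entropy expression at this risk.

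\emph{Step 1: from leave-one-out to expected risk.} I would first take unconditional expectations in the hypothesis. By Fubini, $\frac{1}{n+1}\sum_{i=1}^{n+1}\EE[\loss(\Alg(S_{-i}),Z_i)] \le \theta/(n+1)$. Since the coordinates of $S=(Z_1,\dots,Z_{n+1})$ are i.i.d., exchangeability forces each summand to equal the common value $\EE[\loss(\Alg(S_{-1}),Z_1)]$; using that $S_{-1}$ is an i.i.d.\ sample of size $n$ independent of $Z_1$, this common value equals $\Risk{\Dist}{\Alg}$. Hence $\Risk{\Dist}{\Alg}\le \theta/(n+1)$.

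\emph{Step 2: applying the universal bound.} Now apply \cref{thm:ecmi-consistent}(b) to obtain $\eSZB \le n[\binaryentr{r}+r\log 2]$ with $r=\Risk{\Dist}{\Alg}$. The hypothesis $n\ge 2\theta$ gives $r\le \theta/(n+1)<1/2$, and on the interval $[0,2/3]$ the map $r\mapsto \binaryentr{r}+r\log 2$ is monotone increasing (its derivative equals $\log(2(1-r)/r)>0$ there), so one may replace $r$ by $p:=\theta/(n+1)$ to get $\eSZB \le n[\binaryentr{p}+p\log 2]$. Writing $\binaryentr{p}+p\log 2 = p\log(2/p)+(1-p)\log(1/(1-p))$ and using $np\le\theta$, the leading piece contributes at most $\theta\log(2(n+1)/\theta) = \theta\log((n+1)/\theta)+\theta\log 2$, producing the dominant $\theta\log((n+1)/\theta)$ term of the target bound.

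\emph{Main obstacle.} The principal technical point is sharpening the constant on the lower-order $\theta$-scale term so that the two pieces combine to $2\theta\log 2$ rather than the $\theta(1+\log 2)$ one obtains from the textbook estimate $-(1-p)\log(1-p)\le p$. Closing this constant-level gap requires a more delicate comparison of $(1-p)\log(1/(1-p))$ against $p\log 2$ in the restricted regime $p\le \theta/(n+1)\le n/(2(n+1))<1/2$, exploiting the concavity of $-(1-p)\log(1-p)$ together with the slack from the factor $n/(n+1)$; no new idea beyond this careful inequality is needed to close the proof.
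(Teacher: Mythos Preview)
Your two-step reduction—exchangeability to obtain $\Risk{\Dist}{\Alg}\le\theta/(n+1)$, then \cref{thm:ecmi-consistent}(b) together with the monotonicity of $r\mapsto\binaryentr{r}+r\log 2$ on $[0,2/3]$—is exactly the paper's intended derivation; the corollary is stated without proof immediately after \cref{thm:ecmi-consistent} as a direct consequence.

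Where you go astray is in the final paragraph: you assert that the gap between $\theta(1+\log 2)$ and $2\theta\log 2$ can be closed by a ``more delicate comparison,'' but it cannot.  Setting $p=\theta/(n+1)$, the target inequality
\[
n\bigl[\binaryentr{p}+p\log 2\bigr]\ \le\ \theta\log\tfrac{n+1}{\theta}+2\theta\log 2
\]
rearranges (using $\theta=(n+1)p$) to $n\bigl[-(1-p)\log(1-p)-p\log 2\bigr]\le p\log(4/p)$.  The bracket is strictly positive on $(0,1/2)$ and behaves like $p(1-\log 2)$ as $p\to 0$, so the left side tends to $\theta(1-\log 2)>0$ while the right side tends to $0$ as $n\to\infty$ with $\theta$ fixed.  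Concretely, for $\theta=1$, $n=100$ (so $p=1/101$), one computes $n[\binaryentr{p}+p\log 2]\approx 6.24$ whereas the claimed bound is $\log 101+2\log 2\approx 6.00$.  Thus the constant $2\theta\log 2$ in the stated corollary is not obtainable from \cref{thm:ecmi-consistent}(b); it appears to be an error in the paper.  Indeed, in the appendix (the sentence justifying \cref{eq:lots-simple}) the paper asserts ``the inequality $-(1-x)\log(1-x)\le x\log 2$ for $x\in[0,1]$,'' which is false for small $x>0$ (at $x=0.1$ one has $0.0949>0.0693$), and that is precisely the estimate that would yield $2\theta\log 2$ here.

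Your Steps~1--2 already establish the corollary with the lower-order term $\theta(1+\log 2)$ in place of $2\theta\log 2$, via the valid inequality $-(1-p)\log(1-p)\le p$.  Do not spend effort trying to reach the paper's stated constant.
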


\newcommand{\probassignment}{f_{\oneinclusiongraph(\bar{X})}}
\subsection{The One-Inclusion Graph Prediction Strategy}
\citet{haussler1994predicting} proposed  an improper learning rule for learning VC classes based on the one-inclusion graph \citep{alon1987partitioning}. We  provide a description of this algorithm in \cref{app:description-oneinclusion}. The deterministic version of this prediction rule satisfies the following property. Let $\HS$ be a concept class with VC dimension $\vcd$. For every $n \in \Naturals$, $h \in \HS$, and $(x_1,\dots,x_{n+1})\in \inspace^{n+1}$, let $S=((x_1,h(x_1)),\dots,(x_{n+1},h(x_{n+1})))$. Then  $\frac{1}{n+1}\sum_{i=1}^{n+1}  \loss(\Alg(S_{-i}),(x_i,h(x_i))) \leq \frac{\vcd}{n+1}$.  A direct application of \cref{cor:loo-ecmi} gives the following results.
\begin{corollary}
\label{corr:one-inclusion-determinstic}
Let $\Alg$ denote the deterministic one-inclusion graph for learning class $\HS$ with VC dimension $\vcd$. Then, for every realizable distribution $\Dist$ and $n\geq 2\vcd$, we have  $\eSZB\leq \vcd \log( {(n+1)}/{\vcd}) + 2 \vcd \log2  $.
\end{corollary}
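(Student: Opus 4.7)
The plan is to derive \cref{corr:one-inclusion-determinstic} as a direct instantiation of \cref{cor:loo-ecmi} once we verify that the one-inclusion graph algorithm satisfies its leave-one-out hypothesis with parameter $\theta = \vcd$. The key input is the classical leave-one-out property recalled just before the corollary: for the deterministic one-inclusion graph algorithm $\Alg$, every class $\HS$ of VC dimension $\vcd$, every $h \in \HS$, and every $(x_1,\dots,x_{n+1}) \in \inspace^{n+1}$, setting $S = ((x_1,h(x_1)),\dots,(x_{n+1},h(x_{n+1})))$, we have
\*[
\frac{1}{n+1}\sum_{i=1}^{n+1} \loss(\Alg(S_{-i}),(x_i,h(x_i))) \leq \frac{\vcd}{n+1}.
\]

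First I would argue that the leave-one-out hypothesis of \cref{cor:loo-ecmi} holds. Since $\Dist$ is realizable by $\HS$, there exists $h^\star \in \HS$ such that $\Pr_{Z \sim \Dist}(Z = (X, h^\star(X))) = 1$. Hence, for a sample $S = (Z_1,\dots,Z_{n+1}) \sim \Dist^{n+1}$, almost surely every label equals $h^\star$ applied to the corresponding covariate. Applying the above deterministic inequality pointwise with $h = h^\star$ yields
\*[
\frac{1}{n+1}\sum_{i=1}^{n+1} \loss(\Alg(S_{-i}),Z_i) \leq \frac{\vcd}{n+1} \quad \text{a.s.}
\]
Since $\Alg$ is deterministic, the inner conditional expectation in \cref{cor:loo-ecmi} is trivially the value itself, so the hypothesis of \cref{cor:loo-ecmi} is satisfied with $\theta = \vcd$.

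Next I would check the side condition $n \geq 2\theta$, which is exactly the assumed $n \geq 2\vcd$ in the corollary statement, and that $\Alg$ is consistent on realizable samples (a standard property of the one-inclusion graph prediction strategy, which outputs a hypothesis agreeing with the training labels on their covariates). With these two points in hand, \cref{cor:loo-ecmi} applies directly and yields
\*[
\eSZB \;\leq\; \vcd \log\!\left(\frac{n+1}{\vcd}\right) + 2\vcd \log 2,
\]
which is the claimed bound.

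There is no real obstacle here: the entire content of the corollary is packaged into \cref{cor:loo-ecmi}, and the remaining work is simply checking the two hypotheses (realizability implying the pointwise leave-one-out bound in expectation, and consistency of the one-inclusion graph). The only modestly delicate point is making sure the realizability assumption lets us move from the deterministic pointwise leave-one-out inequality, which is stated for labels induced by some $h \in \HS$, to the probabilistic formulation required by \cref{cor:loo-ecmi}, but this is immediate from the existence of a realizing $h^\star$.
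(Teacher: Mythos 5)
Your proposal is correct and is exactly the paper's argument: the paper derives this corollary as a ``direct application'' of \cref{cor:loo-ecmi} with $\theta=\vcd$, using the deterministic leave-one-out property of the one-inclusion graph stated immediately beforehand. The points you verify explicitly (realizability yielding the almost-sure leave-one-out bound with $h=h^\star$, consistency of the prediction rule, and the side condition $n\ge 2\vcd$) are precisely the hypotheses the paper leaves implicit.
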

\begin{remark}
In \cref{thm:everyerm-log} we provide a bound on eCMI of any proper ERM. However, for improper learners, we can construct a consistent algorithm with maximal eCMI. 
For instance, consider $\inspace=[0,1]$, $\Dist_X=\unif{[0,1]}$, the concept class of threshold with target function $\targetfun(x)=\indic{x\geq 1/2}$. Consider a learning algorithm that gives the correct predictions on the points that are in the training set, and for a point that is not in the training set it always predicts one. One can show that eCMI of this consistent algorithm is $ \Omega(n)$. 
\end{remark}
\citet{haussler1994predicting} showed that the one-inclusion graph algorithm achieves $\EE \Risk{\Dist}{\Alg(\SS)}\leq \vcd / n$ for learning a class $\HS$ with VC dimension $\vcd$. \cref{corr:one-inclusion-determinstic} implies that $\eSZB = O(\vcd \log(n))$ for every \emph{deterministic} one-inclusion graph prediction rule. 
Combining this result with \cref{eq:ege-fastrate-ecmi} provides a bound on the excess risk which is suboptimal by a $\log n$ factor. In the next theorem, we show that, in at least one interesting special case, it is possible to remove
the logarithmic factor from eCMI by exploiting a \emph{randomized} one-inclusion graph prediction algorithm. 
\begin{theorem}
\label{thm:singelton}
Let $\HS$ denote the class of singletons (point functions) on $\inspace = \Reals$. There exists a randomized one-inclusion graph prediction rule $\Alg$ for learning class $\HS$ such that for every realizable distribution $\Dist$ and $n\geq 2$, we have  $\eSZB=O(1)$.
\end{theorem}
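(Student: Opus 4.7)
The idea is to exhibit an explicit randomized orientation of Haussler's one-inclusion graph for singletons and to bound the resulting evaluated CMI directly from the definition.

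Since singletons form a rigid VC-$1$ class, the one-inclusion graph on any $n+1$ distinct unlabeled points is a star: the all-zero labeling is adjacent to each of the $n+1$ one-hot labelings, and no other edges exist. I would orient each edge independently, pointing it toward the one-hot endpoint with probability $p := 1/(n+1)$ and toward the all-zero center otherwise. The induced algorithm $\Alg$ outputs a random predictor $\hat h$: if the training sample $\SS$ contains a label-$1$ point $(x^\star, 1)$, then $\hat h = h_{x^\star}$; otherwise $\SS$ is all-zero and $\hat h$ returns $0$ on every training input and an independent $\mathrm{Bernoulli}(p)$ on each new input. A direct computation shows that the expected leave-one-out error of this randomized orientation is at most $p = 1/(n+1)$, so it is a bona fide randomized one-inclusion graph prediction rule.

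To bound $\dminf{\supersam}{L;U}$, fix a realizable distribution $\Dist$ with target $h_t$, and first assume the $\XX$-marginal has no atoms, so the $2n$ entries of $\supersam$ are almost surely distinct. Let $T = \{(i,j) : Z_{i,j} = (t,1)\}$, $J^\star = \{j : \exists i\ (i,j)\in T\}$, $K = |J^\star|$, and set $A = \indic{\exists j : (U_j, j) \in T}$, the event ``the training sample contains a target-labeled point''. On $\{A=1\}$ the algorithm identifies $t$ and incurs zero loss everywhere on $\supersam$, so $L \equiv 0$. On $\{A=0\}$ the training is all-zero, every test input is fresh, and $\hat h$ makes an independent $\mathrm{Bernoulli}(p)$ prediction at each test input; moreover $U_{J^\star}$ is then pinned to the complement of the target-row in each target column, while $U_{-J^\star}$ is uniform and independent. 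Applying the chain rule,
\[
\dminf{\supersam}{L;U} \;\le\; I(L; A \mid \supersam) \;+\; I(L; U \mid \supersam, A).
\]
The first summand is at most $\binaryentr{\Pr(A=0\mid \supersam)} \le \log 2$. In the second, the $\{A=1\}$ contribution vanishes. On $\{A=0\}$ and for each non-target column $j \notin J^\star$, the column $L_j \in \{0,1\}^2$ takes the values $(0,0), (1,0), (0,1)$ with conditional probabilities $1-p, p/2, p/2$, yielding $I(L_j; U_j \mid \supersam, A=0) = p\log 2$; the target columns contribute nothing since $U_{J^\star}$ is deterministic and the algorithm's coin flips across test positions are independent. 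Using conditional independence across non-target columns, the second summand is at most $(n-K) p \log 2 \le n p \log 2 < \log 2$, so $\dminf{\supersam}{L;U} \le 2\log 2$ almost surely, whence $\eSZB = O(1)$.

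The main obstacle is extending this clean column-wise argument to realizable distributions whose $\XX$-marginal has atoms: when an input $x$ appears at multiple entries of $\supersam$, the single coin flip determining $\hat h(x)$ drives the losses at every position with input $x$, so the columns of $L$ are no longer conditionally independent. I would handle this by regrouping: partition the entries of $\supersam$ by their input value, and note that within a group the loss vector is either identically $0$ (if that input appears in the training sample) or a constant vector equal to a single $\mathrm{Bernoulli}(p)$ coin. Rerunning the chain rule at the level of distinct inputs, with $A$ and the per-input indicator ``the input is in the training sample'' playing the role of the per-column event, yields the same $O(1)$ bound; a shorter alternative is a data-processing reduction that adjoins an independent continuous auxiliary coordinate to each input so that the atomless analysis above applies verbatim.
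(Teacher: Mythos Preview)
Your algorithm is the same one the paper uses, and your atomless analysis is clean and correct: the per-column decomposition $I(L;U\mid Z)=\sum_j I(L_j;U_j\mid Z)=np\log 2<\log 2$ goes through because all $2n$ entries are a.s.\ distinct. Note, though, that in the atomless case $P(X=t)=0$, so $T=\emptyset$, $K=0$, $A=0$ almost surely---the machinery with $A$, $J^\star$ is never exercised there.

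The gap is genuine once non-target atoms are allowed. When the test input in column $j$ can coincide with a training input from another column, the event ``$X_{1-U_j,j}$ is new'' depends on $U_{-j}$, and the pairs $(L_j,U_j)$ are no longer conditionally independent given $Z$. Your fix (1) does not close this: the per-input indicators ``$x$ is in the training sample'' jointly \emph{determine} $U$ in the generic case (if the two inputs in a column are distinct, knowing which of them is in training reveals $U_j$), so a chain rule at the level of distinct inputs degenerates to the trivial $n\log 2$ bound. Your fix (2) does not close it either: if $P(X=t)>0$, the augmented distribution on $\Reals\times[0,1]$ is \emph{not} realizable by singletons (the label equals $\indic{X=t}$, the indicator of a slice, not of a single point), so the atomless-singleton analysis does not apply; and there is no data-processing direction that helps, since the original $L$ depends on ``test input equals some training input,'' information that the augmentation destroys.

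The paper takes a different route that sidesteps these issues. It first proves an auxiliary lemma (its Theorem~H.1) bounding $\eSZB$ by $n\bigl(\entr{L_1\mid \supersamflat}-\entr{L_1\mid \supersamflat,U}\bigr)$ for any consistent symmetric rule with fresh per-column randomness, and expresses both entropies through the transductive error $P_e(\cdot;\cdot)$ averaged over random $(n{+}1)$-subsets of the supersample. The crucial ingredient is the \emph{subtracted} term $\entr{L_1\mid \supersamflat,U}$: for the randomized singleton rule it equals $\alpha\,\binaryentr{1/(n+1)}$, where $\alpha\in[0,1]$ is the average normalized degree of the all-zero vertex, and this exactly cancels the $\log n$ growth in $\binaryentr{\kappa(\supersamflat)}\approx\binaryentr{\alpha/(n+1)}$. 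Repetitions are handled automatically because $\alpha$ simply shrinks when inputs collide. Your decomposition via the event $A$ captures one conceptual piece (``saw the target''), but it misses this entropy-of-the-coin subtraction, which is what makes the paper's bound uniform over all realizable $\Dist$.
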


\color{black}
\section{Remaining Gaps and Open Questions}
\label{sec:openq}
For proper learning of VC classes, \citet{hanneke2016refined}  showed the assumption $\starnum < \infty$ is a necessary and sufficient condition for the existence of a distribution-free bound on the expected risk of all ERMs converging at a rate $O(1/ n)$. In \cref{thm:etd} and \cref{thm:starnumber-log} , we showed  the same rate for the expected risk of a broad class of ERMs can be obtained using the CMI framework. It is an open question to show that for a class with finite star number, every ERM has bounded eCMI.

An important open problem is to show that for every VC class with finite dual Helly number \citep{bousquet2020proper} there exists a proper learning algorithm such that for every data distribution its expected empirical risk converges at a rate of $O(1/n)$ and it has bounded CMI. Combining the generalization guarantees that one can retrieve from \cref{eq:ege-fastrate} the expected excess risk of the learner with these properties matches the optimal rate from \citet{bousquet2020proper}.

For improper learning of VC classes, we showed a general result for the deterministic one-inclusion graph prediction rule which is suboptimal by a $\log n$ factor. We conjecture that for every VC class with dimension $\vcd$ there exists a probability assignment for the randomized one-inclusion graph for which eCMI is $O(\vcd)$. In \cref{thm:singelton}, we showed this claim holds for the class of point functions.

We also remark that if the answers to the above questions are affirmative, then it can be argued that the CMI framework is expressive enough so that it can explain generalization properties of VC classes. Otherwise, a negative answer to any of the questions implies that there is gap  between CMI framework and VC theory.

In \cref{thm:ecmi-consistent} we proved that eCMI is \emph{universal} in the realizable setting. A fundamental question to ask is whether for every data distribution  $\Dist$ and consistent learner $\TheAlg$, $\eSZB/ n$ vanishes as the number training samples $n$ diverges \emph{at the same rate} with the excess risk, i.e., $\Risk{\Dist}{\Alg}$.

\subsection*{Acknowledgments}
The authors would like to thank Blair Bilodeau, Mufan Bill Li, and Jeffery Negrea for feedback on drafts of this work.

\subsection*{Funding}
M.\ Haghifam is  supported by the Vector Institute, University of Toronto, and a MITACS Accelerate Fellowship with Element AI. S.\ Moran is a Robert J.\ Shillman Fellow and is supported by the ISF, grant no.~1225/20, by an Azrieli Faculty Fellowship, and by BSF grant 2018385. D. M. Roy was supported, in part, by an NSERC Discovery Grant, Ontario
Early Researcher Award, and a stipend provided by the Charles Simonyi Endowment.
Resources used in preparing this research were provided, in part, by the Province of Ontario, the Government of Canada through CIFAR, and companies sponsoring the Vector Institute \url{www.vectorinstitute.ai/partners}.
\color{black}

\printbibliography

\newpage
\appendix{}
\section{Known Bounds for Learning VC Classes}
\label{apx:known-bounds}
In this part, we provide a landscape of the known results for the learning VC classes. One key distinction is proper learning versus improper learning. In particular, for every VC class with dimension $d$, there exists a consistent and improper learning algorithm that achieves $O(d/n)$ risk under realizability, and this bound is optimal \citep{hanneke2016optimal,haussler1994predicting}. The situation for proper learning is much more complicated. In general, the achievable rate for the proper learning of VC classes is off by a log factor, i.e., $O(d\log(n)/n)$. \citet{bousquet2020proper} show that when the dual Helly and hollow star number, which are combinatorial complexity measures of the class, agree, then they characterize the existence of an optimal proper learner. Also, a subclass of VC for which the log factor provably cannot be removed using proper learners is characterized in \citep[][Thm.~11]{bousquet2020proper}.  Moreover, for general ERMs, \citet{hanneke2016refined} shows the finiteness of  star number is a necessary and sufficient condition under which we can remove the log factor using any arbitrary ERMs. 

It is interesting to note that the results of \citet{moran2016sample} reveal a connection between the general sample compression schemes and VC classes. However, it is not known whether, in general, the optimal rates for VC classes are always witnessed by compression schemes.

\section{Proof of \cref{thm:everyerm-log}}
\label{pf:everyerm-log}

Fix $n \in \Naturals$.
We have $\eSZB = \EE[ \dminf{\supersam}{L ; U}]$ by definition,
and $\dminf{\supersam}{L;U} \le \centr{\supersam}{L}$ a.s., by the nonnegativity of (conditional) entropy.
It then suffices to bound the cardinality, $C$, of the support of the conditional distribution of $L$, given $\supersam$, 
because $\centr{\supersam}{L} \le \log C$ a.s.
For all $i \in \{0,1\}$ and $j \in [n]$, write
$\supersam_{i,j} = (X_{i,j},Y_{i,j})$, and
consider the set of possible predictions on $\supersam$,
\[
P = \{ p \in \{0,1\}^{\{0,1\} \times [n]} : \exists h \in \hypothesisclass_n, \forall i \in \{0,1\}, j \in [n], p_{i,j} = h(X_{i,j}) \} .
\]
The set $P$ is precisely the set of possible labellings of the $2n$ inputs $(X_{i,j})$, which is bounded by the growth function of $\HS_n$ evaluated at $2n$ points.
By the Sauer--Shelah lemma, the cardinality of $P$ is thus bounded above by $6n^{\vcd_n}$.
Note that the support of the conditional distribution of $L$ given $\supersam$ is
\[
\{ c \in \{0,1\}^{\{0,1\} \times [n]} : \exists p \in P, \forall i \in \{0,1\}, \forall j \in [n], \text{ if and only if } c_{i,j} = \indic{p_{i,j} \neq Y_{i,j}}\}.
\]
Therefore, the cardinality of the support is no greater than that of $P$, hence $C \le 6n^{\vcd_n}$.

\section{Proof of \cref{mainthm}}
\label{pf:sample-complexity-contradiction}
We prove the claim by contradiction. Pick $f$ and $c \ge 0$.
Let $\HS$ be a concept class with finite VC dimension $\vcd$ as shown to exist by \cref{thm:sample-complexity-lowerbound}. 

Let $\TheAlg$ be a proper learning algorithm for $\HS$,
let $n \ge \vcd$, and assume, for the eventual purpose of obtaining a contradiction, 
that
$\SZB \le f(\vcd)$ for all $\Dist$ and,
for all $s \in \dataspace^{n}$, 
$\EE \EmpRisk{s}{\Alg(s)} \le c\, {\vcd}/{n}$
if there exists $h \in \hypothesisclass$ such that $\EmpRisk{s}{h}=0$. Pick a realizable distribution $\Dist$. 
It follows from the above assumption and \cref{eq:ege-interpolate} that
$ \EE [\Risk{\Dist}{\Alg(\SS)}]\leq 2 c\, {\vcd}/{n} + 3{f(\vcd)}/{n } = {(3f(\vcd)+2 c\,\vcd)}/{n}$
By Markov's inequality, $\Pr (\Risk{\Dist}{\Alg(\SS)} \geq \epsilon)\leq \frac{1}{n\epsilon}(3f(\vcd)+2c\,\vcd)$.
It follows that the sample complexity of proper learning $\HS$ satisfies
\[
\label{eq:sample-complexity-upperbound}
\samplecomplex{\epsilon}{\delta}\leq  \frac{1}{\epsilon \delta}(3f(\vcd)+2c\,\vcd).
\]
Now, fix $\delta \in (0,1/100)$ and fix a convergent sequence of $\epsilon_i \downto 0$. 
There exists $J$ such that, for all $i \geq J$,
\[\label{eq:whatever}
\frac{1}{\tilde{c} \delta}(3f(\vcd)+2c\,\vcd)  < \vcd\, \mathrm{Log} \frac{1}{\epsilon_i} +  \mathrm{Log}\frac{1}{\delta} ,
\] 
 for $\tilde{c}$ as in \cref{thm:sample-complexity-lowerbound}.
Combining \cref{eq:whatever} with \cref{eq:sample-complexity-upperbound},
$ \samplecomplex{\epsilon_i}{\delta}
     < \frac{\tilde{c}}{\epsilon_i}(d \, \mathrm{Log} \frac{1}{\epsilon_i} + \mathrm{Log}\frac{1}{\delta})$
for $i \geq J$.
Simultaneously, from \cref{thm:sample-complexity-lowerbound}, it follows that 
$\samplecomplex{\epsilon_i}{\delta}\geq  \frac{\tilde{c}}{\epsilon_i}(d\, \mathrm{Log} \frac{1}{\epsilon_i} + \mathrm{Log}\frac{1}{\delta})$, a contradiction.

\newcommand{\disag}[1]{\mathrm{DIS}(#1)} 

\section{Proof of \cref{thm:publish-versionspace} }
\label{pf:publish-versi}
Let $n \geq \starnum$. First we begin with two definitions: 
letting $\SS$ be the random element in $(\inspace\times \outspace)^n$ representing
our training sample,
the \defn{empirical teaching
  dimension} \citep{hanneke2007teaching,el2010foundations}, denoted $\teachdim_n$, is 
  size of the smallest subset of $S$ that produces the same version space, i.e.,
\[
  \teachdim_n=\min\{|S'|: S'\subseteq \SS,
  \versionspace{S'}=\versionspace{\SS}\}.
\]
An important fact about the empirical teaching dimension is that $\teachdim_n$ is
bounded by star number almost surely \citep{hanneke2016refined}. 
An \defn{empirical teaching set} is any subset of $\SS$ that achieves the minimum in the definition of the
 empirical teaching dimension. 
Consider a realizable training set $\SS\in (\inspace\times \outspace)^n$. Let $S'$ denote an empirical teaching set of $\SS$. Let $\hat{S} \subseteq \SS \setminus S'$. By the definition of the version space we have $\versionspace{\SS} \subseteq \versionspace{\SS \setminus \hat{S}} \subseteq \versionspace{S'}$. Also, by the definition of $S'$, we have $\versionspace{\SS}=\versionspace{S'}$; therefore, $\versionspace{\SS} = \versionspace{\SS \setminus \hat{S}}$. This argument shows that the version space is \emph{stable}, in the sense that removing any point that is not in $S'$ does not alter the version space. 
Let $\mathrm{ETS}(\SS)$ denote the empirical teaching set $\SS$. 

We also need the definition of the \emph{region of disagreement} denoted by $\disag{\versionspace{\SS}}=\{x \in \inspace \vert \exists h, h' \in \versionspace{\SS}~\text{such that}~ h(x)\neq h(x')\}$.

Following the same line of reasoning as in \cref{eq:lemma-lowerbound-use} we obtain $\minf{\versionspace{Z_U};U\vert Z} \leq  n \log 2 -  \sum_{i=1}^{n} \entr{U_i | U_{-i}, \versionspace{Z_U}, \supersam}$. Since the order of the training set does not change the version space we get $\sum_{i=1}^{n}\entr{U_i | U_{-i}, \versionspace{Z_U}, \supersam}=n \entr{U_1 | U_{-1}, \versionspace{Z_U}, \supersam}$.  Fix $i \in \range{n}$, and define $\Ur{i}{b}\triangleq (U_1,\dots,U_{i-1},b,U_{i+1},\dots,U_n)$ for $b \in \{0,1\}$. Using this notation we can define training sets  $\Sr{i}{b}= \supersam_{\Ur{i}{b}}$ for $b \in \{0,1\}$ and $i \in [n]$. Let $\mathcal{F}_1 = \sigma(U_{-1}, \versionspace{Z_U}, \supersam)$. Then, we have 
\[
\label{eq:etd-pf-condentropy}
\entr{U_1 | U_{-1}, \versionspace{Z_U}, \supersam}\geq \EE \big[ \centr{\cF_1 }{U_1} \indic{\versionspace{\Sr{1}{0}}=\versionspace{\Sr{1}{1}}}  \big].
\]
Using the same techniques as in the proof \cref{thm:stable-compression-cmi}, we can show that on the event $\indic{\versionspace{\Sr{1}{0}}=\versionspace{\Sr{1}{1}}}$, $\centr{\cF_1 }{U_1}=\log(2)$. Thus, $\entr{U_1 | U_{-1}, \versionspace{Z_U}, \supersam} \geq  \Pr(\versionspace{\Sr{1}{0}}=\versionspace{\Sr{1}{1}}) \log 2$.

For $i \in [n]$, define the training set $S_{-i} \triangleq \{Z_{U_1},\dots,Z_{U_{i-1}},Z_{U_{i+1}},\dots, Z_{U_{n}}\}$. Recall that $Z_{i,j}=(X_{i,j},Y_{i,j})$. We claim that 
\[
\versionspace{\Sr{1}{0}}\neq\versionspace{\Sr{1}{1}} \Rightarrow (X_{0,1} \in  \disag{\versionspace{S_{-1}}}) \vee (X_{1,1} \in  \disag{\versionspace{S_{-1}}}) \label{eq:claim-version-dis}.
\]
We prove this claim by  contraposition. Given that $(X_{0,1} \notin  \disag{\versionspace{S_{-1}}}) \wedge (X_{1,1} \notin  \disag{\versionspace{S_{-1}}})$, we have that the concepts in $\disag{\versionspace{S_{-1}}}$ agree for prediction of $X_{0,1}$ and $X_{1,1}$. As $\Dist$ is a realizable distribution, we conclude   $\versionspace{\Sr{1}{0}} =  \versionspace{S_{-1}} =  \versionspace{\Sr{1}{1}}$. 

In the next step, we provide an upper bound on $\Pr((X_{0,1} \in  \disag{\versionspace{S_{-1}}}) \vee (X_{1,1} \in  \disag{\versionspace{S_{-1}}}))$ as follows
\[
&\Pr((X_{0,1} \in  \disag{\versionspace{S_{-1}}}) \vee (X_{1,1} \in  \disag{\versionspace{S_{-1}}})) \nonumber \\
&\leq \Pr((X_{0,1} \in  \disag{\versionspace{S_{-1}}})) + \Pr((X_{1,1} \in  \disag{\versionspace{S_{-1}}})) \nonumber\\
&= 2 \Pr(X_{0,1} \in  \disag{\versionspace{S_{-1}}}). \label{eq:disag-eq-union-iid}
\]
Here, we have used the union bound and the points in $\supersam$ are \iid. Then, we can write
\[
\Pr(X_{0,1} \in  \disag{\versionspace{S_{-1}}}) &= \EE{ \sbra{ \indic{X_{0,1} \in  \disag{\versionspace{S_{-1}}}}}} \nonumber \\
                                                &= \EE{ \sbra{ \indic{X_{1} \in  \disag{\versionspace{\{Z_2,\dots,Z_n\}}}}}} .
\]
The last step follows from  $U$ and $\supersam$ are independent, and the points in $\supersam$ are \iid. Therefore, in the last step we consider the expectation over $(Z_1,\dots,Z_n)\dist \Dist^{\otimes n}$. Note that $Z_i = (X_i, Y_i)$. By the exchangeability of the points in $(Z_1,\dots,Z_n)$ we have
\[
\EE{ \sbra{ \indic{X_{1} \in  \disag{\versionspace{\{Z_2,\dots,Z_n\}}}}}} &= \frac{1}{n} \sum_{i=1}^{n}  \EE{ \sbra{ \indic{X_{i} \in  \disag{\versionspace{ \{Z_1,\dots,Z_n\}\setminus \{Z_i\} }}}}}.
\]
Then, we claim that given  $X_{i} \in  \disag{\versionspace{ \{Z_1,\dots,Z_n\}\setminus \{Z_i\} }}$ then $X_i \in  S'$ where $S'$ is \emph{any} teaching set of $\{Z_1,\dots,Z_n\}$. We can easily prove this claim by contradiction and the stability of the version space shown in the beginning of this section. Therefore, 
\[
 \frac{1}{n} \sum_{i=1}^{n}  \EE{ \sbra{ \indic{X_{i} \in  \disag{\versionspace{ \{Z_1,\dots,Z_n\}\setminus \{Z_i\} }}}}} &\leq  \frac{1}{n} \sum_{i=1}^{n}  \EE{ \sbra{ \indic{X_{i} \in  S' }}} \nonumber\\
&=\frac{1}{n} \EE{ \sbra{ \sum_{i=1}^{n}   \indic{X_{i} \in  S' }}} \nonumber\\
&\leq \frac{\starnum}{n} .\label{eq:final-starnumber-exch}
\]
Here, we have used the linearity of the expectation, and the last step follows from the fact that the cardinality of $S'$ is at most $\starnum$ almost surely. By \cref{eq:disag-eq-union-iid}-\cref{eq:final-starnumber-exch}, we have $\Pr((X_{0,1} \in  \disag{\versionspace{S_{-1}}}) \vee (X_{1,1} \in  \disag{\versionspace{S_{-1}}}))  \leq {2\starnum}/{n} $. Then, by \cref{eq:claim-version-dis} we have 
\[
  \entr{U_1 | U_{-1}, \versionspace{Z_U}, \supersam} &\geq  \Pr(\versionspace{\Sr{1}{0}}=\versionspace{\Sr{1}{1}}) \log 2 \nonumber\\
  &\geq \sbra{ 1-\Pr((X_{0,1} \in  \disag{\versionspace{S_{-1}}}) \vee (X_{1,1} \in  \disag{\versionspace{S_{-1}}}))} \log 2 \nonumber \\
  &\geq   ( 1 - \frac{2\starnum}{n})\log 2.
\]
Finally, combining this results and $ \minf{\versionspace{Z_U};U\vert Z} \leq  n \log 2 - n \entr{U_1 | U_{-1}, \versionspace{Z_U}, \supersam}$, we obtain $\minf{\versionspace{Z_U};U\vert Z} \leq 2\starnum\log 2   $ which was to be shown.

\section{Proof of \cref{thm:conv-publish-versionspace}}
\label{pf:conv-publish-versionspace}
We begin the proof by a lemma, which can be seen as the generalization of the well-known coupon collector's problem \citep[][Lem.~19]{bousquet2020proper}.
\begin{lemma}
\label{lemma:coupon-collector}
Let $M,m \in \Naturals$ and $1 \leq m \leq M$. Assume we take $k$ samples
$X_1,\dots,X_k$ uniformly at random with replacement from $[M]$. Then
$\Pr ( |[M]\setminus \{X_1,\dots,X_k\}| \geq m) \geq 1/2$
if $k \leq (M/2) \log \frac{M}{m}$.
\end{lemma}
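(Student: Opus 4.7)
The plan is to apply a second-moment argument to $N = \sum_{i=1}^M Y_i$, where $Y_i = \indic{i \notin \{X_1,\dots,X_k\}}$ is the indicator that coupon $i$ is never drawn. First I would compute the expectation: by exchangeability, $\EE[Y_i] = (1-1/M)^k$, so $\EE[N] = M(1-1/M)^k$. Using the elementary inequality $\log(1-1/M) \ge -1/(M-1)$, which follows from the Taylor expansion $\log(1-x) = -\sum_{j\ge 1} x^j/j \ge -x/(1-x)$, the hypothesis $k \le (M/2)\log(M/m)$ yields $\EE[N] \ge M(m/M)^{M/(2(M-1))} \ge m$ for all $M \ge 2$, since the exponent $M/(2(M-1)) \in (1/2, 1]$ and $m/M \le 1$. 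Then I would bound the variance using that the $Y_i$ are pairwise negatively correlated: $\EE[Y_iY_j] = (1-2/M)^k \le (1-1/M)^{2k} = \EE[Y_i]\EE[Y_j]$ since $(1-2/M) \le (1-1/M)^2$. Hence $\Var(N) \le \sum_i \Var(Y_i) \le \EE[N]$.

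Chebyshev's inequality then gives $\Pr(N < m) \le \EE[N]/(\EE[N]-m)^2$. In the regime $M \gg m$, a sharper expansion of $\log(1-1/M)$ upgrades the moment bound to $\EE[N] \ge \sqrt{Mm}$, which already exceeds $2m$ whenever $M \ge 4m$; then $\Pr(N<m) \le 4/\EE[N] \le 1/2$ as soon as $\EE[N] \ge 8$, closing the argument directly. In the complementary regime where $M$ is close to $m$, the hypothesis $k \le (M/2)\log(M/m)$ forces $k$ to be very small (indeed $k = 0$ when $M = m$), and the bound can be closed either by a direct combinatorial calculation on the few terms that occur or by invoking a Chernoff inequality for sums of negatively associated Bernoullis, $\Pr(N \le (1-\delta)\EE[N]) \le \exp(-\delta^2 \EE[N]/2)$, with $\delta = 1 - m/\EE[N]$.

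The main obstacle is the sharp factor $1/2$ in the hypothesis: the derived bound $\EE[N] \ge m$ is essentially tight when $M$ is only slightly larger than $m$, so a naive application of Chebyshev just barely fails in intermediate regimes. The proof therefore must split by the ratio $M/m$ and use tailored estimates in each. Equivalently, one can recast the claim as a statement about the coupon-collector time $T = \inf\{k : N_k < m\}$: since $\{N_k \ge m\} = \{T > k\}$ and $T = \sum_{j=0}^{M-m} G_j$ with independent $G_j \sim \Geom((M-j)/M)$, one has $\EE[T] = M(H_M - H_{m-1}) \ge M\log(M/m)$, and the task reduces to proving $\mathrm{median}(T) \ge \EE[T]/2$, which follows from $\Var(T) \le 2M^2/m$ and Chebyshev outside of the boundary regime.
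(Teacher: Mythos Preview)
The paper does not prove this lemma: it is stated at the start of \cref{pf:conv-publish-versionspace} with a citation to \citep[][Lem.~19]{bousquet2020proper} and then used as a black box. There is therefore no paper proof to compare against.

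Your plan is the natural one. The second-moment route via $N=\sum_i Y_i$ with pairwise negative correlation, and the equivalent hitting-time formulation $T=\sum_j G_j$, both give the result cleanly once $M/m$ is bounded away from~$1$; you correctly identify that the constant $1/2$ in the hypothesis is sharp enough that Chebyshev alone does not close the boundary regime $M\approx m$, and your proposed patches (Chernoff for negatively associated Bernoullis, or direct enumeration since $k\le (M/2)\log(M/m)$ forces $k$ to be tiny there) are the right ones, though you have not carried them out. One small slip: from $\log(1-1/M)\ge -1/(M-1)$ you get
\[
\EE[N]\ \ge\ M\,(m/M)^{M/(2(M-1))}\ =\ \sqrt{Mm}\cdot (m/M)^{1/(2(M-1))},
\]
which is \emph{below} $\sqrt{Mm}$, not above, since the exponent exceeds $1/2$. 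The correction factor is harmless (it is at least $(1/4)^{1/6}\approx 0.79$ once $M\ge 4$), so the argument still goes through with adjusted constants, but the inequality as you wrote it is in the wrong direction.
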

Note that $|[M]\setminus \{X_1,\dots,X_k\}|$ is the number of unseen elements from $[M]$ after observing samples $X_1,\dots,X_k$.  

Consider a concept class $\HS$ over the input
space $\inspace$ whose star number is $\starnum$. Let $M = \min\{n,\starnum\}$.  By the definition of star
number, there exists $(x_1,\dots,x_M) \in \inspace^M$,  such that
$((x_1,y_1),\dots,(x_M,y_M))$ is realizable by $h^{\star}_0 \in \HS$, and every
neighbour of this sequence is also realizable by a classifier in $\HS$. Let
$h_{x_i} \in \HS $ be any classifier such that $\{j\in [M]\vert h_{x_i}(x_j)\neq y_j\}=\{i\}$.

For the case $M=1$, our lower bound is simply $\min\{n,\starnum\}-1=M-1=0$ which is trivial since the mutual information is non-negative. 
Therefore, in the rest of the proof we assume $M\geq 2$.

For the case  $M\geq 2$, consider the following distribution on the input space
\*[
\Dist_X(x_1)=1-\frac{M-1}{n} \text{ and }  \Dist_X(x_i)=\frac{1}{n},\ \text{ for $i \in\{2,\dots,M\}$.}
\]
Also let the target function (labelling function) be $h^{\star}_0$. Let $Z$, $U$, and $S$ be defined as usual, based on a sample from $\Dist_{X}$
labeled by $h^{\star}_0$.
Since $\Dist_{X}$ has zero measure on $\inspace \setminus \{x_1,\dots,x_M\}$,
we can, without any loss of generality, assume that $\HS =
\{h^\star_0,h_{x_1},\dots,h_{x_M}\}$,
as every other classifier is equivalent to one of these $\Dist_{X}$-almost
everywhere.
When we release the version space, we can agree in advance that elements stand
for their equivalence classes, which does not affect the information content.

Let $X=(X_{i,j})_{i\in \{0,1\},j \in [n]}$ denote the inputs observed in the
supersample $Z$. Define $X_U$ as the sequence of the inputs observed in $Z_U$.
Similarly, let  $X_{\bar{U}}$ denote the sequence of inputs observed in the ``ghost sample'' $Z_{\bar{U}}$,
where $\bar{U}$ denotes the sequence $U$ but with every entry flipped.
In the following, we write $x\in X_U$ to mean that there exists $i \in [n]$ such that $X_{U_i,i}=x$. 

We claim that
$\versionspace{Z_U}=\{h_{x_i}\vert i\in [M], x_i \notin  X_U \}
\cup\{h^\star_0\}$.\footnote{%
  Formally, this statement holds almost surely. We will
  skip such declarations for the remainder of the proof.}
To see this, note that, if $x_i \notin X_U$, then $h_{x_i}=h^\star_0$ on $S$ and so both
are ERMs. In the other direction, if $x_i \in X_U$, then $h_{x_i}$ makes at least
one mistake, and so is not an ERM.
Thus, if $\versionspace{Z_U}$ has a non-empty intersection with
$\{h_{X_{0,i}},h_{X_{1,i}}\}$, we can perfectly recover $U_i$ from $\versionspace{Z_U}$, since in each
column of the supersample $Z$, only one point is selected for the training set.
 We use this observation to lower bound the conditional entropy of $U$ given $Z$
 and the version space.

 To that end, let $J=\{j \in [n] \vert \versionspace{Z_U} \cap \{h_{X_{0,j}},h_{X_{1,j}}
 \}\neq \emptyset \}$ be the set that contains the index of every column $j$ for
 which we can perfectly recover $U_j$ from the version space.
 Note that we can represent $J$ in an equivalent form as follows. For two sequences $v,w
\in \{x_1,\dots,x_M\}^n$, define $K(v,w)=\{j \in [n]\vert w_j \neq v_i \
\text{for all} \ i\in [n] \}$. 
By the definition of the version space,  $x_i \notin X_U$ is equivalent to  $h_{x_i} \in \versionspace{Z_U}$. Let $j \in [n]$, then
\*[
j \in J \Leftrightarrow \versionspace{Z_U}\cap \{h_{X_{1-U_j,j}}\}=\{h_{X_{1-U_j,j}}\} \Leftrightarrow X_{1-U_j,j} \notin X_U \Leftrightarrow j \in K(X_U,X_{\bar{U}}).
\]
Therefore, $J = K(X_U,X_{\bar{U}})$.

By the definition of the mutual information, and the fact that $U$ and $Z$ are independent, we have
\[
\minf{\versionspace{Z_U};U\vert Z}= n\log (2) - \entr{U\vert Z,\versionspace{Z_U}} \label{eq:mi-versionspace}.
\]
Then
\begin{align}
    \entr{U\vert Z,\versionspace{Z_U}} &= \entr{U\vert Z,\versionspace{Z_U},J} \label{eq:determinsticfun-j}\\
                                    &=\entr{U_J, U_{J^c}\vert Z,\versionspace{Z_U},J} \\
                                    &=\entr{U_{J^c}\vert Z,\versionspace{Z_U},J} \label{eq:j-perfectrecover} \\
                                    &\leq \EE[n-|J|] \log(2)  \label{eq:lower-bound-versionspace-cardinality},
\end{align}
where
\cref{eq:determinsticfun-j} follows from $J$ being known from $Z$ and $\versionspace{Z_U}$;
\cref{eq:j-perfectrecover} follows from $U_J$ being known from $J$, $Z$, $\versionspace{Z_U}$;
and \cref{eq:lower-bound-versionspace-cardinality} follows from
the cardinality of the support of the distribution of $U_{J^c}$ being no more than $2^{(n-|J|)}$.
Therefore, by \cref{eq:mi-versionspace} and \cref{eq:lower-bound-versionspace-cardinality}, we have 
\[
\label{eq:minf-version-j}
\minf{\versionspace{Z_U};U\vert Z} \geq \EE[J] \log 2.
\]

In the next step of the proof, we lower bound $\EE[|J|]$. 
Because $U$ and $Z$ are independent and $Z$ is an i.i.d.\ array, $X_U$ and
$X_{\bar{U}}$ are independent and identically distributed sequences of i.i.d.\
elements with common distribution $\Dist_\inspace$.
Thus, $\EE[|J|] =\EE \big[ |K(X,\bar{X})| \big]$,
where $X$ and $\bar{X}$ are i.i.d.\ copies of $X_U$ (equivalently,
$X_{\bar{U}}$).

Let $\hat{M}$ be the number of elements of $\{x_2,\dots,x_M\}$ \emph{not} appearing in $X$,
i.e., $\hat{M}=|\{i \in \{2,\dots,M\}\vert x_i \neq X_j \ \text{for all} \ j \in [n]\} |$.
Conditional on $X$,
if $x_1 \in X$, then $|K(X,\bar{X})|$ is a Binomial random variable with $n$ trials,
each succeeding with probability $\frac{\hat{M}}{n}$.
If $x_1 \notin X$, then $ |K(X,\bar{X})|$ is a Binomial random variable with $n$
trials, each succeeding with probability
$\frac{\hat{M}}{n} + 1-\frac{M-1}{n} \ge \frac{\hat {M}}{n}$.
Therefore, 
\[
\EE[|J|] &=\EE \big[ |K(X,\bar{X})| \big]\\
        & = \EE\big[ \cEE{X}{[|K(X,\bar{X})|]}\big] \label{eq:pf-versionspace-towerrule}\\
        & \geq \EE[\hat{M}] \label{eq:pf-versionspace-bionomial}.
\]
Here,  \cref{eq:pf-versionspace-towerrule} follows from the tower rule and
\cref{eq:pf-versionspace-bionomial} follows from the fact that the mean of the
binomial distribution with $n$ trials, each succeeding with probability $p$, is
$np$.

Fix $\beta=\exp(-3)$ and $m=\beta (M-1)$. Let $\hat{N}$ denote the number of samples in $X$ falling in $\{x_2,\dots,x_M\}$. Then, using the tower rule and Markov's inequality we have
\[
\EE[\hat{M}]&= \EE \big[\cEE{\hat{N}}{[\hat{M}]} \big]\\
&\geq  \EE\big[ m\cPr{\hat{N}}{\hat{M}\geq m}\big] \label{eq:markov-inequality-condition} \\
&\geq  \EE\big[ m\cPr{\hat{N}}{\hat{M}\geq m} \indic{\hat{N}\leq ({(M-1)}/{2})\log\beta^{-1}}\big]
\]
Note that, conditional on $\hat{N}$, the $\hat{N}$ samples falling in $\{x_2,\dots,x_M\}$ are conditionally independent, with conditional distribution
uniform on this set. Using this fact, from  \cref{lemma:coupon-collector} we obtain
\[
\EE\big[ m\cPr{\hat{N}}{\hat{M}\geq m} \indic{\hat{N}\leq ({(M-1)}/{2})\log\beta^{-1}}\big] \geq  \frac{1}{2} m \Pr(\hat{N}\leq ({(M-1)}/{2})\log\beta^{-1})\label{eq:coupon-collector}.
\]
We have $\EE[\hat{N}]=M-1$ and, by Markov's inequality, $\Pr(\hat{N}\leq \frac{M-1}{2}\log\beta^{-1}) \geq 1-2/{\log(\beta^{-1})}=1/3$. By combining \cref{eq:minf-version-j}, \cref{eq:pf-versionspace-bionomial}, \cref{eq:coupon-collector}, and $\Pr(\hat{N}\leq \frac{M-1}{2}\log\beta^{-1}) \geq 1/3$ we get
\*[
\minf{\versionspace{Z_U};U\vert Z} & \geq  \frac{\beta\log 2}{6} (M-1)  \\
&=\Omega(\min\{n,\starnum\}-1),
\]
which was to be shown.
\section{Proof of {\cref{thm:starnumber-log}}}
\label{pf:starnumber-log}
By the well-ordering theorem
\citep{zermelo1908untersuchungen}, there exists a binary
relation $\ll$ on $\HS$ that is transitive, total, antisymmetric, and
well-ordered. In particular, the well-ordered property implies that every nonempty subset $\HS$ has the least element. The proposed learning algorithm is given by 
\*[
\Alg(\SS) = \text{LE}(\versionspace{\SS}),
\]
where $\text{LE}$ of a nonempty set denotes its least element with respect to
$\ll$.\footnote{In some cases this classifier may not be measurable. We will
  assume it is. To avoid measure-theoretic issues, one may assume $\inspace$ is
  countably infinite or finite, or design a well-ordering by hand to guarantee
  measurability if possible.} 
Note that $\Alg$ is \emph{deterministic}, \emph{consistent}, and \emph{permutation-invariant},  i.e.,  the order of the points in $\SS$ does not impact the output. \\
Let $W=\Alg(Z_U)$. By the definition of the mutual information and \cref{lemma:chainrule-lowerbound}, we obtain
\[
\SZB &= H(U|\supersam) - H(U|W,\supersam) \nonumber \\ 
&\leq n \log(2) - \sum_{i=1}^{n} H(U_i | U_{-i},\supersam,W) \label{eq:erm-star-cmi}.
\]
The last step follows since $U$ is independent of $\supersam$ and the independence of indices of $U$ and \cref{lemma:chainrule-lowerbound}. By the permutation-invariance of the algorithm, we have $\sum_{i=1}^{n} H(U_i | U_{-i},\supersam,W)=n H(U_1 | U_{-1},\supersam,W)$. Fix $i \in \range{n}$, and define $\Ur{i}{j}\triangleq (U_1,\dots,U_{i-1},j,U_{i+1},\dots,U_n)$ for $j \in \{0,1\}$. Using this notation we can define  training set  $\Sr{i}{j}= \supersam_{\Ur{i}{j}}$ for $j \in \{0,1\}$ and $U_{-i}\in \{0,1\}^{n-1}$.  
 Let $\mathcal{F}_1 = \sigma(W,U_{-1},\supersam)$.  Define the $\mathcal{F}_1$-measurable event $\mathcal{E}=\{\Alg(\Sr{1}{0})=\Alg(\Sr{1}{1})\}. $ Then, we can write $H(U_1 | U_{-1},Z,W)=\EE[ \centr{\mathcal{F}_1 }{U_1}(\indic{\mathcal{E}} + \indic{\mathcal{E}^c})]$.
We claim the following facts:
\begin{enumerate}
\item On the event $\mathcal{E}^c$,   $\centr{\mathcal{F}_1 }{U_1}=0$ \as.
\item On the event $\mathcal{E}$,   $\centr{\mathcal{F}_1 }{U_1}=\log 2$ \as.
\end{enumerate}
The reason is that since the algorithm is deterministic, on the event $\mathcal{E}^c$ we can perfectly recover $U_1$ since $W$ is either $\Alg(\Sr{1}{0})$ or $\Alg(\Sr{1}{1})$ which shows that $\centr{\mathcal{F}_1 }{U_1}=0$. Then, on the event $\mathcal{E}$, using the Bayes rule we can show that $\centr{\mathcal{F}_1 }{U_1}=\log 2$.
Therefore, we have $\entr{U_1 \vert U_{-1},\supersam,W}= \EE[\indic{\mathcal{E}}] \log(2)$. We can further lower bound 
\[
\label{eq:cond-entr-erm}
\entr{U_1 \vert U_{-1},\supersam,W}\geq \EE\big[\indic{\mathcal{E}}\indic{\loss(\Alg(\Sr{1}{0}),\supersam_{1,1})=0 \wedge \loss(\Alg(\Sr{1}{1}),\supersam_{0,1})=0}\big]  \log(2).
\]
Next, we claim that on the event  $\{\loss(\Alg(\Sr{1}{0}),Z_{1,1})=0\}\wedge \{\loss(\Alg(\Sr{1}{1}),Z_{0,1})=0\}$, we have $\Alg(\Sr{1}{0})=\Alg(\Sr{1}{1})$ \as .  This claim can be proved by contradiction. Assume $\Alg(\Sr{1}{0})\neq \Alg(\Sr{1}{1})$. Consider the version space $\versionspace{\Sr{1}{0} \cup \Sr{1}{1}}$.  By the assumptions   $\loss(\Alg(\Sr{1}{0}),Z_{1,1})=0$ and $\loss(\Alg(\Sr{1}{1}),Z_{0,1})=0$, we have $\Alg(\Sr{1}{0}) \in \versionspace{\Sr{1}{0} \cup \Sr{1}{1}}$ and $\Alg(\Sr{1}{1}) \in \versionspace{\Sr{1}{0} \cup \Sr{1}{1}}$.   Also, it is immediate to see that  $\versionspace{\Sr{1}{0} \cup \Sr{1}{1}} \subseteq \versionspace{\Sr{1}{0} } $ and $\versionspace{\Sr{1}{0} \cup \Sr{1}{1}} \subseteq \versionspace{\Sr{1}{1} } $. Therefore, we have $\Alg(\Sr{1}{0}) \in  \versionspace{\Sr{1}{1}}$ and $\Alg(\Sr{1}{1}) \in \versionspace{\Sr{1}{0}}$. By the definition of the algorithm, we choose the least hypothesis from the version space. Thus,  $\Alg(\Sr{1}{0}) \ll \Alg(\Sr{1}{1})$ since $\Alg(\Sr{1}{1}) \in  \versionspace{\Sr{1}{0}}$. Similarly, we can show $\Alg(\Sr{1}{1}) \ll \Alg(\Sr{1}{0})$. Therefore, considering $\Alg(\Sr{1}{1}) \ll \Alg(\Sr{1}{0})$ and $\Alg(\Sr{1}{0}) \ll \Alg(\Sr{1}{1})$, we conclude that the assumption $\Alg(\Sr{1}{0})\neq \Alg(\Sr{1}{1})$ is false, a contradiction.

Having proved that $\Alg(\Sr{1}{0})= \Alg(\Sr{1}{1})$ on the event $\{\loss(\Alg(\Sr{1}{0}),Z_{1,1})=0\}\wedge \{\loss(\Alg(\Sr{1}{1}),Z_{0,1})=0\}$, we can further simplify \cref{eq:cond-entr-erm} as
\begin{align*}
H(U_1 | U_{-1},\supersam,W) &\geq \EE \left[ \indic{\loss(\Alg(\Sr{1}{1}),Z_{0,1})=0}  \indic{\loss(\Alg(\Sr{1}{0}),Z_{1,1})=0}\right]  \log 2\\
&\geq  (1- \EE [\indic{\loss(\Alg(\Sr{1}{1}),Z_{0,1})=1} ] - \EE [\indic{\loss(\Alg(\Sr{1}{1}),Z_{0,1})=1} ])\log 2\\
&=  (1-2 \EE[\Risk{\Dist}{\Alg(\SS)}]\log 2.
\end{align*}
The last step follows since the elements of $\Sr{1}{0}$ and $\Sr{1}{1}$ are \iid.  By plugging this lower bound into \cref{eq:erm-star-cmi}, we obtain
\[
\label{eq:cmi-simple-erm}
\SZB  \leq 2n\log(2) \  \EE[\Risk{\Dist}{\Alg(\SS)}].
\]
Then, we use the the result from Corollary 12 of  \citep{hanneke2016refined} which states the following bound holds uniformly for the expected risk of \emph{all} consistent and proper learners for learning a concept class with VC dimension $\vcd$ and star number $\starnum$:
\*[
\EE[\Risk{\Dist}{\Alg(\SS)}] =O(\frac{\vcd}{n} \log(\frac{\min\{\starnum,n\}}{\vcd})).
\]
Finally, the stated result follows by combining this bound with  \cref{eq:cmi-simple-erm}.

\section{Proof of \cref{thm:ecmi-consistent}}
\label{pf:ecmi-every-consistent}
Let $L = (L_1,\dots,L_n)$ where $L_i \in \{0,1\}^2$ denotes the vector at Column $i$ of the loss array $L$. 
By the definition of mutual information, 
we have $\minf{L;U\vert \supersam} \leq \entr{L\vert \supersam}$. 
Since conditioning decreases the entropy, 
we have $\entr{L\vert \supersam} \leq \entr{L}$. 
Finally, by the chain rule, $\entr{L} \leq  \sum_{i=1}^{n} \entr{L_i}$.

Note that $L_i$  takes values in $\{[1,0]^\intercal,[0,0]^\intercal,[0,1]^\intercal\}$, as it is assumed that $\Alg$ is consistent and, by construction, at each column, one of the points is selected for the training set. 
If $U_i=0$, then $Z_{0,i}$ is in the training set and so, because the algorithm is consistent, $L_{0,i}=0$ a.s. Therefore,
\*[
\Pr[L_{i}=[1,0]^\intercal] &=\EE\big[\cPr{U}{L_{i}=[1,0]^\intercal}\big] \\
&=\EE\big[ \indic{U_i=0} \cPr{U}{L_{i}=[1,0]^\intercal}\big] + \EE\big[ \indic{U_i=1} \cPr{U}{L_{i}=[1,0]^\intercal}\big] \\
&= \EE\big[ \indic{U_i=1} \cPr{U}{L_{i}=[1,0]^\intercal}\big] \label{eq:consistency}
\]
Conditioning on event $U_i=1$,
we have $L_{i}=[1,0]^\intercal = \loss(\Alg(Z_U),Z_{0,i})=1$
and therefore
\*[
\cPr{U}{L_{i}=[1,0]^\intercal}
= \cPr{U}{\loss(\Alg(Z_U),Z_{0,i})=1} 
= \Risk{\Dist}{\Alg}, 
\]
where the final equality follows from the definition of the expected risk and the fact that $Z_{0,i}$ is not in $Z_U$.
Therefore, $ \Pr[L_{i}=[1,0]^\intercal] = \EE[\indic{U_i=1}] \Risk{\Dist}{\Alg} = (1/2) \Risk{\Dist}{\Alg}$.
The same idea establishes that $\Pr[L_{i}=[0,1]^\intercal]=(1/2) \Risk{\Dist}{\Alg}$.

Therefore, by the definition of the entropy
\*[
\entr{L_i} &= -(1-\Risk{\Dist}{\Alg}) \log(1- \Risk{\Dist}{\Alg})  -  \Risk{\Dist}{\Alg} \log ( \frac{\Risk{\Dist}{\Alg}}{2})\\
& = \binaryentr{\Risk{\Dist}{\Alg}} + \Risk{\Dist}{\Alg} \log(2).
\]
The stated results follows from $\eSZB\leq \sum_{i=1}^{n}\entr{L_i}$.

\section{Proof of \cref{thm:singelton}}
\label{pf:loo}
We begin by presenting a theorem that will be used later to prove \cref{thm:singelton}. This theorem, which might be of independent interest, shows the \emph{average leave-one-error over supersample $Z$} can be used to upper bound $\eSZB$.

Lets begin by introducing some notations. For $n\in \Naturals$, let $[2n]_{n+1}$ denote the set of all size-$n+1$ subsets of $[2n]$ and let $\binaryentr{\cdot}$ denote the binary entropy function.
\begin{theorem}
\label{thm:one-inclusion-general}
 Let $\TheAlg$ denote a consistent and symmetric algorithm. Let $P_e: \dataspace^n \times \dataspace \to [0,1]$. For $s= ((x_1,y_1),\dots,(x_n,y_n)) \in \dataspace^{n}$ and $(x,y)\in \dataspace$, $P_e(s;(x,y))$ denotes the probability of error $\Alg$ for predicting the label $x$ where the randomness is over the internal randomness in $\TheAlg$.  
 Then, for every distribution $\Dist$, we have 
\begin{equation}
\label{eq:ecmi-oneinclusion}
\eSZB \leq n\EE \Big[ \binaryentr{\kappa(\supersamflat)} +\kappa(\supersamflat) \log 2 - \EE_{J \dist \unif{[2n]_{n+1}}} \frac{1}{n+1} \sum_{j \in J} \binaryentr{P_e(\supersamflat_{J-\{j\}};\supersamflat_j)} \Big]
\end{equation}
where 
$\supersamflat=(Z_1,\dots,Z_{2n})\dist \Dist^{\otimes (2n)}$ and 
$\kappa(\supersamflat) = \EE_{J \dist \unif{[2n]_{n+1}}} \frac{1}{n+1} \sum_{j\in J} P_e(\supersamflat_{J - \{j\}};\supersamflat_j)$ 
which takes values in $[0,1]$ almost surely.
Note, in \cref{eq:ecmi-oneinclusion}, the outer expectation is over $\supersamflat$.
\end{theorem}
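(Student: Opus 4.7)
The plan is to bound $\eSZB = \minf{L; U \mid Z}$ by exploiting the tight structural constraints on $L$ imposed by consistency, reducing the problem to a per-column analysis which, once symmetrized via exchangeability of $\supersamflat$, yields the stated leave-one-out bound.

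First, I would observe that consistency forces each column $L_i$ to lie in $\{[0,0],[1,0],[0,1]\}$, since the loss on the training point of column $i$ is zero almost surely. Introducing the ghost-error indicator $E_i = \indic{L_i \neq [0,0]}$, the vector $L_i$ is a deterministic function of $(E_i, U_i)$ and, conversely, $E_i$ is a function of $L_i$, so $L$ and $E$ are in bijection given $U$ and hence $H(L \mid U, Z) = H(E \mid U, Z)$. A direct computation then shows $H(L_i \mid U, Z) = H(E_i \mid U, Z) = \binaryentr{P_e(S(U, Z); Z_{1-U_i, i})}$, because conditional on $(U, Z)$ the indicator $E_i$ is Bernoulli with parameter equal to the marginal error of $\TheAlg$ on the (then deterministic) ghost point.

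Second, writing $\eSZB = H(L \mid Z) - H(L \mid U, Z)$ and using sub-additivity of entropy, I would bound $H(L \mid Z) \le \sum_i H(L_i \mid Z)$, and for each column split $H(L_i \mid Z) = H(E_i \mid Z) + \Pr(E_i = 1 \mid Z) \cdot H(L_i \mid E_i = 1, Z)$ to obtain $H(L_i \mid Z) \le \binaryentr{p_i^*(Z)} + p_i^*(Z) \log 2$, where $p_i^*(Z) = \Pr(E_i = 1 \mid Z)$. To pull the $-\binaryentr{P_e}$ term through, I then need the per-column decomposition $\eSZB \le \sum_i [H(L_i \mid Z) - H(L_i \mid U, Z)]$, which is equivalent to the total-correlation inequality $T(L \mid U, Z) \le T(L \mid Z)$. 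I would justify this in the consistent CMI setting by arguing that fixing $(U, Z)$ pins down the training multiset $S$ and thereby removes a common source of dependence among the $L_i$'s, so the residual correlations after conditioning on $U$---which come only from the algorithm's internal randomness---cannot exceed those present when $U$ is marginalized out.

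Third, I would symmetrize the column-specific quantities using exchangeability of the $2n$ i.i.d.\ entries of $\supersamflat$. Applying a uniformly random permutation of the flattened indices leaves the distribution of $\supersamflat$ invariant; together with the uniformity of $U$, this shows that the joint distribution of the training set $S$ and its column-$1$ ghost point $Z_{1-U_1, 1}$ coincides with that of $(\supersamflat_T, \supersamflat_g)$ for $T$ a uniformly random size-$n$ subset of $[2n]$ and $g$ uniform on $T^c$---equivalently $(\supersamflat_{J-\{j\}}, \supersamflat_j)$ for $J \dist \unif{[2n]_{n+1}}$ and $j$ uniform in $J$. Concavity of $\binaryentr$ combined with Jensen's inequality then gives $\EE[\binaryentr{p_1^*(Z)}] \le \EE[\binaryentr{\kappa(\supersamflat)}]$ and $\EE[p_1^*(Z)] = \EE[\kappa(\supersamflat)]$, while $\EE[\binaryentr{P_e(S; Z_{1-U_1,1})}]$ equals the symmetric average $\EE[\EE_J \frac{1}{n+1} \sum_{j \in J}\binaryentr{P_e(\supersamflat_{J-\{j\}}; \supersamflat_j)}]$ directly by exchangeability (no Jensen needed). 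Column symmetry supplies a factor of $n$ on both the main and subtracted terms, and assembling everything yields the stated inequality.

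The principal obstacle is establishing the total-correlation inequality $T(L \mid U, Z) \le T(L \mid Z)$: this is the step that lets the $-\binaryentr{P_e}$ correction survive in the final bound and runs counter to standard sub-additivity of entropy. Resolving it will require carefully combining the deterministic bijection $L \leftrightarrow (E, U)$, consistency, and the independence of the algorithm's internal randomness from $(Z, U)$; without it, one recovers only the weaker bound $\eSZB \le n\,\EE[\binaryentr{\kappa(\supersamflat)} + \kappa(\supersamflat) \log 2]$.
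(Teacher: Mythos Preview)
Your approach closely mirrors the paper's. Both reduce to a per-column bound $H(L_1\mid\cdot)-H(L_1\mid U,\cdot)$ and then symmetrize using exchangeability of the $2n$ supersample entries. The paper makes the symmetrization explicit by introducing a uniformly random bijection $\pi:[2n]\to\{0,1\}\times[n]$ as part of the probability space, proving $\eSZB=I(L;U\mid\supersamflat,\pi)$, and then dropping $\pi$ from the first entropy via $H(L_1\mid\supersamflat,\pi)\le H(L_1\mid\supersamflat)$; after averaging over $\pi$ and $U$ one has $\cPr{\supersamflat}{L_1=[1,0]^\top}=\cPr{\supersamflat}{L_1=[0,1]^\top}=\kappa(\supersamflat)/2$ \emph{exactly}, so $H(L_1\mid\supersamflat)=\binaryentr{\kappa(\supersamflat)}+\kappa(\supersamflat)\log 2$ with no further inequality. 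Your Jensen-via-exchangeability route lands in the same place; the paper's device of carrying $\pi$ as an explicit random variable just makes the coarser $\sigma$-algebra $\sigma(\supersamflat)\subset\sigma(\supersamflat,\pi)$ visible and avoids the need to invoke Jensen.

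On your ``principal obstacle'': the paper does not prove the total-correlation inequality $T(L\mid U,Z)\le T(L\mid Z)$. It instead asserts the Markov chain $L_i-(U,\supersamflat,\pi)-L_{-i}$, i.e.\ that the columns of $L$ are conditionally \emph{independent} given $(U,\supersamflat,\pi)$, which is strictly stronger. Your heuristic---that fixing $(U,Z)$ ``removes a common source of dependence''---does not justify this in general: once $(U,Z)$ is fixed the training set $S$ is fixed and all the $L_i$ are functions of the \emph{same} random hypothesis $\Alg(S)$, so shared internal randomness can make them arbitrarily dependent. The Markov chain does hold trivially when $\Alg$ is deterministic (then each $H(L_i\mid U,\supersamflat,\pi)=0$), and more generally whenever the algorithm draws fresh independent randomness for each test-point prediction, as the randomized one-inclusion graph does. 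That structural property of the algorithm's randomness---not a general information-theoretic inequality---is what closes the gap; your proposed route through a total-correlation comparison will not succeed without it.
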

The proof of \cref{thm:one-inclusion-general} is deferred to \cref{app:subsec:pf-auxilarythm}.
\subsection{Proof of \cref{thm:singelton}}
\label{app:subsec-pointfun}
First of all, note that given \emph{distinct} points $(x_1,\dots,x_{n+1}) \in \Reals^{n+1}$, the vertex set of the one-inclusion graph is given by
\*[
\{(h(x_1),\dots,h(x_{n+1})) \vert h \in \HS\} = \{(a_1,\dots,a_{n+1}) \vert a_i \in \{0,1\} \ \text{for all} \  i \in [n+1], \sum_{i \in [n+1]}a_i\leq 1\}.
\] 
As an example \cref{fig:one-inclusion-point} illustrates the structure of the one-inclusion graph for a sequence of \emph{distinct} points.
\begin{figure}[t]
\centering
\includegraphics[width=0.55\textwidth]{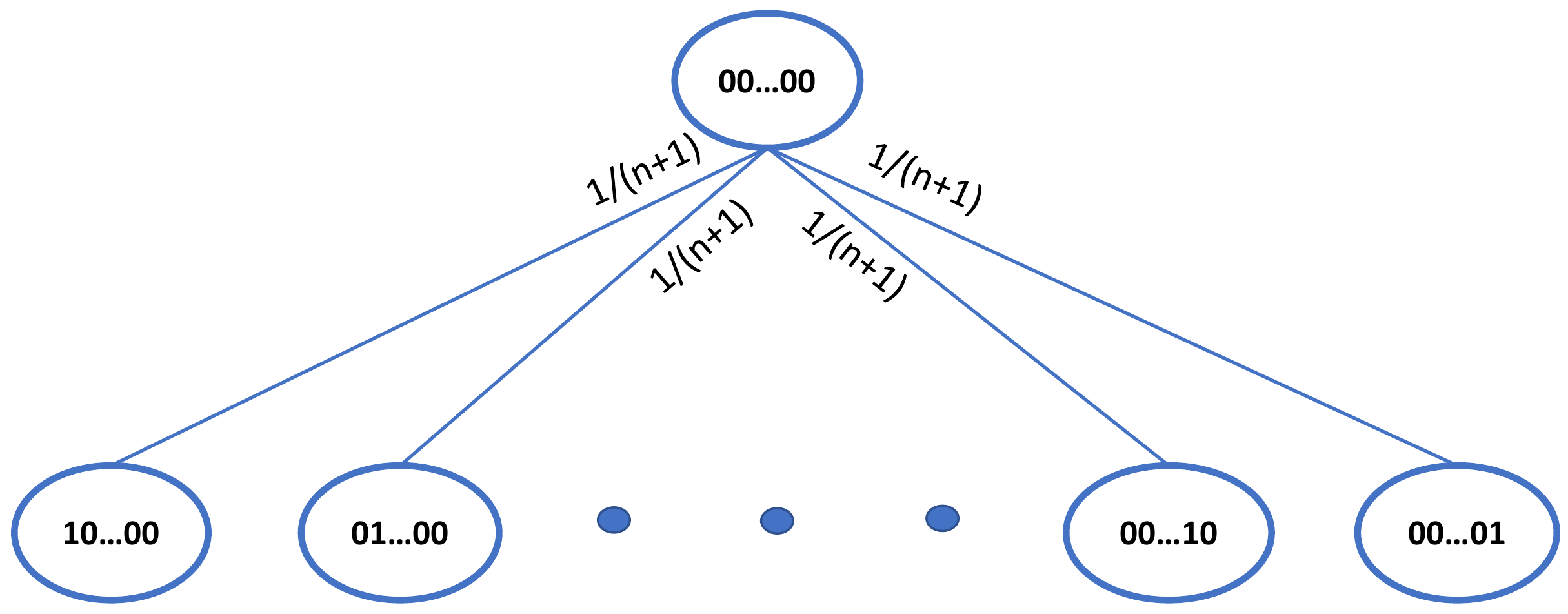}
\caption{One-inclusion graph of point functions for a set of distinct points.}
\label{fig:one-inclusion-point}
\end{figure}
The prediction rule is as follows. Given a training set $((x_1,y_1),\dots,(x_n,y_n))$ and test example $x$, we have two cases. If  $y_i=0$ for all $i \in [n]$, then the label of $x$ is predicted to be $0$ with probability $\frac{n}{n+1}$, and it is predicted to be one with probability $\frac{1}{n+1}$. The second case is that there exists a point with label $1$. Denote its index by $i^\star \in [n]$. In this case, the target function is known to be $\indic{x=x_{i^\star}}$.

To upper bound eCMI for this prediction rule, we consider the upper bound provided in \cref{thm:one-inclusion-general}. For a fixed supersample $\supersamflat=((x_1,y_1),\dots,(x_{2n},y_{2n}))$, we will consider two cases:\\
\underline{Case I: For all $i \in [2n]$ we have $y_i=0$.} Let $J \in [2n]_{n+1}$ and
let $\mathrm{deg}(\supersamflat_J)$ denote the degree of the vertex $(0,0,\dots,0,0)$ in the one-inclusion graph $\supersamflat_J$. Note that if there is no repetition in $\supersamflat_J$ then, the degree is $n+1$. If repetition exists, it is less than $n+1$. Then, according to the prediction rule we have
\begin{equation}
\label{eq:qz-zero}
\begin{aligned}
\kappa(\supersamflat) &= \EE_{J \dist \unif{[2n]_{n+1}}} \frac{1}{n+1} \sum_{j\in J} P_e(\supersamflat_{J-\{j\}},\supersamflat_j)\\
&= \frac{1}{n+1} \EE_{J \dist \unif{[2n]_{n+1}}} \frac{\mathrm{deg}(\supersamflat_J)}{n+1}.
\end{aligned} 
\end{equation}
Also, 
\begin{equation}
\label{eq:entr-zero}
\EE_{J \dist \unif{[2n]_{n+1}}} \frac{1}{n+1} \sum_{j \in J} \binaryentr{P_e(\supersamflat_{J-\{j\}},\supersamflat_j)} = \binaryentr{\frac{1}{n+1}} \EE_{J \dist \unif{[2n]_{n+1}}} \frac{\mathrm{deg}(\supersamflat_J)}{n+1} .
\end{equation}
let $\alpha = \EE_{J \dist \unif{[2n]_{n+1}}} \frac{\mathrm{deg}(\supersamflat_J)}{n+1}$. If $\alpha=0$, then $\eSZB = 0=O(1)$. Therefore, assume $\alpha > 0$, then we can use \cref{eq:ecmi-oneinclusion}, \cref{eq:qz-zero}, and \cref{eq:entr-zero} to obtain
\begin{align}
\eSZB &\leq  n \EE \big[ \binaryentr{\frac{\alpha}{n+1}} + \frac{\alpha \log2 }{n+1} - \alpha \binaryentr{\frac{1}{n+1}} \big] \label{eq:first-line-zero}\\
&\leq  n \EE\big[-\frac{\alpha}{n+1} \log \frac{\alpha}{n+1} - (1-\frac{\alpha}{n+1}) \log(1-\frac{\alpha}{n+1}) + \frac{\alpha \log2 }{n+1} + \nonumber \\
& \quad \quad  \frac{\alpha}{n+1}\log(\frac{1}{n+1}) + \alpha (1-\frac{1}{n+1})\log(1-\frac{1}{n+1})\big]\\
&\leq n \EE\big[ -\frac{\alpha}{n+1} \log(\alpha) + \frac{2\alpha  \log 2 }{n+1} \big] \label{eq:lots-simple}\\
&\leq n \big[\frac{2 \log 2}{n+1} + \frac{\exp(-1)}{n+1} \big] \label{eq:maxxlogx}\\
&=O(1) \label{eq:last-line-zero}.
\end{align}
\cref{eq:lots-simple} is obtained by removing the negative terms and the inequality $-(1-x)\log(1-x)\leq x \log 2$ for $x\in [0,1]$. \cref{eq:maxxlogx} follows from $\max_{x\in[0,1]}-x\log(x)=\exp(-1)$ and $\alpha\leq 1$.\\
\underline{Case II: There exists $i \in [2n]$ such that $y_i=1$.} Let $m = |\{i \in [2n] \vert y_i=1\}|$, $\beta = \frac{ \binom{2n-m}{n+1} }{ \binom{2n}{n+1} }$, 
and 
$\alpha = \EE_{J}[\frac{\mathrm{deg}(\supersam_J)}{n+1}\vert \text{no point with label 1 in $\supersamflat_J$}]$. 
Then, we have
\[
\label{eq:qz-one}
\kappa(\supersamflat) &= \EE_{J \dist \unif{[2n]_{n+1}}} \frac{1}{n+1} \sum_{j\in J} P_e(\supersamflat_{J-\{j\}},\supersamflat_j) \nonumber \\
& = \beta \EE_{J}[ \frac{1}{n+1} \sum_{j\in J} P_e(\supersamflat_{J-\{j\}},\supersamflat_j)\vert \text{no point with label 1 in $\supersamflat_J$}]  \nonumber \\
& + (1-\beta) \EE_{J}[ \frac{1}{n+1} \sum_{j\in J} P_e(\supersamflat_{J-\{j\}},\supersamflat_j)\vert \text{there exists points with label 1 in $\supersamflat_J$}] \nonumber \\ 
&\leq  \alpha \beta \frac{1}{n+1} + (1-\beta) \frac{1}{(n+1)^2} ,
\] 
where in the last step we have used the fact that when in $\supersamflat_J$ there is a point with label $1$, then the leave-one-out error is at most $\frac{1}{(n+1)^2}$ and  $\beta = \Pr(\text{no point with label 1 in $\supersamflat_J$})=\frac{ \binom{2n-m}{n+1} }{ \binom{2n}{n+1} }$.

The binary entropy function is concave and increasing in the interval $[0,1/2]$. The concavity of the binary entropy function implies that $\binaryentr{v_2}\leq \binaryentr{v_1}'(v_2-v_1)+\binaryentr{v_1}$ for all $v_1$ and $v_2$ in $(0,1)$. Then,
\[
\binaryentr{\kappa(\supersamflat)} & \leq  \binaryentr{\alpha \beta \frac{1}{n+1} + (1-\beta) \frac{1}{(n+1)^2} } \nonumber \\
&\leq \binaryentr{ \frac{\alpha \beta}{n+1}} + \frac{1-\beta}{(n+1)^2} \log(\frac{n+1-\alpha \beta}{\alpha \beta}) \label{eq:term1-qz-one}.
\]
Then, by considering the summation only over $\supersamflat_J$ without any point with label $1$, we obtain
\[
\label{eq:term2-qz-one}
\EE_{J \dist \unif{[2n]_{n+1}}} \frac{1}{n+1} \sum_{j \in J} \binaryentr{P_e(\supersamflat_{J-\{j\}},\supersamflat_j)} \geq \alpha \beta \binaryentr{\frac{1}{n+1}}.
\]

Then, we can substitute \cref{eq:term1-qz-one} and \cref{eq:term2-qz-one} into \cref{eq:ecmi-oneinclusion} to obtain $\eSZB = O(1)$ following the same line of reasoning as in \cref{eq:first-line-zero}--\cref{eq:last-line-zero}.

\subsection{Proof of \cref{thm:one-inclusion-general} }
\label{app:subsec:pf-auxilarythm}
We begin with introducing some notations. Let $\Gamma_{2n}$ be the set of all bijective mappings from $[2n]$ to $\{0,1\} \times [n]$. Let $\sigma \in \Gamma_{2n}$ and $x=(x_1,\dots,x_{2n})\in \inspace^{2n}$ be a vector of length $2n$. Then, $x^\sigma$ denotes a matrix of size $2 \times n$ where $x^\sigma_{i,j}=x_{\sigma^{-1}(i,j)}$ for $i \in \{0,1\}$ and $j \in [n]$. Also, for every $m, k \in \Naturals$ and $1\leq k \leq m$, let $[m]_k$ denote the set of all subsets of size $k$ of $[m]$.

For every $n \in \Naturals$ let $\pi \dist \unif{\Gamma_{2n}}$,  $\supersamflat=(Z_1,\dots,Z_{2n})\dist \Dist^{\otimes (2n)}$, and  $U = (U_1,\dots,U_n)\dist (\bernoulli(\{0,1\})^{\otimes n}$ where $\pi$, $\supersamflat$ and $U$ are mutually independent. Let $\SS=(\supersamflat^\pi_{U_j,j})_{j=1}^{n}$ and $\Alg(\SS)$ be a learning algorithm.  Let $L\in \{0,1\}^{2\times n}$ be a matrix with entries $L_{i,j}=\loss(\Alg(\SS),\supersamflat^{\pi}_{i,j})$ for $i\in \{0,1\}$ and $j\in \range{n}$. Then, using these random variables, we can define $\minf{L;U\vert \supersamflat,\pi}$.  
\begin{lemma}
$\eSZB = I(L;U|\supersamflat,\pi)$
\end{lemma}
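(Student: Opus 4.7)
The plan is to show that the triple $(L, U, \supersamflat^\pi)$ constructed in the new setup has the same joint distribution as the triple $(L, U, \supersam)$ from the original definition of $\eSZB$, and then observe that conditioning on the extra information in $(\supersamflat, \pi)$ beyond $\supersamflat^\pi$ does not change the conditional mutual information.

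First I would verify that $\supersamflat^\pi$ has the same distribution as the $2\times n$ array $\supersam$ of the original definition. Since $\supersamflat = (Z_1, \ldots, Z_{2n})$ has i.i.d.\ $\Dist$-distributed entries and $\pi$ is a bijection from $[2n]$ to $\{0,1\}\times [n]$ independent of $\supersamflat$, the entries $(\supersamflat^\pi)_{i,j} = Z_{\pi^{-1}(i,j)}$ are also i.i.d.\ $\Dist$. Moreover, since $U$ is independent of $(\supersamflat, \pi)$, the pair $(\supersamflat^\pi, U)$ is equidistributed with $(\supersam, U)$ from the original CMI definition. The training set $\SS = (\supersamflat^\pi_{U_j, j})_{j=1}^n$ and the loss matrix $L_{i,j} = \loss(\Alg(\SS), \supersamflat^\pi_{i,j})$ are then constructed in exactly the same functional way as in \cref{def:ecmi}. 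Consequently, $(L, U, \supersamflat^\pi)$ is equidistributed with the triple $(L, U, \supersam)$ appearing in the definition of $\eSZB$, yielding $I(L; U \mid \supersamflat^\pi) = \eSZB$.

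Next, I would show $I(L; U \mid \supersamflat, \pi) = I(L; U \mid \supersamflat^\pi)$. The key observation is that $L$ is a measurable function of $\supersamflat^\pi$, $U$, and the internal algorithmic randomness $W$ (independent of everything else), and $U$ itself is independent of $(\supersamflat, \pi)$. Since the map $(\supersamflat, \pi) \mapsto \supersamflat^\pi$ is deterministic, conditional on $\supersamflat^\pi$ the pair $(\supersamflat, \pi)$ is independent of $(U, W)$, and hence independent of $(U, L)$. This Markov structure $(\supersamflat, \pi) - \supersamflat^\pi - (U, L)$ implies that the conditional distributions of $(U, L)$ given $(\supersamflat, \pi)$ and given $\supersamflat^\pi$ coincide almost surely, so the two conditional mutual informations agree. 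Combining both equalities establishes the lemma.

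The only subtle point is the verification of the conditional independence $(\supersamflat, \pi) \perp (U, L) \mid \supersamflat^\pi$; everything else is a clean distributional identification. I do not expect any genuine obstacle: the lemma is essentially the statement that passing from the ``flat sample plus random reshuffle'' representation to the ``$2\times n$ array'' representation is information-preserving for the quantity of interest.
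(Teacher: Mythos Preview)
Your proposal is correct and follows essentially the same approach as the paper: both arguments reduce $I(L;U\mid \supersamflat,\pi)$ to $I(L;U\mid \supersamflat^\pi)$ via the conditional independence $(\supersamflat,\pi)\perp(L,U)\mid \supersamflat^\pi$, and then identify $I(L;U\mid \supersamflat^\pi)$ with $\eSZB$ using that $\supersamflat^\pi$ is an i.i.d.\ $2\times n$ array equidistributed with $\supersam$. Your write-up is simply more explicit about the conditional-independence step than the paper's terse version.
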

\begin{proof}
$\supersamflat^\pi$ is a $\sigma(\supersamflat,\pi)$-measurable random variable therefore we have $I(L;U \vert \supersamflat,\pi)=I(L;U\vert\supersamflat^\pi,\supersamflat,\pi)$. Then, conditioned on  $\supersamflat^\pi$, $L$ and $U$ are independent from $\supersamflat$ and $\pi$. Finally, note that the samples in $\supersamflat$ are \iid, therefore we have $I(L;U\vert\supersamflat^\pi)=\eSZB$.
\end{proof}

\begin{lemma}
\label{lem:fun-sym}
Let $\pi$, $U$, and $\supersamflat$ be as defined in the beginning of this section. Let $n \in \Naturals$ and $f: \mathcal{Z}^{n}\times \mathcal{Z} \to \Reals $ be a real-valued function where $f$ is permutation-invariant with respect to its first input. Then
\[
\label{eq:permute_fun}
\cEE{\supersamflat}{\big[f((\supersamflat^{\pi}_{U_1,1},\dots,\supersamflat^{\pi}_{U_n,n});\supersamflat^{\pi}_{\bar{U_1},1})\big]} = \EE_{J \dist \unif{[2n]_{n+1}}} \frac{1}{n+1} \sum_{j\in J} f(\supersamflat_{J-\{j\}};\supersamflat_j),
\]
where $\supersamflat_J = (\supersamflat_{J_1},\dots,\supersamflat_{J_{n+1}}) $ and $\bar{U_i}=1-U_i$.
\end{lemma}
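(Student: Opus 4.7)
The plan is to reduce both sides of \eqref{eq:permute_fun} to expectations of $f$ evaluated on samples indexed by a random $(n{+}1)$-subset of $[2n]$ together with a distinguished ``test'' element, and then to verify that the two distributions coincide. Define $T_j := \pi^{-1}(U_j, j)$ for $j \in [n]$ and $T'_1 := \pi^{-1}(\bar{U_1}, 1)$; these are all distinct elements of $[2n]$, so $J := \{T_1, \ldots, T_n, T'_1\}$ is a random $(n{+}1)$-subset. Because $f$ is permutation-invariant in its first argument and $(T_1,\ldots,T_n)$ is merely an ordering of $J \setminus \{T'_1\}$,
\[
f\bigl((\supersamflat^{\pi}_{U_1,1},\dots,\supersamflat^{\pi}_{U_n,n});\, \supersamflat^{\pi}_{\bar{U_1}, 1}\bigr) = f(\supersamflat_{J \setminus \{T'_1\}};\, \supersamflat_{T'_1}),
\]
so the conditional expectation on the LHS depends on $(\pi, U)$ only through the pair $(J, T'_1)$.

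The key step is then to show that, under uniform $\pi$ and uniform $U$, the pair $(J, T'_1)$ is uniformly distributed on the support $\{(J_0, j^*) : J_0 \in [2n]_{n+1},\, j^* \in J_0\}$. The cleanest route I would take is a symmetry argument: precomposing $\pi$ with any permutation $\sigma \in S_{2n}$ of $[2n]$ yields another uniform random bijection in $\Gamma_{2n}$ and transforms $(J, T'_1)$ into $(\sigma(J), \sigma(T'_1))$, so the joint law is invariant under the diagonal $S_{2n}$-action. Since $S_{2n}$ acts transitively on this support, the law must be uniform. A parallel counting verification (the number of $(\pi, U)$ yielding a prescribed $(J_0, j^*)$ is $n!\,(n-1)!\,2^n$, independent of the target, out of $(2n)!\,2^n$ total) assigns probability $1/(\binom{2n}{n+1}(n+1))$ to each atom and serves as a sanity check.

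Combining these two observations,
\[
\cEE{\supersamflat}\bigl[f(\supersamflat_{J\setminus\{T'_1\}};\, \supersamflat_{T'_1})\bigr] = \frac{1}{\binom{2n}{n+1}} \sum_{J_0 \in [2n]_{n+1}} \frac{1}{n+1}\sum_{j \in J_0} f(\supersamflat_{J_0 \setminus \{j\}};\, \supersamflat_j),
\]
which is exactly the RHS of \eqref{eq:permute_fun}. The only nontrivial step is establishing the uniformity of $(J, T'_1)$; everything else is routine rewriting that exploits the permutation-invariance of $f$ and the definition of $\supersamflat^\pi$. I expect the mild bookkeeping challenge to be making precise the claim that ``$T'_1$ is uniformly distributed on $J$ given $J$''—but the $S_{2n}$-symmetry argument sidesteps this entirely by establishing uniform joint distribution in one stroke.
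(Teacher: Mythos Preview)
Your proposal is correct and follows essentially the same route as the paper: both arguments reduce the LHS to a sum over pairs (an $(n{+}1)$-subset of $[2n]$, a distinguished element of that subset) and show each pair occurs with probability $n!\,(n-1)!/(2n)! = 1/\bigl(\binom{2n}{n+1}(n+1)\bigr)$. The paper establishes this uniformity by a direct count, which is exactly your ``sanity check''; your added $S_{2n}$-symmetry argument is a nice touch but not a genuinely different method.
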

\begin{proof}
write
{\small
\begin{align}
    &\cEE{\supersamflat}{\big[f(\{ \supersamflat^{\pi}_{U_1,1},\dots,\supersamflat^{\pi}_{U_n,n}\};\supersamflat^{\pi}_{\bar{U_1},1})\big]} = \nonumber\\
& \sum_{ (i_1,\dots,i_{n+1})\in [2n]_{n+1}} f((\supersamflat_{i_1},\dots,\supersamflat_{i_{n}});\supersamflat_{i_{n+1}}) \cPr{\supersamflat}{(\{\supersamflat^{\pi}_{U_1,1},\dots,\supersamflat^{\pi}_{U_n,n}\},\supersamflat^{\pi}_{\bar{U_1},1})= (\{\supersamflat_{i_1},\dots,\supersamflat_{i_{n}}\}, \supersamflat_{i_{n+1}})}\nonumber.
\end{align}
}
Then, 
\*[
\cPr{\supersamflat}{(\{\supersamflat^{\pi}_{U_1,1},\dots,\supersamflat^{\pi}_{U_n,n}\},\supersamflat^{\pi}_{\bar{U_1},1})= (\{\supersamflat_{i_1},\dots,\supersamflat_{i_{n}}\}, \supersamflat_{i_{n+1}})}&=  \frac{n! (n-1)!}{(2n)!}\\
&  = \frac{1}{ \binom{2n}{n+1} (n+1)}
\]
where the last line follows since $U$ and $\pi$ are independent. It is easy to verify that is exactly equal to the RHS of \cref{eq:permute_fun}.
\end{proof}
Let $L = (L_1,\dots,L_n)$ where $L_i \in \{0,1\}^2$ denotes the vector at Column $i$ of the loss vector $L$. By the definition and the chain rule for mutual information  we have
\[
\minf{L;U\vert \supersamflat,\pi} &= \sum_{i=1}^{n} \minf{L_i;U\vert L_1,\dots,L_{i-1}, \supersamflat,\pi} \\
& = \sum_{i=1}^{n} \entr{L_i \vert  L_1,\dots,L_{i-1},\supersamflat,\pi} - \entr{L_i \vert L_1,\dots,L_{i-1},\supersamflat,\pi,U} \\
& \leq   \sum_{i=1}^{n} \entr{L_i \vert \supersamflat,\pi} - \entr{L_i \vert \supersamflat,\pi,U} \label{eq:upper-bounding-entr}\\
&= n(\entr{L_1 \vert \supersamflat,\pi} - \entr{L_1 \vert \supersamflat,\pi,U}). \label{eq:chainrule-ecmi}
\]
Here, \cref{eq:upper-bounding-entr} due to the Markov chain $L_i-(U,\supersamflat,\pi)- L_{-i}$ and removing conditions increases the entropy. Then, the last step follows since the algorithm is permutation-invariant. Note that if $\TheAlg$ is a deterministic algorithm the second term on the RHS of \cref{eq:chainrule-ecmi} is zero.

Next we provide an upper bound for $\entr{L_1 \vert \supersamflat,\pi}$. Note that $L_1$ can take values in $\{[1,0]^\intercal,[0,0]^\intercal,[0,1]^\intercal\}$ as it is assumed that $\Alg$ is consistent and, by construction, at each column of $\supersamflat^{\pi}$ one of the points is selected for the training set. 
Due to the symmetry imposed by $\pi$ and $U$ we have with probability one
\[
\label{eq:sym-error}
 \cPr{\supersamflat}{L_{1} = [1,0]^\intercal}= \cPr{\supersamflat}{L_{1} = [0,1]^\intercal}.
\]
Then, note that there exists a function $\kappa$,  depending only on $\Alg$,  such that 
\[
\label{eq:kappa-error}
\kappa(\supersamflat) = \cPr{\supersamflat}{L_{\bar{U_1},1} = 1}.
\]
We claim that the common value in \cref{eq:sym-error} is given by $\frac{\kappa(\supersamflat)}{2}  $. This claim can be easily proved by taking the expectation with respect to $U_1$ in \cref{eq:sym-error} and \cref{eq:kappa-error}.

Then, we will show that using \cref{lem:fun-sym}, $\kappa(\supersamflat)$ is given by the average leave-one-out error of $\TheAlg$ over $\supersamflat$. Given a training set $S=((x_1,y_1),\dots,(x_{n},y_{n})) \in \dataspace^n$ and a test point $z=(x,y)\in \dataspace$, let $P_e(S;z)$ denote the probability that $\Alg$ makes a mistake in predicting the label $x$. Using this notation, we have
\[
\label{eq:kappa-simplified}
\kappa(\supersamflat) &=  \cEE{\supersamflat}{\big[P_e((\supersamflat^{\pi}_{U_1,1},\dots,\supersamflat^{\pi}_{U_n,n});\supersamflat^{\pi}_{\bar{U_1},1})\big]} \\
                       &=\EE_{J \dist \unif{[2n]_{n+1}}} \frac{1}{n+1} \sum_{j\in J} P_e(\supersamflat_{J - \{j\}};\supersamflat_j), 
\]
where in the last step we have used \cref{lem:fun-sym}.
Then, by the fact that conditioning reduces the entropy we have $\entr{L_1\vert \pi,\supersamflat}\leq \entr{L_1\vert\supersamflat}$. 
As shown above, 
\begin{equation}
\label{eq:sym-prob-entropy}
\begin{aligned}
 \cPr{\supersamflat}{L_{1} = [1,0]^\intercal} &=  \cPr{\supersamflat}{L_{1} = [0,1]^\intercal}\\
&= \frac{1}{2} \kappa(\supersamflat) \\
&=\frac{1}{2}\EE_{J \dist \unif{[2n]_{n+1}}} \frac{1}{n+1} \sum_{j\in J}  P_e(\supersamflat_{J-\{j\}};\supersamflat_j).
\end{aligned}
\end{equation}
Then, by the fact that the $L_1$ can take values in $\{[1,0]^\intercal,[0,0]^\intercal,[0,1]^\intercal\}$ and \cref{eq:sym-prob-entropy} we obtain
\[
\entr{L_1\vert\supersamflat} &= \EE [ -(1-\kappa(\supersamflat)) \log(1- \kappa(\supersamflat))  -  {\kappa(\supersamflat)}/{2}\log ( {\kappa(\supersamflat)}/{2}) -  {\kappa(\supersamflat)}/{2}\log ({\kappa(\supersamflat)}/{2})] \nonumber\\
&=\EE [\binaryentr{ \kappa(\supersamflat)} + \kappa(\supersamflat) \log(2)] \label{eq:entr-firstterm},
\]
where $\binaryentr{\cdot}$ denotes the binary entropy function.

Finally, we find a closed form expression for $\entr{L_1 \vert \supersamflat,\pi,U}$. Since it is assumed that $\Alg$ is consistent we have $L_{U_i,i}=0$ \as. Therefore
$\entr{L_1 \vert \supersamflat,\pi,U}=\entr{L_{\bar{U_1},1} \vert \supersamflat,\pi,U}$. Then, we can write 
\[
\entr{L_{\bar{U_1},1} \vert \supersamflat,\pi,U} &= \EE [\centr{\supersamflat,\pi,U}{L_{\bar{U_1},1}}] \nonumber\\
&=\EE \big[ \cEE{\supersamflat}{[\centr{\supersamflat,\pi,U}{L_{\bar{U_1},1}}]} \big] \nonumber\\
&= \EE \big[  \cEE{\supersamflat}{\binaryentr{[P_e((\supersamflat^{\pi}_{U_1,1},\dots,\supersamflat^{\pi}_{U_n,n});\supersamflat^{\pi}_{\bar{U_1},1})}]} \big] \nonumber\\
&=\EE \big[ \EE_{J \dist \unif{[2n]_{n+1}}} \frac{1}{n+1} \sum_{j \in J} \binaryentr{P_e(\supersamflat_{J-\{j\}};\supersamflat_j)} \big], \label{eq:entr-secondterm}
\]
where in the last line we have used \cref{lem:fun-sym}. Finally by combining \cref{eq:entr-firstterm}, \cref{eq:entr-secondterm}, and \cref{eq:chainrule-ecmi} we obtain the stated result in \cref{eq:ecmi-oneinclusion}.

\section{Description of the One-Inclusion Graph Prediction Algorithm}
\label{app:description-oneinclusion}
In this part we provide a short description of the one-inclusion transductive learning algorithm of \citet{haussler1994predicting}.
Let $\bar{X}=(x_1,\dots,x_n) \in \inspace^n$ and let $\HS$ be a concept class with VC dimension $\vcd$. Let $\HS_{\vert \bar{X}}$ be the equivalence class induced by $\HS$ on the instances given in $\bar{X}$. Similarly for a classifier $h \in \HS$, we can define $h_{\vert \bar{X}}$ as the restriction of $h$ to the instances in $\bar{X}$.  For every $h \in \HS_{\vert \bar{X}}$, let $v_h=(h(x_1),\dots,h(x_n))\in \{0,1\}^n$.  \citet{haussler1994predicting} defined the one-inclusion graph of $\bar{X}$ denoted by $\oneinclusiongraph(\bar{X})=(V,E)$ as follows. $\oneinclusiongraph(\bar{X})$ has the vertex set $ V= \{v_h: h\in \HS_{\vert \bar{X}}\}$, and $(v_h,v_{h'}) \in E$ if and only if the hamming distance of $v_h$ and $v_{h'}$ is one. For an example of $\oneinclusiongraph(\bar{X})$ for the concept class of intervals in one dimension, see Fig.~1 of \citep{haussler1994predicting}. Consider \defn{probability assignment mapping} $\probassignment: E \times V \to [0,1]$ such that for each edge $e$ incident to $v_{h}$ and $v_{h'}$ the following two conditions holds. 
\begin{enumerate}
    \item for all $h'' \in \HS_{\vert \bar{X}}$ with $h'' \neq h$ and $h'' \neq h'$, we have $\probassignment(e,v_{h''})=0$.
    \item  $\probassignment(e,v_h)\geq 0$, $f(e,v_{h'})\geq 0$, and $\probassignment(e,v_h)+\probassignment(e,v_{h'})=1$.
\end{enumerate}
For every $i \in [n]$ and $h \in \HS_{  \vert \bar{X}}$, let
$c_{i,h} \subset \HS_{  \vert \bar{X}}$  be the set of all the hypotheses in  $\HS_{\vert \bar{X}}$ whose restriction to $\bar{X} \setminus \{x_i\}$ equals to $h_{ \vert \bar{X}\setminus\{x_i\}}$. Let $\targetfun \in \HS$, consider a realizable $S=((x_1,y_1),\dots,(x_n,y_n))$ where $y_i=\targetfun(x_i)$ for $i \in \range{n}$. Assume  $S_{n-1}=((x_1,y_1),\dots,(x_{n-1},y_{n-1}))$ is given to a learner and the learner aims to predict the label of $x_n$.  It is immediate to see that the set of hypotheses consistent with $S_{n-1}$ is $c_{n,h^\star}$. Clearly, $|c_{n,h^\star}|\in \{1,2\}$ and if $|c_{n,h^\star}|=1$, we know that
the target hypothesis is $h^\star$. But, what should the learner do when $|c_{n,h^\star}|=2$?

Using $\oneinclusiongraph(\bar{X})$ we can think of the case $|c_{n,h^\star}|=2$ as there is a vertex $v_{h'}$  adjacent to $v_{\targetfun}$, and $v_{h'}$ and $v_{\targetfun}$ differ in $n$-th position. Assume $|c_{n,h^\star}|=2$ and $e_{n,h^\star} = \{\targetfun,h'\}$. Using the probability assignment mapping $\probassignment$ of $\oneinclusiongraph(\bar{X})$, the strategy proposed by \citet{haussler1994predicting} predicts the label $x_n$ to be $h'(x_n)$ with probability $\probassignment(e,v_{\targetfun})$, and to be $\targetfun(x_n)$ with probability $\probassignment(e,v_{h'})$.  By identifying a deep  combinatorial property of $\oneinclusiongraph(\bar{X})$,  Thm.~2.3 in \citep{haussler1994predicting} shows that there exists a probability assignment mapping $\probassignment$ with the mentioned properties such that $\sum_{e \in E} \probassignment(e,v_{h})\leq \vcd$ for all $h \in \oneinclusiongraph(\bar{X})$, and finding such a mapping is computationally easy. Moreover, 
\citet{haussler1994predicting} show there exists \emph{deterministic} probability assignment for every one-inclusion graph such that $\probassignment(e,v_{h})\in \{0,1\}$ for all $e \in E$ and $v_h \in V$, and $\sum_{e \in E} \probassignment(e,v_{h})\leq \vcd$ for all $h \in \oneinclusiongraph(\bar{X})$.

\end{document}